\newtheorem{theorem}{Theorem}
\newtheorem{lemma}{Lemma}
\newtheorem{assumption}{Assumption}
\newtheorem{proposition}{Proposition}
\newtheorem{remark}{Remark}
\newcommand{\Real}{\mathbb{R}}
\newenvironment{procedure}[1][]{\refstepcounter{example}\par\medskip
   \noindent \textbf{Procedure~\theexample #1.} \begin{itshape}}{\medskip\end{itshape}}
\newtheorem{definition}[theorem]{Definition}
\newtheorem{example}{\textsc{Example}}
\newcommand{\R}{\mathbb{R}}
\newcommand{\E}{\mathbb{E}}
\renewcommand{\P}{\mathbb{P}}
\newcommand{\YY}{{Y}}
\newcommand{\ZZ}{{Z}}
\newcommand{\XX}{\mathbf{X}}
\newcommand{\calN}{\mathcal{N}}
\newcommand{\Pn}{\mathbb{P}_n}
\newcommand{\bX}{\mathbf{X}}
\newcommand{\cX}{\mathcal{X}}
\newcommand{\cF}{\mathcal{F}}
\newcommand{\cI}{\mathcal{I}}
\newcommand{\cM}{\mathcal{M}}
\newcommand{\cZ}{\mathcal{Z}}
\newcommand{\cR}{\mathcal{R}}
\newcommand{\glob}{\textnormal{glob}}
\newcommand{\het}{\textnormal{het}}
\newcommand{\spill}{\textnormal{sp}}
\newcommand{\im}{\textnormal{imb}}
\newcommand{\mlfrt}{\texttt{ML-FRT}}
\newcommand{\lmfrt}{\texttt{LM-FRT}}
\newcommand{\res}{\mathrm{RES}}
\newcommand{\ml}{\mathrm{ML}}
\newcommand{\CV}{\mathrm{CV}_{n,k}}
\newcommand{\var}{\mathrm{var}}
\newcommand{\normal}{\mathcal{N}}
\newcommand{\Ind}{\mathbbm{1}}
\newcommand{\indep}{\perp \!\!\! \perp}
\newcommand\myeq[1]{\stackrel{\mathclap{\normalfont\mbox{\tiny #1}}}{=}}
\renewcommand{\epsilon}{\varepsilon}
\newcommand{\eps}{\epsilon}
\newcommand{\EX}{\mathbb{E}}
\newcommand{\VAR}{\mathbb{V}\mathrm{ar}}
\title{ML-assisted Randomization Tests for Detecting Treatment Effects in A/B Experiments}
\author{Wenxuan Guo\thanks{Booth School of Business, University of Chicago. \url{wxguo@chicagobooth.edu}}, 
JungHo Lee\thanks{Carnegie Mellon University. \url{junghol@andrew.cmu.edu}}, Panos Toulis\thanks{Booth School of Business, University of Chicago. \url{panos.toulis@chicagobooth.edu}. PT acknowledges support from NSF SES-2419009. All authors wish to thank Iavor Bojinov, Michael Hudgens, Dominik Rothenh\"{a}usler and the participants at the Online Causal Inference workshop for valuable feedback and comments.}}
\date{\today}
\begin{document}
\onehalfspacing

\maketitle

\begin{abstract}
Experimentation is widely utilized for causal inference and data-driven decision-making across disciplines. In an A/B experiment, for example, an online business randomizes two different treatments (e.g., website designs) to their customers and then aims to infer which treatment is better. 
In this paper, we construct randomization tests for complex treatment effects, including heterogeneity and interference. A key feature of our approach is the use of flexible machine learning (ML) models, where the test statistic is defined as the difference between the cross-validation errors from two ML models, one including the treatment variable and the other without it. This approach combines the predictive power of modern ML tools with the finite-sample validity of randomization procedures, enabling a robust and efficient way to detect complex treatment effects in experimental settings.  We demonstrate this combined benefit both theoretically and empirically through applied examples. 
\end{abstract}

\newpage
\doublespacing
\section{Introduction}
Experimentation lies at the heart of causal inference and data-driven decision-making. To improve sales, for instance, an online business may randomly display two different website designs, known as ``A/B experiment'', and choose the most effective design in generating revenue. In public policy, a governmental agency may randomly assign students to an after-school program to assess its effectiveness in educational outcomes. 

% causal ml
In recent years, flexible ML methods have been used for causal inference, following the principle that an effective treatment should be predictive of the outcome~\citep{granger1969investigating}. 
These {\em causal ML} methods aim to learn flexible models separately for the outcome and the treatment, and then adjust estimation to avoid bias. In the majority of cases, the target of inference is an average treatment effect under a framework of i.i.d. data. As a result, these methods often lack finite-sample guarantees and may not be suitable in settings where either the target of inference or the experimental design is complex.

In many practical settings, however, the goal is to {\em detect} whether a certain type of treatment effect is present in the experiment. 
Although it might appear simpler than quantifying the treatment effect, the problem of detecting complex treatment effects often presents distinct challenges, especially when the target relates to a complex interaction between the treatment and other variables in the system. For instance, suppose an online transportation platform, such as Uber, performed an A/B test on a new feature that allows users to add dependents in their account. One might ask: Does the new feature interact with the existing driver-customer matching algorithm? Furthermore, 
does incorporating, say, real-time traffic information influence this interaction? 
In such problems, the notion of an average treatment effect becomes ambiguous, especially when the effects are expected to be heterogeneous and involve high-order nonlinear interactions.

In this paper, we propose the combination of flexible ML models with 
{\em randomization inference} to detect complex treatment effects. The underlying idea is to leverage two flexible ML models of the outcome ---one including the treatment variable and one excluding it.
We define the test statistic based on the difference in cross-validation errors between these two models. By randomizing treatments, we can then obtain the null 
distribution of the statistic under certain hypotheses on the treatment effect.
 Any discernible difference between the null distribution and the observed  value of the test statistic is thus evidence of a treatment effect. Crucially, by virtue of randomization, this testing procedure is valid in finite samples, regardless of the choice of ML models, making it a potentially valuable addition to the existing causal ML toolkit. Moreover, our procedures can be powerful thanks to the flexibility of modern ML tools, which we demonstrate in this paper through both theory~(Section \ref{sec:theory}) and a variety of empirical applications~(Section \ref{sec:simu}).

In particular, through a novel power analysis in Section~\ref{sec:res}, we demonstrate that our approach can be more sensitive than methods designed specifically for estimating the average treatment effect, particularly in settings with complex, heterogeneous treatment effects that do not meaningfully contribute to the average treatment effect. From this perspective, the randomization testing framework employed in our work and the causal ML framework can be viewed as complementary and potentially synergistic.

\subsection{Related Work}
% ML-based methods that based on asymptotic theory
In recent years, causal inference has experienced a surge in methods that utilize advanced ML tools. In observational studies, these ML methods are usually employed to estimate average treatment effects or heterogeneous treatment effects; examples include doubly robust estimation~\citep{robins1995semiparametric, bang2005doubly}, the ``X-learner''~\citep{kunzel2019}, double machine learning~\citep{chernozhukov2018double, kennedy2022semiparametric, syrgkanis2019machine}, causal forests \citep{wager2018estimation, oprescu2019orthogonal}, targeted learning~\citep{van2011targeted} and non-parametric Bayes \citep{hill2011bayesian, hahn2020bayesian, green2012modeling, taddy2016nonparametric}. As these methods are primarily designed for non-experimental settings, their validity follows from large-sample properties and rarely have finite-sample guarantees.
% \jungho{Read and summarize Kouske's papers}
% \begin{enumerate}
%     \item https://imai.fas.harvard.edu/research/files/indtreat.pdf
%     \item https://arxiv.org/pdf/2203.14511
%     \item https://arxiv.org/pdf/2404.17019
%     \item https://arxiv.org/pdf/2310.07973
% \end{enumerate}
% randomization inference as a finite-sample alternative

In experimental settings, randomization inference provides a robust alternative for causal inference. This approach dates back to the foundational contributions of RA Fisher~\citep{fisher1935design}, who developed what is now known as {\em Fisher's randomization test} (FRT). Unlike asymptotic methods, FRT procedures are finite-sample valid for arbitrary data distributions, as they leverage the known experimental variation in assigning treatments.
These randomization tests are also straightforward to implement, often as permutation tests, making them ideal analytical tools for online A/B experimentation~\citep{kohavi2013online, kohavi2020trustworthy}. 
Thanks to their robustness, FRT procedures have seen renewed interest across many scientific disciplines, 
including causal inference~\citep{imbens2015causal}, experimental economics~\citep{list2019multiple}, and predictive inference, particularly in the form of conformal prediction~\citep{lei2021conformal, tibshirani2019conformal, vovk2005algorithmic}.

Directly relevant to our work are randomization tests for heterogeneous treatment effects~\citep{ding2016variation} and tests for spillover effects under network interference~\citep{aronow2012general, athey2018exact, basse2019randomization, puelz2021graph, basse2024randomization}. See also~\cite{gerber2012field,green2012modeling, grimmer2017estimating} for various applications of randomization inference in field experiments.
However, existing FRT procedures typically rely on linear model specifications, and the use of modern ML tools in this literature remains largely unexplored.

Our paper also contributes to the emerging literature on the use of ML techniques for the design and analysis of experiments~\citep{list2024using, poyarkov2016boosted, imai2025statistical, li2024neyman}. Specifically, we develop randomization tests for experimental studies that address a wide range of applied causal problems, including global treatment effects (Section~\ref{sec:general}), heterogeneous effects (Section~\ref{sec:het}), and spillover effects (Section~\ref{sec:spillover}). Within the same framework, we introduce a new procedure in Section~\ref{sec:imbalance} to detect whether the experiment is imbalanced in certain key covariates and propose ways to address this. The central theme across all these tests is the integration of the finite-sample validity of randomization tests with the predictive power of modern ML tools.

\section{Setup and Main Method}\label{sec:setup}
% FLAG: use bold or not?
We have a population of $n$ experimental units indexed by $i$. %, where $n$ is potentially large.
The units are assigned to binary treatments $\ZZ = (Z_1, \dots, Z_n) \in \{0, 1\}^n$ according to a known experimental 
design $\Pn : \{0,1\}^n \to [0,1)$.
We adopt the potential outcomes framework of causal inference~\citep{neyman1923applications, rubin1974estimating}, such that for any population treatment $z\in\{0,1\}^n$, $Y_i(z)\in\mathbb{R}$ denotes the potential outcome of unit $i$ under $z$.
Note that this definition allows the outcome of a unit 
to depend on the treatment of other units, known as interference. 
In settings with no interference, we will use the more common notation $Y_i(1), Y_i(0)$ to denote the treated and control potential outcomes, respectively.

The realized (observed) outcome for unit $i$ is denoted as $Y_i = Y_i(Z)$, and $Y = (Y_1, \ldots, Y_n)$ is the realized population vector.
We further observe covariates $X_i \in \cX \subseteq \R^p$ and use $\XX$ to denote the entire $n \times p$ matrix of covariates. We also use $\ZZ_{-i}$ to denote the sub-vector of $Z$ without the $i$th element corresponding to $Z_i$.

Throughout the paper, we work under the following potential outcome model:
\begin{equation}\label{eq:our_model}
    % Y_i = \mu + b(X_i) + Z_i h(X_i) + g(\XX, Z_{-i}) + \epsilon_i,
    Y_i(z) = \mu + b(X_i) + z_i h(X_i) + g(\XX, z_{-i}) + \epsilon_i,
\end{equation}
where $b$, $h$, and $g$ are arbitrary functions characterizing the baseline effect, direct effect, and spillover effects, respectively. We assume that these functions are orthogonal to ensure identifiability. We adopt a superpopulation framework with random variables $(\epsilon_i)_{i=1}^n$ corresponding to mean-zero independent noise, such that $\E(\epsilon_i|\XX) = 0$. This type of model specification extends existing models in the causal ML literature~\citep{chernozhukov2018double, kunzel2019} by incorporating  potential spillover effects through function $g$. 
While the assumption of independent errors is not universal  in the causal ML literature, 
the specification in Equation~\eqref{eq:our_model}
allows us to flexibly define a variety of causal questions, including interference, as simple hypotheses on functions $b, h$ or $g$. 
We discuss these questions next. 

\subsection{ML-assisted Randomization Test}\label{sec:general}

We begin with the canonical null hypothesis that the treatment has no effect on the outcomes. This hypothesis is global in the sense that it excludes the presence of either an average treatment effect or any heterogeneous effects. Despite its strength, this hypothesis is usually a good starting point of the empirical investigation,
especially in complex experimental designs where a novel treatment of uncertain efficacy is being tested. 

Under the outcome model of Equation~\eqref{eq:our_model}, we may define the  global null hypothesis and the alternative hypothesis as follows:
\begin{align*}
H_0^{\glob}: h=0, g=0 \quad \text{v.s.} \quad H_1^{\glob}: h \neq 0, g=0.
\end{align*}
We note that the null hypothesis, as defined above, is a statement about the potential outcome function, but it can also be written as a statement on observables through the following conditional independence statement. 
% \footnote{The independence statement relies solely on observable quantities.  Thus, despite the strong similarity, it is not technically equivalent to the hypothesis on potential outcomes.}:
%
\begin{equation*}
    H_0^{glob}: Y_i \indep Z_i \mid \mathbf{X}.
\end{equation*}

The global null hypothesis therefore implies that the treatment is independent of the outcomes and no interference is allowed between units. % This is equivalent to a ``sharp null'' hypothesis on the potential outcomes, $H_0: Y_i(1)\myeq{d} Y_i(0)$.  
The definition of the alternative hypothesis, on the other hand, sets $g=0$ and so the test aims at detecting only ``direct effects", assuming no interference or spillover effects. We will consider an alternative hypothesis space aimed at detecting potential spillover effects ($g\neq 0$) later in Section~\ref{sec:spillover}.

To test the global null hypothesis $H_0^{\glob}$, we rely on the existence of two ML models on the outcomes, namely
\begin{align*}
    \cM_0^{\glob}: Y_i &\sim X_i, \\
    \cM_1^{\glob}: Y_i &\sim Z_i + X_i.
\end{align*}
With this notation, we mean that $\cM_0$ fits  the outcomes on all covariates, whereas $\cM_1$ also includes the treatment variable in the model.
Next, we define the difference in the cross-validation (CV) errors between the two models as the function
\begin{equation}\label{eq:cv_diff}
    t_n(\YY, \ZZ, \bX) = \CV(\cM_0^{\glob}) - \CV(\cM_1^{\glob}),
\end{equation}
where $\CV(\cM)$ denotes the $k$-fold CV squared loss of model 
$\cM$. 
% PTA. Maybe place elsewhere? Other loss functions could be chosen depending on the type of outcome; e.g., entropy for binary outcomes.
We are now ready to define our main testing procedure.
\begin{procedure}[ (ML-assisted Randomization Test)]\label{ml-frt}
\begin{enumerate}
    \item Obtain the observed value of the statistic 
    $T_n = t_n(Y, Z, \bX)$ as defined in Equation~\eqref{eq:cv_diff}.
    \vspace{-10pt}
    \item Compute the randomized statistic $t^{(r)} = t_n(Y, Z^{(r)}, \XX)$, $Z^{(r)} \stackrel{iid}{\sim} \Pn$, for $r=1, \dots, R$. 
    % \footnote{\wenxuan{We assume $R+1>1/\alpha$ throughout the paper.}}% 
    \vspace{-10pt}
    
    \item Calculate the one-sided $p$-value: 
    \begin{equation}\label{eq:pval}
    \mathrm{pval}= \frac{1}{1+R} \bigg[ \sum_{r=1}^R \mathbbm{1}\{t^{(r)} > T_n \} + U (1+ m_R)\bigg],
\end{equation}
    where $U\sim\mathrm{Unif}[0,1]$, the standard uniform distribution, and $m_R = \sum_{r=1}^R \mathbbm{1}\{t^{(r)} = T_n \}$ denotes the multiplicity of the randomized statistic values.
\end{enumerate}
\end{procedure}

The following theorem shows that the $p$-value from Procedure~\ref{ml-frt} is valid in finite samples for the null hypothesis of no treatment effect, regardless of the particular choice of models.
\begin{theorem}\label{thm:valid}
Suppose that $H_0^\glob$ holds true. Then, 
% for any $R > 1 +1/\alpha$,
\begin{equation*}
    \P\big(\mathrm{pval} \le \alpha) \le \alpha,~\text{for any $\alpha \in [0,1]$ and any $n>0$},
\end{equation*}
where the randomness in $\P$ is with respect to the 
experimental design $\Pn$.
%\todo{also with respect to $U$?}. 
\end{theorem}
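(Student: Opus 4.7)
The plan is to reduce the claim to the standard exchangeability argument for randomization tests. The key point is that under $H_0^{\glob}$ the outcome vector $Y$ does not depend on the treatment $Z$, so replacing $Z$ by an independent redraw $Z^{(r)}\sim \Pn$ changes nothing about the distribution of the test statistic. Once exchangeability of the tuple $(T_n, t^{(1)}, \dots, t^{(R)})$ is established, the $p$-value in Equation~\eqref{eq:pval} is handled by a well-known tie-broken rank argument.

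Concretely, I would first observe that under $H_0^{\glob}$, substituting $h=0$ and $g=0$ into Equation~\eqref{eq:our_model} gives $Y_i(z) = \mu + b(X_i) + \epsilon_i$, which is free of $z$. Consequently $Y_i = Y_i(Z) = \mu + b(X_i) + \epsilon_i$, and $Y \indep Z \mid \XX$. Since $Z, Z^{(1)}, \dots, Z^{(R)}$ are drawn i.i.d. from $\Pn$ and the redraws $Z^{(r)}$ are by construction independent of $(Y, \XX)$, the enlarged vector $(Z, Z^{(1)}, \dots, Z^{(R)})$ is i.i.d.\ conditional on $(Y, \XX)$. Pushing this through the deterministic map $z \mapsto t_n(Y, z, \XX)$ (treating any cross-validation fold assignments as external randomness that is shared across all $R+1$ evaluations, or equivalently conditioned upon), the random vector $(T_n, t^{(1)}, \dots, t^{(R)})$ is exchangeable.

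Given exchangeability, the validity of the $p$-value follows from a standard argument. Let $R^\star = \sum_{r=1}^R \Indb{t^{(r)} > T_n}$ and $m_R = \sum_{r=1}^R \Indb{t^{(r)} = T_n}$. Exchangeability implies that conditional on the unordered multiset $\{T_n, t^{(1)}, \dots, t^{(R)}\}$, the rank of $T_n$ is uniformly distributed among the $1 + m_R$ positions it can occupy after ties are broken; the role of the independent $U\sim\mathrm{Unif}[0,1]$ in the definition of the $p$-value is precisely to perform this uniform tie-breaking. A direct calculation then yields $\P(\mathrm{pval} \le \alpha \mid T_n, t^{(1)}, \dots, t^{(R)}) \le \alpha$, and integrating gives the unconditional bound. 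No asymptotics are used, so the result holds for every finite $n$ and every choice of ML models, as claimed.

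The only subtle step is Step 2, where one must ensure that the internal randomness of $t_n$ (cross-validation folds, initialization of the ML fits, etc.) does not break exchangeability. This is not really an obstacle so much as a bookkeeping requirement: it is enough that the test statistic is computed by applying the same deterministic procedure, with the same fold assignments and same random seeds, to each of the $R+1$ treatment vectors. Under that convention, the argument above is complete.
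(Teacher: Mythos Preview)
Your proposal is correct and follows essentially the same approach as the paper: both establish that under $H_0^{\glob}$ the outcomes $Y$ are conditionally independent of $Z$ given $\XX$, yielding exchangeability of $(T_n, t^{(1)},\dots,t^{(R)})$, and then invoke the standard randomization-test validity argument. The paper's proof is terser, deferring the exchangeability-to-valid-$p$-value step to citations (Lehmann--Romano, Young), whereas you spell out the tie-broken rank argument and flag the bookkeeping point about shared auxiliary randomness---a useful elaboration, but not a different route.
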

\begin{proof}
By virtue of treatment randomization and model~\eqref{eq:our_model}, the null hypothesis $H_0^\glob$ implies that $Y$ and $Z$ are independent conditional on $\XX$, such that $(Y, Z, \XX) \myeq{d}(Y, Z^{(r)}, \XX)$ conditional on $Y,\XX$. The validity of the conditional $p$-value then follows from standard results in the randomization literature~\cite[Theorem 15.2.1]{lehmann2005testing}; \cite[Appendix I]{young2019channeling}. Since the $p$-value is valid conditionally for any $Y, \XX$, it is also valid unconditionally.
\end{proof}
\begin{remark}
The null hypothesis $H_0^\glob$ is known as ``Fisher's sharp null'' hypothesis, and thus Procedure~\ref{ml-frt} is a type of Fisherian randomization test (FRT)~\citep{fisher1935design}. 
% For instance, if $\Pn$ is a completely randomized experiment, then Procedure~\ref{ml-frt} reduces to the classical permutation test. 
However, a distinctive feature of our test lies in the choice of test statistic as the difference in CV errors between two ML models, whereas the original randomization test utilized the classical two-sample $t$-statistic. The power analysis presented in Section~\ref{sec:theory} is therefore a key theoretical contribution of this paper. To our best knowledge, it provides  the first theoretical analysis of ML-assisted randomization tests, as exemplified by Procedure~\ref{ml-frt}.
\end{remark}

\begin{remark}\label{remark2}
% Procedure~\ref{ml-frt} can test whether treatment affects outcomes in some way but it is not designed to estimate the treatment effect. To understand the directionality of the treatment effect, we propose to use the residualized procedure of Section~\ref{sec:res}, where we perform a classical permutation test on the residuals from model $\cM_0^{\glob}$. The benefit of Procedure~\ref{ml-frt} over the residualized procedure is  increased power in settings where the goal is to test only existence of treatment effect, but not its directionality. We prove this power result formally in Section~\ref{sec:res}.
Procedure~\ref{ml-frt} is designed to test whether the treatment affects outcomes in any way, but it is not intended to quantify the treatment effect. To determine the directionality of the treatment effect, we propose the residualized procedure described in Section~\ref{sec:res}, which performs a classical randomization test on the residuals from model $\cM_0^{\glob}$. The advantage of Procedure~\ref{ml-frt} lies in its increased power when the objective is solely to test for the existence of a treatment effect, without concern for its directionality. We formally establish this power result in Section~\ref{sec:res}.
 
% it is possible to leverage recent extensions of the classical FRT procedure. For instance, following the results in~\cite{caughey2023randomisation}, 
% Procedure 1 can also be finite-sample valid for the null 
% hypothesis $H_0^{'}: y_i(1)\le y_i(0)$, which tests whether the treatment has a positive effect for any unit. 
% This would require only a simple adaptation of Procedure 1 through the use of a simpler test statistic (e.g., difference-in-means between treated and control outcomes). 
% Moreover, randomization inference on the $j$-th quantile of the entire distribution of individual treatment effects, $\{y_i(1)- y_i(0)\}_{i=1}^n$, is also possible through majorization of the $p$-value~\eqref{eq:pval}~\cite[Theorem 3]{caughey2023randomisation}.
% % setting the $\min(n-j, n_t)$, where $n_t=\sum_i Z_i$ is the number of treated units, largest outcomes equal to $\infty$
% In practice, we would recommend to complement such analysis with classical methods from causal ML.
\end{remark}

\subsection{Related methods}
In this section, we discuss certain aspects of Procedure~\ref{ml-frt}, relating it to methods from experimental design and machine learning, namely ANOVA and variable importance.

First, we note that Procedure~\ref{ml-frt} is akin to ANOVA procedures used in the analysis of experiments. For instance, to test for heterogeneous effects with respect to some covariate $X'$, \citet[Section 9.3.1]{gerber2012field} propose a randomization test  comparing the goodness-of-fit, as measured by the classical $F$-statistic, between a linear model that includes $X'$ and a model that excludes it. 
Procedure~\ref{ml-frt} generalizes such approaches  by incorporating
non-linear ML models and goodness-of-fit measures based on cross-validation errors.

Another related concept is {\em mean decrease accuracy} (MDA), which is widely used to quantify the variable importance in random forest models~\citep{breiman2001random, benard2022mean, ishwaran2024fast}. 
In fact, our test statistic in Equation~\eqref{eq:cv_diff} reduces to 
the original MDA measure in Breiman's seminal work~\citep{breiman2001random} if 
(i) we replace the CV error  with the ``out-of-bag'' prediction error of the random forest model under permutations of $Z$, and (ii) 
$\Pn$ is a completely randomized experiment. 
% By virtue of randomization, $Z_i$ is independent of $X_i$, and so this measure would be a consistent estimator of $\EX[\VAR(Y | \XX)] / \VAR(Y)$. 
When $Z_i$ is independent of $X_i$, this measure would be a consistent estimator of $\EX[\VAR(Y | \XX)] / \VAR(Y)$.
This quantity is known as the {\em total Sobol index} of treatment variable $Z$, and measures the contribution of $Z$ in the variance of $Y$, including main effects and interactions.\footnote{In a non-randomized study, the original MDA measure is generally inconsistent for the total Sobol index. See~\cite{benard2022mean} for more details and ways to address this issue.}

In simpler terms, in an A/B experiment where $Z_i$ is completely randomized, the observed value of our test statistic in Equation~\eqref{eq:cv_diff} normalized by the outcome sample variance, 
\begin{equation}
    \frac{ t_n(Y, Z, \XX)} {\hat\sigma_Y^2}, ~\hat\sigma_Y^2 = [1/(n-1)] \sum_{i=1}^n (Y_i - \bar Y)^2~,
\end{equation}
may be interpreted as an estimate of the total Sobol index of the treatment variable. 
An index close to 0 indicates a treatment with small impact on the outcomes, whereas an index with a value near 1 indicates a highly impactful treatment.

Such measures of nonparametric variable importance have also been 
considered in non-causal settings by \cite{williamson2020efficient, williamson2021nonparametric, williamson2023general}.
To highlight some additional key differences, the variable importance measure in these works is a parameter to be estimated, whereas we use the variable importance measure as the test statistic for testing causal hypotheses. Moreover, their methods are based on semi-parametrics and are only valid asymptotically, whereas our method exploits the experimental variation in randomized experiments and is finite-sample valid (per Theorems \ref{thm:valid} and \ref{thm:spillover}).
% In short, we develop a special type of variable importance-based  test  where the variable of interest is a randomized treatment and the resulting test is finite-sample valid. 
Our work may thus be considered complementary to these papers.

% \todo{dont like this} To understand the directionality of the treatment effect, it is possible to leverage recent extensions of the classical FRT procedure. For instance, following the results in~\cite{caughey2023randomisation}, 
% Procedure 1 can also be finite-sample valid for the null 
% hypothesis $H_0^{'}: y_i(1)\le y_i(0)$, which tests whether the treatment has a positive effect for any unit. 
% This would require only a simple adaptation of Procedure 1 through the use of a simpler test statistic (e.g., difference-in-means between treated and control outcomes). 
% Moreover, randomization inference on the $j$-th quantile of the entire distribution of individual treatment effects, $\{y_i(1)- y_i(0)\}_{i=1}^n$, is also possible through majorization of the $p$-value~\eqref{eq:pval}~\cite[Theorem 3]{caughey2023randomisation}.
% % setting the $\min(n-j, n_t)$, where $n_t=\sum_i Z_i$ is the number of treated units, largest outcomes equal to $\infty$
% In practice, we would recommend to complement such analysis with classical methods from causal ML.

% \jungho{Something to think about: ``Statistical significance is the most common mechanism by which a given treatment's effectiveness is affirmed in an A/B test... given the heavy reliance on $p$-values it is important to acknowledge that the reproducibility crisis stemming from misuse of hypothesis tests also plagues OCEs" \citep{larsen2024statistical}}

\subsection{Covariate Balance}\label{sec:imbalance}
Here, we discuss an application of Procedure~\ref{ml-frt} to detect covariate balance. In randomized experiments, covariate balance refers to how similar 
the covariate distributions are between the treated and control groups. Balanced experiments are generally more trustworthy because any observed differences between treated and control units can be confidently attributed to the treatment effect.
Although an A/B experiment is balanced in the limit of infinite data, 
there may be imbalance in finite samples if the experiment is not well designed, 
especially in high-dimensional settings with many covariates.

It should be noted that Procedure~\ref{ml-frt} remains valid even under extreme covariate imbalance as it controls the type I error under all circumstances. However, covariate imbalance may inflate its type II error, leading to non-rejections of the null hypothesis even when the treatment is effective.  
%Below we discuss ways to detect covariate imbalance and adjust our procedure in case imbalance exists.
%
To detect covariate imbalance, we can adapt Procedure~\ref{ml-frt} towards estimating the effect of treatment on the covariates. In particular, we can fit two models,
\begin{align*}
    \cM_0^{\im}: X^j &\sim \XX^{-j}, \\
    \cM_1^{\im}: X^j &\sim Z + \XX^{-j},
\end{align*}
where $X^j$ denotes the $j$-th covariate and $\XX^{-j}$ denotes the sub-matrix of $\XX$ without
the $j$-th covariate. In other words, we apply Procedure~\ref{ml-frt} with $X^j$ as the outcome and $\XX^{-j}$ as the covariates. 
Since we know that the treatment has no effect on the covariates by virtue of randomization, 
a rejection in the above test ---appropriately corrected for multiple testing--- would imply that the experiment is imbalanced in covariate $X^j$.

In case the above test rejects, Procedure~\ref{ml-frt} may be adjusted in a way that can increase the power of the randomization test. Following~\cite{hennessy2016conditional}, we propose conditioning the randomization test on the observed covariate imbalance. Specifically, let $\mathrm{pval}_j(Z, \XX)$ denote the $p$-value from the above test of covariate imbalance with respect to $X^j$, 
and let $q$ be its realized value. We then modify Procedure~\ref{ml-frt} by replacing Step 2 with:
\begin{enumerate}
    \item [2'.] Compute the randomized statistic $t^{(r)} = t_n(Y, Z^{(r)}, \XX)$, $Z^{(r)} \stackrel{iid}{\sim} \Pn^j$, for $r=1, \dots, R$,
\end{enumerate}
where $\Pn^j(z) \propto \mathbbm{1}\{ \mathrm{pval}_j(z, \XX)\le q\} \times \Pn(z)$ is the conditional randomization distribution, given that the imbalance $p$-value is less or equal to $q$. The resulting conditional randomization test under Step 2' remains finite-sample valid following an argument similar to Theorem~\ref{thm:valid}. Moreover, this conditional approach is generally more powerful than the corresponding unconditional test in settings with covariate imbalance~\cite[Section 5.2]{hennessy2016conditional}.

\section{Power Analysis}\label{sec:theory}
% \subsection{Power Analysis}
In this section, we analyze the power properties of Procedure~\ref{ml-frt} against the alternative space
\begin{equation*}
    H_1^{\glob}: h \neq 0, g = 0.
\end{equation*}
The key insight of our theoretical analysis is that higher predictive power from black-box ML models leads to higher power for our randomization tests. This benefit in increased power is expected to be larger  in settings where complex, heterogeneous treatment effects exist but the average treatment effect is relatively weak.
While our primary focus here is the global null hypothesis $H_0^{\glob}$, our simulations in Section~\ref{sec:simu} demonstrate that this insight holds true for more complex treatment effects as well. We begin by stating the main assumptions in our analysis, starting from a simplifying assumption on the experiment design.
\begin{assumption}\label{asmp:iid}
The experiment is a Bernoulli design with probability $\pi$, i.e., 
$
\Pn(z) = \prod_{i=1}^n \pi^{z_i} (1-\pi)^{1-z_i}.
$
The data $(X_i, \epsilon_i)_{i\in[n]}$ are i.i.d. with $\EX(\eps_i | X_i) = 0$ and $\EX(\epsilon_i^2) < \infty$. In addition, $|Y_i|\le M > 0$ with probability one.
\end{assumption}
Clearly, different ML models may affect the performance of Procedure~\ref{ml-frt} differently. To pin down this connection, we define $\cF_1$ as the function class of ML models in $\cM_1$ with domain $\cX\times \{0, 1\}$. Similarly, define $\cF_0$ as the function class of ML models in $\cM_0$ with domain $\cX$. We now introduce a set of regularity conditions for the ML models.

\begin{assumption}\label{asmp:func_class}
The function classes $\cF_1$ and $\cF_0$ satisfy the following properties. 
\begin{enumerate}
\item For some positive constant $M$, classes $\cF_1$ and $\cF_0$ are $M$-uniformly bounded, i.e., 
\begin{equation*}
    \sup_{x\in\cX, z\in\{0, 1\}} f(x, z) \le M,\quad \sup_{x\in\cX} f_0(x) \le M ~\text{for any}~f\in\cF_1, f_0\in\cF_0.
\end{equation*}
% Moreover, for any $f_0\in\cF_0$, there exists a function $f\in\cF$, such that $f_0(x) = f(x,0) = f(x,1)$ for any $x\in\cX$. 
% \item For any $\alpha\in[0, 1]$ and functions $f_0, f_0^\prime\in\cF_0$,
% we have $\alpha f_0 + (1-\alpha) f_0^\prime \in \cF_0$.
% \item For any $f \in \cF$, there exists ${f}_0\in\cF_0$ such that $\E (Y_i - {f}_0(X_i))^2\le \E (Y_i - {f}(X_i, Z_i^{(r)}))^2$.

\item For a best predictor $f_r^*\in \underset{f\in\cF_1}{\arg\min}~\EX[Y - f(X, Z^{(r)})]^2$, we have $$
\EX[h(X) (f_r^*(X,1) - f_r^*(X,0))] \ge 0.
$$
\item For $\sigma = (\sigma_i)_{i=1}^n$ i.i.d. Rademacher random variables, define
\begin{equation}\label{eq:rademacher_cplx}
    \cR_n(\cF_1; \mathbb{P} ) \coloneqq \frac{1}{n} \underset{(Y, \mathbf{X}, Z)\sim\mathbb{P}}{\E} \underset{\sigma}{\E} \left(\sup_{f\in\cF_1} \left|\sum_{i=1}^n \sigma_i (Y_i - f(X_i, Z_i))^2\right|\right),
\end{equation}
where $\mathbb{P}$ denotes the distribution of observed data $(Y_i, X_i, Z_i)_{i\in[n]}$. Similarly we define $\cR_{n}(\cF_1;\mathbb{P}^{(r)})$ and $\cR_{n}(\cF_0;\mathbb{P})$, where $\mathbb{P}^{(r)}$ denotes the distribution of randomized data $(Y_i, X_i, Z_i^{(r)})_{i\in[n]}$. It holds that $\cR_n(\cF_1; \P), \cR_n(\cF_0;\P), \cR_n(\cF_1;\P^{(r)})$ are $o(1)$.
% \begin{equation}\label{eq:rademacher_cplx}
%     \cR_n(\cF) \coloneqq \frac{1}{n} \underset{(Y, \mathbf{X}, Z)}{\E} \underset{\sigma}{\E} \left(\sup_{f\in\cF} \left|\sum_{i=1}^n \sigma_i (Y_i - f(X_i, Z_i))^2\right|\right)
% \end{equation}
% as the Rademacher complexity of $\cF_1$ evaluated under the squared loss with data distribution $(Y, \bX, Z)$. Similarly, let $\cR_n(\cF^{(r)})$ and $\cR_n(\cF_0)$ be the Rademacher complexity of $\cF_1$ with $(Y, \bX, Z^{(r)})$ and $\cF_0$ with $(Y, \bX, Z)$, respectively. Then, $\cR_n(\cF)$, $\cR_n(\cF^{(r)})$, and $\cR_n(\cF_0)$ are $o(1)$ with respect to the sample size $n$.
% \wenxuan{Better to 
% }

% \item In each cross-validation step, the fitted models minimize the training loss over $\cF_0$ and $\cF_1$, respectively.
\end{enumerate}
\end{assumption}
% remarks about this assumption: 
In Assumption~\ref{asmp:func_class}, Condition 1 imposes a 
standard uniform boundedness assumption on the model functions. Condition 3 is related to the Rademacher complexity of function classes. 
This concept is widely used in statistical learning theory \citep{wainwright2019high, boucheron2013concentration,gine2021mathematical}, measures the ``size'' of the given function class and is $o(1)$ for many function classes, such as linear models
\citep{wainwright2019high}, deep neural networks \citep{bartlett2019}, and reproducing kernel Hilbert spaces \citep{bartlett2002rademacher,hur2024reversible}. We further justify Condition 3 in Section~\ref{sec:rademacher} of the Appendix.

Condition 2 in Assumption~\ref{asmp:func_class} is less standard than the other two, and reflects a complex technical constraint. Specifically, it requires that $f_r^*(x,1) - f_r^*(x,0)$, which represents the best predicted treatment effect under the randomized treatment model using $Z^{(r)}$ instead of $Z$, accurately captures on average the direction (sign) of the true treatment effect, $h(x)$.  Given its critical role in the proof, 
 we analyze several examples below that provide insights into this condition.
\begin{example}[Linear model]\label{ex:C2_1}
Suppose that the potential outcome model \eqref{eq:our_model} is linear with 
$\mu=0$, $b(x) = B x$, $h(x)=1$, $g=0$, implying observed outcomes $Y = BX+ Z + \varepsilon$. Moreover, consider linear function classes $\cF_0 = \{x\mapsto bx\}$ and
$\cF_1 = \{(x,z)\mapsto bx + cz\}$, where $b$ and $c$ belong to a large enough bounded region.% for instance, $|b-B|\le \frac{|\E X|}{\E X^2}$, $c\in[-1,1]$. 
To obtain $f_r^*$ we need to solve
\begin{equation}\label{eq:quad}
    \min_{f\in\cF_1} \E[Y - f(X,Z^{(r)})]^2 = \min_{b, c} \E (Y - bX - cZ^{(r)})^2.
\end{equation}
The solution is $b^* = B + \pi(1-c^*) \mu_1 / \mu_2$ 
and $c^* = \frac{\pi(1-r)}{1 - \pi r}$ with $r = \frac{\mu_1^2}{\mu_2} \in [0,1]$, 
where $\mu_1= \E(X)$ and $\mu_2 = \E(X^2)$.
Thus, $f_r^*(x,1) - f_r^*(x,0) = c^*$. Since $h(x)=1$, we obtain
\begin{equation*}
    \EX[h(X) (f_r^*(X, 1) - f_r^*(X,0))] = c^* \ge 0.
\end{equation*}
% Thus, Condition 2 of Assumption~\ref{asmp:func_class} is satisfied.
% Then, for any $f\in\cF$ characterized by parameters $b, c$, we have
% \begin{align*}
%     L_1^r \coloneqq \E(Y - f(X, Z^{(r)}))^2 &= \E (BX - bX + \pi(1-c))^2 + \pi(1 - \pi)c^2\\
%     &= (B - b)^2 \mu_2 + 2\pi(1 - c) (B - b) \mu + \pi^2 (1-c)^2 + \pi(1 - \pi) c^2. 
% \end{align*}
% 
% Next we optimize $b, c$ to minimize $L_1^r$ above. By first order condition, we have
% \begin{align}
%     \frac{\partial L_1^r}{\partial b} &= 2(b-B)\mu_2 - 2\pi (1 - c)\mu = 0, \label{eq:foc1}\\
%     \frac{\partial L_1^r}{\partial c} &= 2\pi(b-B)\mu - 2\pi^2 (1 - c) + 2\pi(1-\pi)c = 0.\label{eq:foc2}
% \end{align}
% From Equation~\eqref{eq:foc1}, we have $(b-B)\mu_2 = \pi(1-c)\mu$. Plug this into \eqref{eq:foc2} to obtain
% \begin{align*}
%     &2\pi^2(1-c)\frac{\mu^2}{\mu_2} - 2\pi^2 (1 - c) + 2\pi(1-\pi)c \\
%     &= 2\pi^2(1-c)(\frac{\mu^2}{\mu_2}-1) + 2\pi(1-\pi)c = 0\\
%     &\Rightarrow c^* = \frac{\pi(1-r)}{1 - \pi r}, \quad r = \frac{\mu^2}{\mu_2}. 
% \end{align*}
% Since $r\in[-1, 1]$, we have $c^* \ge 0$. Hence, in Condition 2, we have
% \begin{equation*}
%     \E h(X) (f_r^*(X, 1) - f_r^*(X,0)) = c^* \ge 0,
% \end{equation*}
% which justifies Condition 2. 
\end{example}
In Example~\ref{ex:C2_1}, the true treatment effect is equal to 1. 
The randomized treatment model, $f(X, Z^{(r)})$,  learns 
a combination of the treatment assignment mechanism and the covariate distribution. 
The estimated treatment effect under this model is biased and equals $c^* \in [0,\pi]$, its exact value depending on the inverse coefficient of variation of $X$ (i.e., parameter $r^{1/2}$). 
However, although the randomized model leads to a biased estimation of the treatment effect, 
it correctly captures the effect's sign, thus satisfying Condition 2.

\begin{example}[Interaction model]\label{ex:C2_2}
Continuing from Example~\ref{ex:C2_1}, suppose that the true outcome 
model is an interaction model such that $h(x)=x$, implying observed outcomes $Y = BX + X Z + \varepsilon$. Consider using the same function classes $\cF_0$ and $\cF_1$ as in Example~\ref{ex:C2_1}. Then, solving Equation~\eqref{eq:quad} yields 
$b^* = B + \pi$ and $c^* = 0$, so that
$$
 \EX[h(X) (f_r^*(X, 1) - f_r^*(X,0))] = \EX(X c^*) = 0.
$$
We see that Condition 2 is satisfied even though $\cF_1$ did not directly take into account the interaction between treatment and covariates, i.e., did not include the term $X Z$ in the model.

% PTA put in appendix.
% \wenxuan{Under the interaction model, Equation~\eqref{eq:quad} reduces to
% \begin{align*}
%     \E (Y - bX - cZ^{(r)})^2 &= \E (BX + XZ - bX - cZ^{(r)})^2 \\
%     &= \E (BX + X\pi - bX - c\pi + X(Z - \pi) - c(Z^{(r)} - \pi))^2\\
%     &= \E (BX + X\pi - bX - c\pi)^2 + \E (X(Z - \pi) - c(Z^{(r)} - \pi))^2\\
%     &= \E ((B + \pi - b)X - c\pi)^2 + \E (X^2) \pi(1-\pi) +  c^2 \pi(1-\pi)\\
%     &\propto (B + \pi - b)^2 \E(X^2) - 2 c\pi (B + \pi - b) \E(X) + c^2 \pi~.
% \end{align*}
% By first order condition, we have
% \begin{align*}
%     -2(B + \pi - b) \E(X^2) + 2 c\pi \E(X) &= 0,\\
%      - 2 \pi (B + \pi - b) \E(X) + 2c \pi &= 0.
% \end{align*}
% From the first equation, $B+\pi - b = \E(X)\pi c/\E(X^2)$. We plug this into the second equation and obtain $c^* = 0$ and $b^* = B + \pi$. Condition 2 holds.
% }
\end{example}

\begin{example}[Counter-example for Condition 2]\label{counterex:C2}
In Example~\ref{ex:C2_1}, suppose we define $\cF_1 = \{ (x,z) \mapsto 0\cdot x + c z \}$. Clearly, this is a poor choice of the function class 
because it does not utilize any covariate information.
Suppose also that the mean control potential outcome, $\EX[Y(0)]$ is a large negative number. That is,  $\EX[Y(0)] = B \EX(X) = -x_0$, where $x_0 > 0$ is a large positive constant.
Then, solving Equation~\eqref{eq:quad} yields 
$c^* = B\EX(X) + \pi$, and so
$$
 \EX[h(X) (f_r^*(X, 1) - f_r^*(X,0))] = B\EX(X) + \pi < 0.
$$
We see that Condition 2 is violated under this choice of $\cF_1$. 
% Suppose model~\eqref{eq:our_model} has a large baseline effect and a negligible treatment effect:
% \begin{align*}
%    &Y_i = b(X_i) + Z_i h(X_i), \quad b(x) = 100/\eps, \quad h(x) = -\eps,
% \end{align*}
% for some small enough $\eps>0$. Consider the function classes
% \begin{equation*}
%     \cF_0 = \{x\mapsto bx\},\quad \cF_1 = \{(x,z)\mapsto bx + cz\},
% \end{equation*}
% where $b$ and $c$ satisfy $|b|\le \eps$ and $|c|\le \eps$. One can show that $c^* >0$, and hence 
% \begin{equation*}
%     \E h(X) (f_r^*(X,1) - f_r^*(X,0))  = -\eps c^* < 0~.
% \end{equation*}
% This violates Condition 2.

\end{example}
Example~\ref{counterex:C2} represents a scenario where the baseline effect is very large relative to the treatment effect. Since $\cF_1$ does not 
use any covariate information, solving Equation~\eqref{eq:quad} leads to a model that is biased towards the dominating baseline effect, such that $c^* <0$ under the randomized model. This example suggests that to guarantee Condition 2 it is necessary to consider large enough function classes that 
on average ``match'' the true effect.

Crucially, if Condition 2 is satisfied, then the variable importance measure $\Delta$ in the theorem below is non-negative, and quantifies the type II error bound for Procedure~\ref{ml-frt}. See Appendix~\ref{sec:proof} for the proof. 

\begin{theorem}\label{thm:power}
Suppose that Assumptions~\ref{asmp:iid} and \ref{asmp:func_class} hold with $M>0$, 
and the number of cross-validation folds is fixed with $k > 1$. Define 
\begin{equation}\label{eq:Delta}
\Delta \coloneqq \inf_{f\in \cF_1} \E [Y - f(X,Z^{(r)})]^2 - 
\inf_{f\in \cF_1} \E [Y - f(X, Z)]^2, 
\end{equation}
where $(Y,X,Z,Z^{(r)})$ is an independent copy of $(Y_i, X_i, Z_i, Z_i^{(r)})$. Then, under the alternative $H_1^{\glob}$, $\Delta \ge 0$, and for some small constant $C>0$,
\begin{align*}
    \P(\mathrm{pval}>\alpha)
    &= O\left( k \exp\left(-\frac{C n \Delta^2}{k M^4}\right) \right).
\end{align*}
\end{theorem}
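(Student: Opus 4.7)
The plan is to decompose the argument into three steps: (i) establishing $\Delta \ge 0$; (ii) proving that $T_n - t^{(r)}$ concentrates around $\Delta$ at the rate hidden inside the claimed exponent; and (iii) converting that concentration into the stated bound on $\P(\mathrm{pval} > \alpha)$.

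For Step (i), I would exploit that $Z$ is binary and decompose any $f\in\cF_1$ as $f(x,z) = f(x,0) + z\,[f(x,1)-f(x,0)]$. Substituting into $Y = \mu + b(X) + Z h(X) + \epsilon$ (with $g=0$ under $H_1^{\glob}$) and using that $Z$ and $Z^{(r)}$ are i.i.d.\ $\Bern{\pi}$ and independent of $(X,\epsilon)$ under the Bernoulli design (Assumption~\ref{asmp:iid}), a direct expansion of both squared losses yields the identity
\[
\E[Y-f(X,Z^{(r)})]^2 - \E[Y-f(X,Z)]^2 \;=\; 2\pi(1-\pi)\,\E\bigl[h(X)\bigl(f(X,1)-f(X,0)\bigr)\bigr].
\]
Applying this identity at $f = f_r^*$ and then taking $\inf_f$ on the right-hand side gives $\Delta \ge 2\pi(1-\pi)\,\E[h(X)(f_r^*(X,1)-f_r^*(X,0))] \ge 0$, where the last inequality is precisely Condition 2 of Assumption~\ref{asmp:func_class}.

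For Step (ii), observe that $\CV(\cM_0^{\glob})$ is common to $T_n$ and $t^{(r)}$ and cancels in the difference, so I only need to control the $\cM_1^{\glob}$ cross-validation errors fit with $Z$ versus $Z^{(r)}$. On each of the $k$ held-out folds I would combine two standard ingredients: Hoeffding's inequality over the $n/k$ fold-test points (with losses in $[0,4M^2]$ by Condition 1) to pass from the empirical hold-out loss to the population risk of the trained predictor, and a symmetrization/Rademacher bound (invoking Condition 3) to control the trained predictor's excess risk over the class-oracle risk. Union-bounding over the $k$ folds and over both the $Z$ and $Z^{(r)}$ fits yields, with probability at least $1 - O(k\delta)$,
\[
\bigl|\,(T_n - t^{(r)}) - \Delta\,\bigr| \;\le\; C'\!\left(\cR_n(\cF_1) + M^2\sqrt{k\log(1/\delta)/n}\right).
\]

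For Step (iii), I choose $\delta$ so the right-hand side above is at most $\Delta/2$; since the Rademacher terms are $o(1)$ by Condition 3, this reduces to $\delta \asymp \exp(-Cn\Delta^2/(kM^4))$ for some small constant $C>0$. On the resulting event, $\Delta \ge 0$ forces $T_n > t^{(r)}$ strictly, so $\Ind\{t^{(r)}>T_n\} = 0$. Applying Markov's inequality to $\sum_{r=1}^R \Ind\{t^{(r)}>T_n\}$ over both the data and the randomization draws shows this sum exceeds the threshold $\alpha(1+R)-1$ only with probability $O(k\exp(-Cn\Delta^2/(kM^4)))/\alpha$, and since $U(1+m_R) \le 1+R$ with $m_R = 0$ generically, this dominates $\{\mathrm{pval}>\alpha\}$; the $1/\alpha$ factor is absorbed into the $O$ at fixed $\alpha$. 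I anticipate the hardest step will be (ii): obtaining the exponent $n\Delta^2/(kM^4)$ rather than $n\Delta^2/M^4$ forces one to apply Hoeffding at the per-fold level, where the effective sample size is only $n/k$, and to pay a factor $k$ in the prefactor from the union bound over folds, all while keeping the Rademacher terms uniformly negligible via Condition 3.
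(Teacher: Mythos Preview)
Your proposal is correct and follows essentially the same strategy as the paper: the identity in Step~(i) and the use of Condition~2 are exactly what the paper does, and Step~(ii) combines per-fold Hoeffding bounds on the hold-out losses with Rademacher excess-risk bounds on the trained predictors, then union-bounds over the $k$ folds, precisely mirroring the paper's Lemmas on $\E_{D_j} l(\widehat f_{-j})$ and $\E l(\widehat f_{-j}) - \E l(f^*)$.

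There are two small but genuine differences worth noting. First, you observe that $\CV(\cM_0^{\glob})$ cancels in $T_n - t^{(r)}$ and therefore never touch $\cF_0$; the paper instead analyzes $T_n$ and $t^{(r)}$ separately, showing each concentrates around $\Delta_0 = \E l(f_0^*) - \E l(f^*)$ and $\Delta_r = \E l(f_0^*) - \E l(f_r^*)$, and this forces an additional $-8\cR_{n-n/k}(\cF_0;\P)$ term into its intermediate quantity $L$. Your route is strictly cleaner here (Condition~3 on $\cF_0$ becomes unnecessary), though asymptotically the difference is immaterial. Second, in Step~(iii) you use Markov on $\sum_r \Ind\{t^{(r)}>T_n\}$, whereas the paper argues that $\{\mathrm{pval}>\alpha\}\subset\{\exists r:\, t^{(r)}>T_n\}$ under the standing assumption $(R+1)\alpha>1$ and then union-bounds over $r$; both yield the same $R$-dependent prefactor absorbed into the $O(\cdot)$.
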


Theorem~\ref{thm:power} establishes an asymptotic bound for the type II error of our main procedure. 
The proof leverages concentration inequalities from empirical process theory, and is applicable to general machine learning models under the assumptions of the theorem. We note that this type of power result is rare in the randomization literature, 
which typically studies power using CLT-based analyses on classical $t$-statistics~\citep{lehmann2005testing,zhao2021}. 
Technically, the high-level idea behind our proof aligns with \cite{dobriban2022consistency, guo2023invariance} who investigate the minimax optimality of randomization tests in regression settings, but our analysis extends these ideas to a causal inference setting with distinct proof techniques.

From Theorem~\ref{thm:power}, the key quantity in our analysis is
\begin{align*}
    \Delta &= \underbrace{\inf_{f\in \cF_1} \E [Y - f(X, Z^{(r)})]^2}_\text{prediction error with randomized data} - \underbrace{\inf_{f\in \cF_1} \E [Y - f(X, Z)]^2}_\text{prediction error with observed data},
\end{align*}
% \begin{align*}
%     \Delta &= \underbrace{\inf_{f\in \cF_1} \E [Y - f(X)]^2}_\text{prediction error of reduced model} - \underbrace{\inf_{f\in \cF_1} \E [Y - f(X, Z)]^2}_\text{prediction error of full model},
% \end{align*}
and acts as the variable importance measure for our test. 
Intuitively, $\Delta$ quantifies the improvement in  prediction accuracy 
when using the real treatment, $Z$, as opposed to a randomized treatment, $Z^{(r)}$, while keeping the observed outcomes, $Y$, fixed.
By Condition 2 of Assumption~\ref{asmp:func_class}, $\Delta \ge 0$, and so 
the real treatment is at least as predictive as its randomized counterpart.
This is reasonable under the alternative hypothesis 
since randomizing the treatment while keeping the outcome fixed effectively removes any dependence between outcomes and treatment.
Moreover, this improvement in prediction captured by $\Delta$  depends not only on the treatment effect but also on the choice of function classes, $\cF_1$ and $\cF_0$. 
In the next section, we compute $\Delta$ for specific function classes to illustrate these points. % In Appendix, we provide a full proof with other minor technical assumptions.

\subsection{Comparison to a Residualized Method}\label{sec:res}

In this section, we compare the power of Procedure~\ref{ml-frt} (referred to as ML-FRT) to the residualized method (RES) outlined in Remark~\ref{remark2}. 
This method is an alternative to Procedure~\ref{ml-frt} and relies on the idea of covariate adjustment~\citep{rosenbaum2002covariance,zhao2021}. 

% We give details on this approach next.
% Besides our approach, people have proposed and applied a residualized strategy to incorporate covariate information in FRT. 
In particular, in the residualized method, one first fits a model of $Y$ on covariates $\XX$ ---e.g., 
through linear models \citep{tukey1993tightening, rosenbaum2002covariance}, generalized linear models \citep{gail1988tests}, or nonparametric regression~\citep{raz1990testing}--- to obtain the residuals, and then performs the classical FRT on the residuals. 
% Importantly, this approach is capable of leveraging the predictive power of machine learning models in the model-fitting stage, and researchers have utilized linear models \citep{tukey1993tightening}, generalized linear models \citep{gail1988tests}, and nonparametric regressions \citep{raz1990testing} to better adjust for covariates and achieve higher power.

Using our notation, the procedure of the residualized method can be formalized as follows.
\begin{enumerate}
    \item Utilize a machine learning model to fit $Y_i \sim X_i$.  Obtain a fitted model $\widehat{m}(X_i)$ and residuals $\widehat{\eps}_i = Y_i -  \widehat{m}(X_i)$.
    \item Run an FRT with the test statistic
    \begin{equation*}
       t^{\res}(Y, Z, \bX; \widehat{m}) =  \CV(\widehat{\eps}_i \sim 1) - \CV(\widehat{\eps}_i \sim 1 + Z_i).
    \end{equation*}
\end{enumerate}
In step 2, we define the statistic using CV errors so that we can directly invoke Theorem~\ref{thm:power}  for our  power analysis. 
One can expect this CV-based test statistic to perform similar to the difference-in-means estimator, since the fitted coefficient of $Z_i$ in $\widehat{\eps}_i \sim 1 + Z_i$ is exactly the difference-in-means.
In step 1 of the residualized method, the best predictor that minimizes $\E [Y_i - m(X_i)]^2$, under Model~\eqref{eq:our_model} and Assumption~\ref{asmp:iid}, is 
\begin{equation*}
    m^*(x) = b + \mu(x) + \pi h(x).
\end{equation*}
We will assume that the fitted model $\widehat{m}$ is close enough to $m^*$ in the following sense.
\begin{assumption}\label{asmp:consistent}
As $n\to\infty$ the fitted model $\widehat{m}(x)$ satisfies 
\begin{equation*}
    \sup_{x\in\cX}|\widehat{m}(x) - m^*(x)| \to 0.
\end{equation*}
\end{assumption}
To apply our power analysis, we also impose the following additional assumptions.
\begin{assumption}\label{asmp:iid_res}
There exists a constant $M>0$, such that $|h(X_i)|\le M/2$ and $|\eps_i|\le M/2$ with probability one. 
% $|Y_i - m^*(X_i)| = |(Z_i-\pi) h(X_i) + \eps_i |\le M$ with probability one.
\end{assumption}
Assumption~\ref{asmp:iid_res} implies that $|Y_i - m^*(X_i)| = |(Z_i-\pi) h(X_i) + \eps_i |\le M$, and so the residuals remain bounded. Under this assumption, we can explicitly construct the function classes so that step 2 of the residual method falls under our power analysis. 
% Concretely, we consider
% \begin{align*}
%     \cG &= \{z\mapsto \beta_1 + \beta_2 z \mid \beta_1, \beta_2\in [-M/2, M/2]\},\\
%     \cG_0 &= \{ z\mapsto \beta_1 \mid \beta_1\in [-M, M] \},
% \end{align*}
% which are used in fitting $\widehat{\eps}_i\sim 1+ Z_i$ and $\widehat{\eps}_i \sim 1$. 
That is, under these assumptions, we may express the residual method as a case of 
Procedure~\ref{ml-frt} where $\cF_0$ contains the constant functions with a range between $[-M,M]$ and $\cF_1$ contains the linear function class with coefficients in $[-M/2, M/2]$.

To aid with interpretation, we impose one final assumption on Procedure~\ref{ml-frt} so that the function class $\cF_1$ contains the best predictors under Model~\eqref{eq:our_model}.
\begin{assumption}\label{asmp:func_class_inf}
% PTA: why f_0?
% Let $f_1(x,1) = \mu + b(x) + \pi h(x)$, and $f_2(x, z) = \mu + b(x) + z h(x)$. Then, $f_1, f_2\in\cF$.
We have $\mu + b(x) + \pi h(x), \mu + b(x) + z h(x)\in\cF_1$.
% \wenxuan{It should be $\mu + b(x) + \pi h(x)\in\cF_1$ and $\mu + b(x) + z h(x)\in\cF_1$. The first function also belong to $\cF$ but is constant in $z$.}
\end{assumption}
%As shown in Theorem~\ref{thm:power}, the performance of our randomization tests can be measured by $\Delta$. 
The following proposition characterizes the performance of the residual approach relative to Procedure~\ref{ml-frt} in terms of parameter $\Delta$. See Appendix~\ref{sec:proof} for the proof.
\begin{proposition}\label{prop:res}
Suppose Assumptions~\ref{asmp:iid}-\ref{asmp:func_class_inf} hold with the same constant $M>0$ and that $k>1$ is a fixed constant. Let $(Y, X, Z)$ be an independent copy of $(Y_i, X_i, Z_i)$. Then the type II error bound in Theorem~\ref{thm:power} holds for both the residualized method (RES) and Procedure~\ref{ml-frt} (ML-FRT) with the following $\Delta$ terms:
    \begin{align*}
        \Delta^{\res} &= \pi(1-\pi) [\E h(X)]^2,\\
        \Delta^{\ml} &= \pi(1-\pi) \E[h^2(X)].
    \end{align*}
    % Moreover, if there exists $f\in\cF$ such that $\E(\mu + b(X_i) + Z_i h(X_i) - f(X_i, Z_i))^2 = 0$ and that $\var(h(X))>0$, we have
    % \begin{equation*}
    %     \Delta^{ML} \ge \var(Z) \var(h(X)) > 0.
    % \end{equation*}
\end{proposition}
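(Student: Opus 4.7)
The plan is to present both methods as special instances of Procedure~\ref{ml-frt} to which Theorem~\ref{thm:power} can be directly applied, and then to compute the explicit form of $\Delta$ in each case. For ML-FRT the function class $\cF_1$ is taken as given, with Assumption~\ref{asmp:func_class_inf} ensuring that the relevant true conditional means lie in it. For RES, I view it as Procedure~\ref{ml-frt} run on the outcome $\widehat{\eps}_i$ with the two-parameter linear class $\cF_1^{\res} = \{(x,z)\mapsto c_0 + c_1 z : |c_0|, |c_1| \le M/2\}$, and appeal to Assumption~\ref{asmp:consistent} to replace $\widehat{\eps}_i$ by $E_i \coloneqq (Z_i - \pi) h(X_i) + \epsilon_i$ up to a uniform $o(1)$ perturbation. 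Before invoking Theorem~\ref{thm:power}, I verify Assumption~\ref{asmp:func_class}: uniform boundedness holds by Assumption~\ref{asmp:iid_res} (so $|E_i| \le M$, and hence $|\widehat{\eps}_i| \le M + o(1)$) combined with the boundedness of $\cF_1^{\res}$; Rademacher complexity is $O(n^{-1/2})$ for the two-parameter class $\cF_1^{\res}$ by standard arguments and is $o(1)$ for the original $\cF_1$ by hypothesis; and Condition 2 holds trivially in both cases because $Z^{(r)}$ is independent of the outcome, so the best randomized predictor in $\cF_1$ (resp.\ $\cF_1^{\res}$) is invariant in its $z$-argument, making $f_r^*(X,1) - f_r^*(X,0) \equiv 0$.

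For $\Delta^{\ml}$, I observe that Bernoulli independence yields $\E[Y \mid X] = \mu + b(X) + \pi h(X) \eqqcolon m^*(X)$, which lies in $\cF_1$ by Assumption~\ref{asmp:func_class_inf}. A pointwise minimization over $(f(x,0), f(x,1))$ shows that the best randomized predictor is $f_r^*(x,z) = m^*(x)$, yielding
\[
\inf_{f\in\cF_1} \E[Y - f(X,Z^{(r)})]^2 = \E[((Z-\pi) h(X) + \epsilon)^2] = \pi(1-\pi)\E[h^2(X)] + \E[\epsilon^2].
\]
The observed-data infimum is attained at $f^*(x,z) = \mu + b(x) + z h(x) \in \cF_1$ and equals $\E[\epsilon^2]$. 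Subtracting gives $\Delta^{\ml} = \pi(1-\pi)\E[h^2(X)]$.

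For $\Delta^{\res}$, after passing from $\widehat{\eps}$ to $E$, I solve two ordinary least squares problems in closed form over the linear class $\cF_1^{\res}$. Using $\E[E] = 0$, $\E[EZ] = \pi(1-\pi) \E[h(X)]$, and $\var(Z) = \pi(1-\pi)$, the best linear predictor of $E$ given $Z$ has slope $\E[h(X)]$ and intercept $-\pi \E[h(X)]$, with residual variance $\pi(1-\pi)\var(h(X)) + \E[\epsilon^2]$. By independence, the best linear predictor of $E$ given $Z^{(r)}$ is simply the constant $0$, with residual variance $\E[E^2] = \pi(1-\pi)\E[h^2(X)] + \E[\epsilon^2]$. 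Subtracting and using $\E[h^2(X)] - \var(h(X)) = [\E h(X)]^2$ gives $\Delta^{\res} = \pi(1-\pi)[\E h(X)]^2$. The main obstacle I anticipate is justifying the substitution of $\widehat{\eps}_i$ by $E_i$ in a way that keeps the population $\Delta$ coherent with the finite-sample CV statistic that is actually computed; this amounts to controlling a uniform $o(1)$ error over $\cF_1^{\res}$ and checking that it does not disturb the leading constants extracted above. Once this is in place, the two $\Delta$ expressions slot directly into the exponential type II error bound of Theorem~\ref{thm:power} and the claim follows.
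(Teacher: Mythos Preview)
Your proposal is correct and follows essentially the same approach as the paper: cast RES as an instance of Procedure~\ref{ml-frt} with the two-parameter linear class, pass from $\widehat{\eps}_i$ to the ideal residual $E_i=(Z_i-\pi)h(X_i)+\epsilon_i$ via Assumption~\ref{asmp:consistent} (the paper does this through a separate lemma bounding $|\widehat T - T|$ uniformly by a constant times $\sup_x|\widehat m(x)-m^*(x)|$), verify the conditions of Assumption~\ref{asmp:func_class}, and compute $\Delta$ explicitly in each case. Your verification of Condition~2 via $f_r^*(x,1)-f_r^*(x,0)\equiv 0$ is slightly cleaner than the paper's, which derives $\beta_2=0$ by an explicit minimization; the substitution step you flag as the main obstacle is exactly what the paper isolates as its Step~2.
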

Proposition~\ref{prop:res} demonstrates that the variable importance measure $\Delta$ of the residual method is always dominated by that of Procedure~\ref{ml-frt}. More specifically, the residual method detects the treatment effect only through the average treatment effect (ATE) quantity, $\E h(X)$. 
This ``limitation'' of the residual method is not surprising, since it has been shown to be asymptotically valid for testing that ATE equals zero \citep{zhao2021}.
In contrast, Procedure~\ref{ml-frt} captures both the ATE and heterogeneous treatment effects through $\E[h^2(X)]$. Since $\Delta^{\ml}\ge \Delta^{\res}$, the main 
result of Proposition~\ref{prop:res} is that Procedure~\ref{ml-frt} enjoys a better bound on its Type II error than simple covariate adjustment through residualization.

However, Proposition~\ref{prop:res} does not necessarily imply 
that Procedure~\ref{ml-frt} is more efficient compared to the residual method. To obtain a stronger result on relative efficiency, we will work with an idealized version of these procedures as follows. 
For Procedure~\ref{ml-frt}, we define a deterministic test $\phi^{\ml} = \mathbb{I}\{t_n(Y, Z, \bX) > q^{\ml}_{n, \alpha}\}$, where $q^{\ml}_{n, \alpha}$ is the exact $1-\alpha$ quantile for $t_n(Y, Z, \bX)$ under $H_0^{\glob}$. Similarly, define $\phi^{\res} = \mathbb{I}\{t^{\res}(Y, Z, \bX; \widehat{m}) > q^{\res}_{n, \alpha}\}$ for the residualized method.
% In words, $\phi^{\ml}$ and $\phi^{\res}$ are deterministic counterparts of ML-FRT and RES, respectively. 
% In the actual randomization tests, one has to replace the exact quantiles by the empirical quantile from randomized test statistics, but the randomized quantile is usually close to the exact quantile as $n$ gets larger. Therefore, the analysis on $\phi^{\ml}$ and $\phi^{\res}$ can be transferred to the actual randomization tests ML-FRT and RES. 

We assume the following property of the test statistics in $\phi^{\ml}$ and $\phi^{\res}$.
\begin{assumption}\label{asmp:large_deviation}
There exists a rate function $I(\cdot)$ that is continuous and strictly increasing on $(0, \infty)$, such that for any $x>0$,
\begin{align}
    &\lim_{n\to\infty} \frac{1}{n} \log \P(t_n(Y, Z, \bX) - \Delta^{\ml} < -x) \ge - I(x),\label{eq:large_dev}\\
    &\lim_{n\to\infty} \frac{1}{n} \log \P(t^{\res}(Y, Z, \bX; \widehat{m}) - \Delta^{\res} < -x) \le - I(x).\nonumber
\end{align}
%where $\Delta^{\ml}$ and $\Delta^{\res}$ are defined in Proposition~\ref{prop:res}.
\end{assumption}
The role of Assumption~\ref{asmp:large_deviation} is 
to provide a lower bound on the tails of the test statistics' sampling distributions. 
This strengthens the conditions underlying Theorem~\ref{thm:power}, 
which provide an upper bound for the sampling distribution since
\begin{gather}
    \P(t_n(Y, Z, \bX) - \Delta^{\ml} < -x ) \le \P(|t_n(Y, Z, \bX) - \Delta^{\ml}| > x) 
    = O\left(k \exp\left(-\frac{cnx^2}{kM^4}\right)\right), \nonumber\\
    \Rightarrow \underset{n\to\infty}{\lim\sup} \frac{1}{n}\log\P(t_n(Y, Z, \bX) - \Delta^{\ml} < -x ) \le - \frac{cx^2}{kM^4} \eqqcolon - I^{\ml}(x),\label{eq:tail_upper}
\end{gather}
where $c$ is a constant obtained in the proof. We emphasize that Assumption~\ref{asmp:large_deviation} does not 
require that  $t_n(Y,Z,\bX)$ or $t^{\res}(Y,Z,\bX)$ converge in probability to constants $\Delta^{\ml}$ and $\Delta^{\res}$, respectively. Moreover, it does not require 
that  $t_n(Y,Z,\bX)-\Delta^{\ml}$ and $t^{\res}(Y,Z,\bX) - \Delta^{\res}$ converge in distribution, up to some rescaling. In this sense, Assumption~\ref{asmp:large_deviation} is weaker than asymptotic normality or consistency. 
However, verifying Assumption~\ref{asmp:large_deviation} is challenging because it may require
strong conditions on the particular model class for the CV error as well as the joint data distribution. We consider this as an interesting direction for future work.

The following result shows the relative efficiency between $\phi^{\ml}$ and $\phi^{\res}$. 
\begin{proposition}\label{prop:relative_efficiency}
Suppose that Assumptions \ref{asmp:iid}-\ref{asmp:iid_res}, \ref{asmp:large_deviation} hold. 
Moreover, suppose $\mu + b(x)\in\cF_0$ and there exists $f\in\cF_1$ such that 
$\mu + b(x) =f(x,1)=f(x,0)$ for all $x\in\cX$.  Under the alternative hypothesis $H_1^{\glob}$ with $\E h(X) \neq 0$, we have
\begin{align*}
\underset{n\to\infty}{\lim\inf} \frac{1}{n} \log \frac{\P(\phi^{\ml} = 0)}{\P(\phi^{\res} = 0)} \ge 
I(\Delta^{\res}) - I(\Delta^{\ml}) \le 0~.
\end{align*}
\end{proposition}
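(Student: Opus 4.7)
The plan is to bound the type II error of each test via Assumption~\ref{asmp:large_deviation} applied at the respective $\Delta$'s, after first showing that the null critical values $q^{\ml}_{n,\alpha}$ and $q^{\res}_{n,\alpha}$ vanish as $n\to\infty$. Under $H_0^{\glob}$ we have $h=0$; combined with the conditions $\mu+b(x)\in \cF_0$ and the existence of $f\in \cF_1$ with $f(x,1)=f(x,0)=\mu+b(x)$, the best predictors over the two treatment versions coincide, forcing $\Delta^{\ml}=\Delta^{\res}=0$ under the null. The concentration bound underlying Theorem~\ref{thm:power}, applied with $\Delta=0$, then gives $\P(t_n(Y,Z,\bX)>x)\le O(k\exp(-cnx^2/(kM^4)))$, so inverting shows $q^{\ml}_{n,\alpha}=O(\sqrt{\log(k/\alpha)/n})\to 0$, and the same argument applies to $q^{\res}_{n,\alpha}$ using the explicit function-class representation of the residual method described before Assumption~\ref{asmp:func_class_inf}.

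Next, under the alternative with $\E h(X)\neq 0$, Proposition~\ref{prop:res} gives $\Delta^{\ml}\ge \Delta^{\res}>0$. Writing $a_n=\Delta^{\ml}-q^{\ml}_{n,\alpha}$ and $b_n=\Delta^{\res}-q^{\res}_{n,\alpha}$, we have $a_n\to \Delta^{\ml}$ and $b_n\to \Delta^{\res}$, and
\begin{equation*}
\P(\phi^{\ml}=0)=\P(t_n(Y,Z,\bX)-\Delta^{\ml}\le -a_n),\qquad \P(\phi^{\res}=0)=\P(t^{\res}(Y,Z,\bX;\widehat{m})-\Delta^{\res}\le -b_n).
\end{equation*}
Since Assumption~\ref{asmp:large_deviation} is stated only for fixed $x>0$, I would apply it via a monotonicity sandwich: for the ML side, fix any $x>\Delta^{\ml}$; then for all large $n$, $a_n<x$, so $\P(t_n-\Delta^{\ml}<-a_n)\ge \P(t_n-\Delta^{\ml}<-x)$, and the ML lower bound of Assumption~\ref{asmp:large_deviation} gives $\liminf_n \frac{1}{n}\log \P(\phi^{\ml}=0)\ge -I(x)$; letting $x\downarrow \Delta^{\ml}$ and using continuity of $I$ yields $\liminf_n \frac{1}{n}\log \P(\phi^{\ml}=0)\ge -I(\Delta^{\ml})$. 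The symmetric argument with $x<\Delta^{\res}$ and the RES upper bound produces $\limsup_n \frac{1}{n}\log \P(\phi^{\res}=0)\le -I(\Delta^{\res})$.

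Combining these two inequalities gives $\liminf_n \frac{1}{n}\log \frac{\P(\phi^{\ml}=0)}{\P(\phi^{\res}=0)}\ge -I(\Delta^{\ml})+I(\Delta^{\res})=I(\Delta^{\res})-I(\Delta^{\ml})$, and the trailing inequality $I(\Delta^{\res})-I(\Delta^{\ml})\le 0$ follows from strict monotonicity of $I$ together with the comparison $\Delta^{\ml}\ge \Delta^{\res}$ from Proposition~\ref{prop:res}.

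The main obstacle is the interface between the fixed-$x$ form of Assumption~\ref{asmp:large_deviation} and the slowly varying thresholds $a_n$ and $b_n$; the monotonicity sandwich resolves this only because the critical values lie strictly inside $(0,\Delta^{\ml})$ and $(0,\Delta^{\res})$ for all large $n$, so the null concentration rate $q_{n,\alpha}=O(n^{-1/2})$ is the essential supporting fact. A secondary concern is that the residual statistic uses the fitted $\widehat{m}$ rather than $m^*$; Assumption~\ref{asmp:consistent} should suffice to show that the $\widehat{m}$-based and $m^*$-based sample fits differ by $o(1)$ uniformly, so the exponential tail bound is preserved up to this perturbation and does not affect the leading exponent.
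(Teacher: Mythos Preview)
Your proposal is correct and follows essentially the same approach as the paper: first show that the null critical values vanish by exploiting the concentration of $t_n$ around $\Delta_0=0$ (using the function-class conditions $\mu+b\in\cF_0$ and the existence of $f\in\cF_1$ with $f(\cdot,1)=f(\cdot,0)=\mu+b$), then apply the two halves of Assumption~\ref{asmp:large_deviation} via a monotonicity sandwich and pass to the limit using continuity of $I$. Your treatment of the RES side (sandwiching with $x<\Delta^{\res}$) and your remark about transferring from $\widehat m$ to $m^*$ via Assumption~\ref{asmp:consistent} also match the paper's handling.
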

Intuitively, Proposition~\ref{prop:relative_efficiency} suggests that, under the alternative, we have:
\begin{equation}\label{eq:relative_eff}
    \frac{\text{Type II error of ML-FRT}}{\text{Type II error of RES}} \ge 
    \exp\left\{- n [I(\Delta^{\ml}) - I(\Delta^{\res})]\right\}.
\end{equation}
To make a concrete comparison, consider the result of Theorem~\ref{thm:power} 
where $\Delta^{\ml} - \Delta^{\res} = \pi(1-\pi)\VAR(h(X)) > 0$. Then, 
$I(\Delta^{\ml})> I(\Delta^{\res})$, and the right-hand side in Equation~\eqref{eq:relative_eff} converges to zero at an exponential rate. In this case, the type II error of Procedure~\ref{ml-frt} is exponentially smaller than that of the residualized method, and their difference is captured by $I(\Delta^{\ml}) - I(\Delta^{\res})$. 
% Here, Proposition~\ref{prop:relative_efficiency} quantifies the efficiency gain of Procedure~\ref{ml-frt} relative to the residualized approach.

\subsection{Sample Size Determination}\label{sec:sample_size}
In experimental design, sample size determination helps to ensure accuracy of the treatment effect estimate~\citep[Section 8.1.2]{cox2000theory}. 
A typical choice is $n \propto \sigma^2/d^2$, where $d$ is the width of confidence intervals at a desired confidence level, and $\sigma^2$ is an estimate of residual variation. Other approaches have been developed to incorporate covariate adjustment~\citep{turner2012covariate, Schuler2020Increasing} and estimators based on semi-parametric efficiency theory \citep{schuler2021designing}.
% sample size calculation refers to the problem of determining the sample size of an experiment \citep{das2016sample, desu2012sample}. This is an important problem in practice since an overly large sample size can be resource-intensive, whereas an inadequately small sample size may not lead to any statistically meaningful conclusions.

% For testing problems, researchers have utilized power functions to determine the necessary sample size for a statistical test, designed to detect a certain effect size, to achieve a specified level of power. 
% Naive, unadjusted approaches include calculations based on $z$ or $t$-tests;
% for instance, an unpaired (two-sided) $z$-test will result in the following expression for power:
% \begin{equation*}
%     \Phi\bigg(\Phi^{-1}(\alpha/2) + \sqrt{n}\frac{\tau}{2\sigma}\bigg) +  \Phi\bigg(\Phi^{-1}(\alpha/2) - \sqrt{n}\frac{\tau}{2\sigma}\bigg),
% \end{equation*}
% where $\tau$ is the desired effect size, $\sigma$ the outcome standard deviation (assumed equal in treatment groups), and $\Phi$ the CDF of the standard normal \citep{schuler2021designing}.
% more efficient ones consider covariate adjustments \citep{turner2012covariate, Schuler2020Increasing} and estimators based on semiparametric efficiency theory \citep{schuler2021designing}, where the overarching idea is to reduce the sampling variance of the resulting estimators that govern power.

Our result on power (Theorem \ref{thm:power}) can be adapted to address this practical concern.
Suppose one wants to test $H_0^{\glob}$ against $H_1^{\glob}$ using Procedure~\ref{ml-frt}, and the test is required to achieve at least $80\%$ power if there is a nonzero treatment effect. Using the proof of Theorem~\ref{thm:power} (i.e., Lemma~\ref{lem:power}), we obtain that at any sample size $n>0$, 
\begin{equation}\label{eq:sample_size}
    \P(\mathrm{pval}>\alpha) \le 4R\left(2k \exp\left(-\frac{n L^2}{32 k M_0^2}\right) +  \exp\left(-\frac{(k-1) n L^2}{128 k M_0^2}\right) \right). 
\end{equation}
In this expression, $L$ captures $\Delta$ up to a certain Rademacher complexity, and $M_0$ is a boundedness constant for the loss function, which are defined in the proof. Then, by setting the right hand side of Equation~\eqref{eq:sample_size} equal to $0.2$, we solve for variable $n$ to obtain the required sample size.

In practice, $L$ and $M_0$ are unknown given a real data set, but we may be able to estimate them using auxiliary data sets. Suppose, for instance, that a pilot experiment is available and provides auxiliary data that has the same distribution as the real data. Then, we can estimate $L$ and $M_0$ using the auxiliary data as follows:
$$
\widehat{L} = \CV(\mathcal{M}_0^{\glob}) - \CV(\mathcal{M}_1^{\glob}),\quad \widehat{M}_0 = \max\{\CV(\mathcal{M}_0^{\glob}), \CV(\mathcal{M}_1^{\glob})\}.
$$
% In words, $\widehat{L}$ and $\widehat{M}_0$ estimate the unknown quantities based on the square loss in the cross-validation. Consequently, one can numerically calculate the sample size by plugging $\widehat{L}$, $\widehat{M}_0$ into the Equation~\eqref{eq:sample_size}. 
If auxiliary data are not available, we could resort, as a reasonable heuristic, to applying the above formulas in the available sample. Moreover, we may omit the constant $R$ in practice, because it comes from a worst-case analysis on the type II error that tends to be overly conservative. We demonstrate a sample size calculation based on the methods of this section in Appendix \ref{sec:sample_size_example}.

\section{Treatment Effect Heterogeneity}\label{sec:het}
Next, we consider the null hypothesis that the treatment effect is homogeneous across units. This question is important in various contexts, including cost-benefit analyses for public policy and business applications, as it determines whether targeting treatments to specific subpopulations can improve overall outcomes.

Under the outcome model \eqref{eq:our_model}, we may define the  null hypothesis as follows:
\begin{align*}
    H_0^{\het}: h = \text{constant}, g=0, \quad \text{v.s.} \quad H_1^{\het}: h \neq \text{constant}, g=0.
\end{align*}
Under model~\eqref{eq:our_model}, the null hypothesis $H_0^{\het}$ implies the observable model $Y_i = b(X_i) + \tau Z_i + \eps_i$, for some fixed but unknown treatment effect parameter $\tau \in\mathbb{R}$. Equivalently, 
% we may express the hypothesis in terms of potential outcomes,
% $$
% H_0^{\het}: \VAR[Y_i(1) - Y_i(0)] = 0,~\text{v.s.}~  H_1^{\het}: \VAR[Y_i(1) - Y_i(0)] \neq 0,
% $$
we may express the null hypothesis as the following conditional independent statement:
\begin{equation*}
    H_0^{\het}: Y_i - \tau Z_i \indep Z_i \mid \mathbf{X},~\text{ for some } \tau \in \R.
\end{equation*}

To test $H_0^{\het}$, we may follow a similar approach to \cite{ding2016variation}, 
and majorize the $p$-value in Procedure~\ref{ml-frt} by enumerating plausible values for $\tau$.  Specifically, we first compute $Y_i^{0} = Y_i - \tau_0 Z_i$ for some fixed $\tau_0\in\R$ and then apply Procedure~\ref{ml-frt} on data $\{(\YY_i^0,Z_i,X_i)\}_{i=1}^n$  to obtain a $p$-value, $\mathrm{pval}(\tau_0)$. Finally, we maximize the $p$-values over a grid of values for  $\tau_0$:
\begin{equation}\label{eq:pval_het}
    \mathrm{pval}^{\mathrm{het}}_n = \sup_{\tau_0\in \R} \mathrm{pval}(\tau_0).
\end{equation}
Note that $\mathrm{pval}(\tau)$ is valid in finite samples under the null hypothesis. Due to majorization, this implies that $\mathrm{pval}^{\mathrm{het}}_n$ is finite-sample valid as well. 
Moreover, the computation in Equation~\eqref{eq:pval_het} can be completely parallelized over the available grid.
In case a $(1-\gamma)$-level confidence interval, say $\mathrm{CI}_\gamma$, is available for $\tau$, we may simplify computation further by using the method of \cite{berger1994pvalues} to define:
% Berger and Boos (1994) proposed a convenient fix to these issues — rather than maximize over the entire real line, they instead maximize over a $(1-\gamma)$-level $\mathrm{CI}$ for $\tau, \mathrm{CI}_\gamma$ :
$ \mathrm{pval}^{\mathrm{het},\gamma}_n =\sup _{\tau_0 \in \mathrm{CI}_\gamma} \mathrm{pval}\left(\tau_0\right)+\gamma$. 
% In this formulation, $\mathrm{CI}_\gamma$ is a $(1-\gamma)$-level confidence interval for $\tau$; e.g., constructed from the distribution of a regression. 
Under this construction, $\mathrm{pval}^{\mathrm{het},\gamma}_n$ remains a valid $p$-value due to the correction term $\gamma$. See Appendix \ref{het_validity} for proofs on the results of this section.
 It may be  concerning that the above majorization of the $p$-value might lead to a loss of power in the randomization test. However, in Section~\ref{sec:simu_het}, we empirically demonstrate that our heterogeneity test maintains competitive power against popular alternatives.

\section{Spillover Effects due to Network Interference}\label{sec:spillover}

Spillover effects occur when non-treated units are indirectly influenced from the treatments on other units. Such effects are pervasive in online social networks and similar interconnected systems, where participants interact or share information. In such settings,  the estimation of treatment effects may be biased, either underestimating or overestimating the treatment's actual impact~\citep{sobel2006randomized, toulis2013estimation}.

Under the potential outcome model \eqref{eq:our_model}, we may test for spillovers using the hypotheses:
\begin{equation*}
    H_0^{\spill}: h \neq 0, g =0, \quad \text{v.s.} \quad H_1^{\spill}: h \neq 0, g \neq 0.
\end{equation*} 
This hypothesis can be also be expressed as:
\begin{equation*}
    H_0^{\spill}:  Y_i \indep Z_{-i} \mid Z_i, \mathbf{X}.
\end{equation*}
% which can also be written as,
% \begin{equation*}
%     H_0^{\spill}: y_i(Z) = y_i(Z') \quad \text{v.s.} \quad H_1^{\spill}: y_i(Z) \neq y_i(Z'), 
% \end{equation*} 
% for any $Z, Z' \in \{0,1\}^n$ with $Z_i = Z_i'$. 
Note that a direct application of Procedure \ref{ml-frt} is invalid for $H_0^\spill$ because, unlike the sharp global null, the potential outcomes of a unit may change under randomized treatments that change the unit's individual treatment status.
Indeed, hypothesis $H_0^\spill$ only implies that the outcomes of a unit $i$ remain unchanged if the individual treatment is fixed.

To test $H_0^\spill$, we can leverage a recent line of work in 
conditional randomization tests under interference~\citep{aronow2012general, basse2019randomization, basse2024randomization}. 
The key methodology in these papers proceeds in two steps. 
First, we assume that the spillover function $g$ depends in some unknown way to an observable social network ${\bf A} \in \{0,1\}^{n\times n}$ between units. For example, $g(\XX, Z_{-i}) =  \sum_{j\in [n]} A_{ij} Z_j := A_{i.}^\top Z$ implies that unit outcomes depend on the number of 
the unit's treated neighbors in ${\bf A}$ under $Z$.
Second, we choose a random set of units $\cI\subset[n]$, known as ``focal units'', and then perform a conditional randomization test given this selection.
Roughly speaking, this conditional test aims to associate the ``network treatment'' of units in $\cI$ with their outcomes, while keeping their individual treatments, $Z_{\cI}$, fixed. 

Concretely, given $\cI$ (e.g., a random half of the population), we fit the following models:
\begin{align*}
    \cM_0^{\spill}: Y_i &\sim Z_i + X_i,~i\in\cI\\
    \cM_1^{\spill}: Y_i &\sim Z_i + A_{i.}^\top Z + \XX,~i\in\cI.
\end{align*}
This leads to the test statistic
\begin{equation}\label{eq:spillover_stat}
    t_n(Y,Z,\XX; {\bf A}) = \CV(\cM_0^{\spill}) - \CV(\cM_1^{\spill}).
\end{equation}
The idea is that, if spillover effects exist ($g\neq 0$), then these will be partially captured by $\cM_1$ through the terms ${\bf A}_{i.}^\top Z$ and ${\bf X}$, but they will not be captured by $\cM_0$, leading to rejection of the null hypothesis. To test $H_0^\spill$ we modify Procedure 1 by replacing Steps 1 and 2 with:
{\em 
\begin{enumerate}
 \item [1'.]  Obtain the observed value of the statistic 
    $T_n = t_n(Y, Z, \XX; {\bf A})$ as defined in~
    \eqref{eq:spillover_stat}.
    \item [2'.] Compute the randomized statistic $t^{(r)} = 
    t_n(Y, Z^{(r)}, \XX; {\bf A})$, $Z^{(r)} \stackrel{iid}{\sim} \Pn^{\cI}$, for $r=1, \dots, R$,
\end{enumerate}
}
\noindent where $\Pn^{\cI}(z) \propto \mathbbm{1}\{ z_{\cI} = Z_{\cI} \} \times \Pn(z)$ is the 
conditional randomization distribution given that the treatments of focal units in $\cI$ stay fixed to their realized values, $Z_{\cI}$. For instance, in a simple A/B experiment, $\Pn^{\cI}(z)$  amounts to permuting the treatments of non-focal units. 

We emphasize that the validity of our proposed method does not require that the term $A_{i.}^\top Z$ captures correctly the spillover effect in its entirety. Other choices for the treatment spillover term could be used. If full flexibility is desirable, 
we could even use the entire matrix $\bf A$ as a regressor in the full model:
\begin{align*}
    \cM_1^{\spill}: Y_i &\sim Z + {\bf A} + \XX,~i\in\cI.
\end{align*}
See Section~\ref{sec:application_interference} for a practical example on other possible choices regarding the term for the spillover treatment effect.
The following theorem establishes the validity of the modified randomization 
procedure outlined above.
% is valid as shown by the following theorem.
% existing results~\cite[Theorem 1]{basse2019randomization}.
% \todo{theorem?}

\begin{theorem}\label{thm:spillover}
Suppose that $H_0^\spill$ holds true. Then,
\begin{equation*}
    \P\big(\mathrm{pval} \le \alpha \mid \cI) \le \alpha,~\text{for any $\alpha \in [0,1]$ and any $n>0$},
\end{equation*}
where the randomness in $\P$ is with respect to the 
conditional randomization distribution $\Pn^{\cI}$.
\end{theorem}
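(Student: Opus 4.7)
My plan is to reduce the result to the exchangeability-based validity argument used in the proof of Theorem~\ref{thm:valid}. The defining feature of $H_0^\spill$, namely $g\equiv 0$ in model~\eqref{eq:our_model}, makes the focal outcomes insensitive to the non-focal treatments, and this in turn makes the observed and randomized test statistics exchangeable once we condition on the right $\sigma$-algebra.

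First, I would observe that under $H_0^\spill$ the potential outcomes reduce to $Y_i(z) = \mu + b(X_i) + z_i h(X_i) + \epsilon_i$, so for any treatment vector $z$ satisfying $z_\cI = Z_\cI$ one has $Y_i(z) = Y_i$ for every focal unit $i\in\cI$. Thus the observed focal outcomes $Y_\cI$ are a function of $(Z_\cI, \XX, \epsilon_\cI)$ only and do not change if we resample $Z_{-\cI}$. Second, by construction the conditional randomization distribution $\Pn^{\cI}(z) \propto \mathbbm{1}\{z_\cI = Z_\cI\}\,\Pn(z)$ coincides with the conditional distribution of $Z$ under $\Pn$ given $Z_\cI$. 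Hence, conditional on $\mathcal{H}:=(\cI, Y_\cI, Z_\cI, \XX, {\bf A})$, the treatment vectors $Z, Z^{(1)}, \dots, Z^{(R)}$ are i.i.d.\ from $\Pn(\,\cdot\,\mid Z_\cI)$.

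Third, I would note that $t_n(Y,\cdot,\XX;{\bf A})$ depends on its treatment argument only through $Z_\cI$ (fixed by the conditioning) and through the non-focal coordinates that enter the neighborhood statistics $A_{i.}^\top(\cdot)$ for $i\in\cI$, and that the outcomes plugged into the CV loss are exactly $Y_\cI$. Consequently, conditional on $\mathcal{H}$, the tuple $(T_n, t^{(1)}, \dots, t^{(R)})$ is exchangeable. The standard randomization argument, e.g.\ \cite[Theorem 15.2.1]{lehmann2005testing}, then shows that the rank-based $p$-value in~\eqref{eq:pval} satisfies $\P(\mathrm{pval} \le \alpha \mid \mathcal{H}) \le \alpha$, with the uniform tie-breaking term $U(1+m_R)$ delivering exact validity in the presence of ties. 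Marginalizing over $\mathcal{H}$ while retaining $\cI$ yields the stated bound.

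The main obstacle, and the step where I expect most of the care to be required, is the measurability and conditioning bookkeeping: one must verify that all quantities used inside $t_n$ (fold assignments, fitted models, CV predictions) are measurable in a way compatible with $\mathcal{H}$ together with the i.i.d.\ treatment inputs, and that any internal randomness from the CV is independent of $Z$ and $Z^{(r)}$ and can be absorbed into the conditioning. Once this is in place, the argument mirrors Theorem~\ref{thm:valid}, with the crucial difference that here only the non-focal coordinates are resampled, which is exactly what is needed because $H_0^\spill$ does not permit arbitrary changes to a unit's own treatment.
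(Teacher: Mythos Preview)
Your proposal is correct. The paper does not include an explicit proof of Theorem~\ref{thm:spillover}; it is stated and followed immediately by a remark, with validity implicitly delegated to the conditional randomization framework of \cite{aronow2012general, basse2019randomization, basse2024randomization}. Your argument supplies exactly the missing details and mirrors the paper's proof of Theorem~\ref{thm:valid}: under $H_0^\spill$ the focal outcomes $Y_\cI$ are invariant to changes in $Z_{-\cI}$, the conditional design $\Pn^\cI$ is the law of $Z$ given $Z_\cI$, and hence $(T_n, t^{(1)},\ldots,t^{(R)})$ is exchangeable conditional on $(\cI, Y_\cI, Z_\cI, \XX, {\bf A})$, after which \cite[Theorem 15.2.1]{lehmann2005testing} applies. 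Your additional care about measurability of the CV pipeline and independence of any internal randomness from $(Z,Z^{(r)})$ is appropriate and goes beyond what the paper spells out, but these are routine checks rather than genuine obstacles.
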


\begin{remark}
    We note that the validity of our procedure for the spillover effect does not depend on using 
 a cross-validation method that is valid for network data. This is because, 
    under the null hypothesis, $H_0^{\spill}$, there is no interference, and the validity of the 
     $p$-value from our procedure relies solely on using the correct conditional randomization distribution, $\Pn^{\cI}$. This reflects an important advantage of the Fisherian randomization framework: it allows the use of complex test statistics ---such as those based on cross-validation--- without requiring assumptions about the asymptotic properties of these statistics.
\end{remark}

\section{Simulations}\label{sec:simu}
% Note: added a short introduction here according to JMLR's requirement: Don't start a section with a subsection.
In this section, we numerically validate Theorem~\ref{thm:power}, and apply the proposed randomization tests to study heterogeneous treatment effects and spillover effects.

\subsection{Numerical Validation of Theorem~\ref{thm:power}}\label{sec:num_validation}
Here, we demonstrate Theorem~\ref{thm:power} by studying how the choice of the function class affects the power of Procedure~\ref{ml-frt} through parameter $\Delta$. We consider three function classes of decreasing size: random forests, linear models with interaction, and linear models without interaction. We use $
    \widehat{\Delta} =\CV(\mathcal{M}_0^{\glob}) - \CV(\mathcal{M}_1^{\glob})$ to approximate $\Delta$. Based on the proof of Theorem~\ref{thm:power}, $\widehat{\Delta}$ converges in probability to $\Delta$ in regular setups. 

We set $n = 100$, $p = 2$, and $X_i \stackrel{iid}{\sim} \calN(0, \Sigma)$, where $\Sigma$ is a randomly generated correlation matrix based on the R package \texttt{randcorr}. In this package, one first represents the correlation matrix using Cholesky factorization and hyperspherical coordinates (i.e., angles) \citep{pourahmadi2015randcorr}, and then samples the angles from a particular distribution \citep{makalic2022efficient}. We specify model \eqref{eq:our_model} by setting $b(x) = 0.1 x^\top\beta$, random coefficients $\beta\sim\mathrm{U}([1, 5]^p)$, $g = 0$, 
and
\begin{equation*}
h(x) = \begin{cases}
    \tau + \tau \min \{\frac{2}{x_1}, 10\} &~\text{if}~x_1>0 \\
    \tau + \tau \max \{\frac{2}{x_1}, -10\} &~\text{if}~x_1< 0 \\
    \tau  &~\text{otherwise},
\end{cases},
\end{equation*}
with $\tau \in \{0, 0.1, \dots, 0.5, 1, 1.5, 2\}$, and $\epsilon_i \sim\calN(0, 0.1^2)$. The experimental design considered here is an i.i.d. Bernoulli design with treatment probability 0.5.

Figure~\ref{fig:power_SNR} shows the power (rejection rates) and $\widehat{\Delta}$ for different function classes over $\tau$, based on $R = 100$ and 1,000 independent replications of the data generating process. Observe that the power increases as $\widehat{\Delta}$ increases, which is aligned with the type II error bound derived in Theorem~\ref{thm:power}. Moreover, under each alternative hypothesis, the richest function class (i.e., random forests, ``${\mathrm{RF}}$"), achieves the largest $\widehat{\Delta}$ and and also the highest power. These results highlight the theoretical insight that better prediction through ML model leads to higher power for our randomization tests.

% \wenxuan{Do we still need the sample size calculation? The performance here is not very satisfactory.}
% Suppose we want to determine the sample size under the alternative $\tau = 1$. Using the simulated data and the procedure in Section~\ref{sec:theory}, we obtain that $\widehat{L}_{\mathrm{RF}}=4.98$, $\widehat{M}_{0,\mathrm{RF}} = 9.98$ for the random forest; $\widehat{L}_{\mathrm{INT}}=0.98$, $\widehat{M}_{0,\mathrm{INT}} = 12.40$ for the interaction model; $\widehat{L}_{\mathrm{LM}}=0.23$, $\widehat{M}_{0,\mathrm{LM}} = 12.34$ for the linear model. In addition, the number of folds $k = 10$. Then, to achieve 80\% power, we may solve for $n$ in the Equation~\eqref{eq:sample_size} to obtain $n_{\mathrm{RF}} \approx 7,770$, $n_{\mathrm{INT}} = 304,586$, and $n_{\mathrm{LM}} = 5470,958$.

% \jungho{}

\begin{figure}[t]
    \centering
    \includegraphics[width=.8\linewidth]{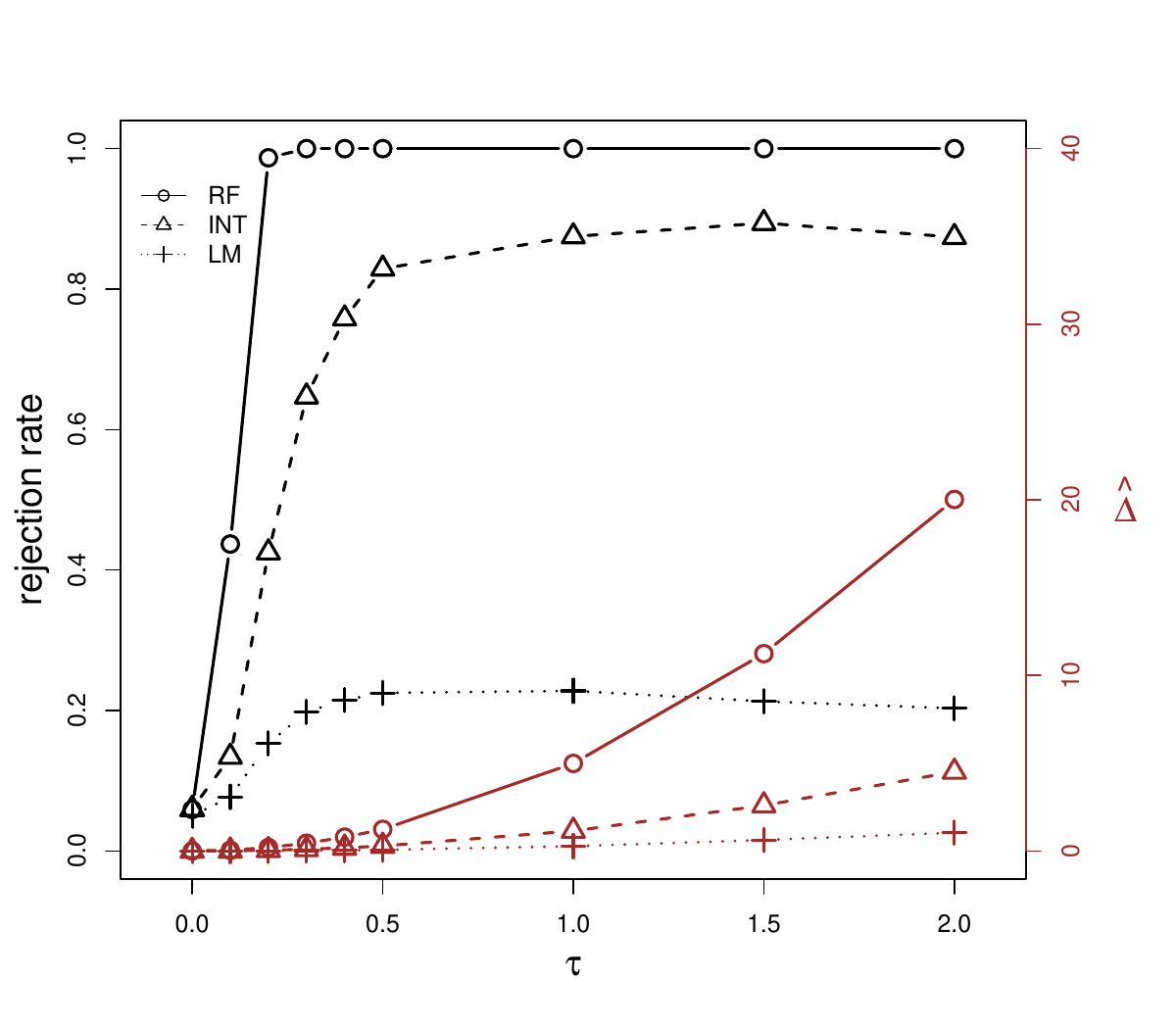}
    \caption{Rejection rates (in black) and signal strength $\widehat{\Delta}$ (in brown) across $\tau$. In the figure, $\mathrm{RF}$, $\mathrm{INT}$, and $\mathrm{LM}$ stand for the function classes of random forest, linear model with interactions, and linear model without interactions.}
    \label{fig:power_SNR}
\end{figure}

\subsection{Testing for Heterogeneous Treatment Effects}
\label{sec:simu_het}
Here we study the randomization test of Section~\ref{sec:het} for treatment effect heterogeneity by adapting an experimental setup from~\cite{kunzel2019}. The setup is an i.i.d. Bernoulli design with treatment probability 0.5. We set $n = 100$, $p = 5$, and $X_i \stackrel{iid}{\sim} \calN(0, \Sigma)$, where $\Sigma$ is a sampled as in Section~\ref{sec:num_validation}. We consider two main setups:
\begin{enumerate}[(A)]
    \item Model with a linear heterogeneous treatment effect:
    $$
    y_i(z) = -0.05 X_i^\top \beta_0 + 0.5 \tau^H z_i \times X_i^\top \beta_1 + \epsilon_i,
    $$
    where $\epsilon_i\sim\calN(0, 1)$ and $\beta_0, \beta_1\stackrel{iid}{\sim} \mathrm{U}([1,30]^p)$. We consider $\tau^H \in \{0, 0.1, \dots, 1\}$ as different values for the heterogeneity parameter ($\tau^H=0$ means no heterogeneity).
   
    \item Model with a nonlinear heterogeneous treatment effect:
    \begin{equation*}
       y_i(z) = \mathbbm{1}(X_{i1}<0.5) - 1.5 \mathbbm{1}(X_{i2}>-0.5) + 
       \tau^H z_i\times \{ 2\mathbbm{1}(X_{i1}<0.5) - 3 \mathbbm{1}(X_{i2}>-0.5)\}+ \epsilon_i, 
    \end{equation*}
 where $\epsilon_i \stackrel{iid}{\sim}\calN(0, 1)$ as before.
    We consider $\tau^H\in\{0, 0.5, \dots, 3\}$.
\end{enumerate}

Through these definitions, Setup (A) represents a setting with simple, linear heterogeneous treatment effect. Setup (B) represents a setting with a nonlinear heterogeneous treatment effect that is challenging  to detect.

We implement our tests using random forests (named $\mlfrt$) in the test statistic~\eqref{eq:cv_diff} and linear models with an interaction term $Z_i X_i$ (named $\lmfrt$). That is, we fit $Y_i\sim Z_i + X_i$ in $\mathcal{M}_0^\glob$ and $Y_i\sim Z_i \times X_i$ in $\mathcal{M}_1^\glob$. We compare our approach to the tests proposed in \cite{ding2016variation}, which also provide finite-sample valid $p$-values for testing heterogeneity, albeit without the use of ML models. As the test statistic, \cite{ding2016variation} use the variance ratio ($\texttt{VR}$) that computes the ratio of variances from treatment and control groups, and the shifted KS statistic ($\texttt{SKS}$) that computes the KS distance between the treated outcomes and controlled outcomes shifted by a given constant treatment effect. 
% As a parametric benchmark, we also include an ANOVA test between a model with interaction ($Y_i\sim Z_i + X_i + Z_i \times X_i$) and a model without interaction ($Y_i\sim Z_i + X_i$).

Figure~\ref{fig:het_eff} shows the rejection rates under the two setups described above, based on $R = 1,000$ randomizations and 100 independent replications of the data generating process. First, we observe that under the null hypothesis of no heterogeneity ($\tau^H=0$), all methods have a correct type I error control. % However, we note that ANOVA has a potential inflation in type I error, since this parametric test is not valid under model misspecification. 
Under the alternative ($\tau^H\neq 0$), $\lmfrt$ achieves similar power in the linear Setup (A) and is trailed by $\mlfrt$.
In the nonlinear Setup (B), $\mlfrt$ achieves the highest power among all randomization-based tests. The intuition is that flexible random forests can better capture the nonlinear heterogeneous treatment effect, leading to higher power per Theorem~\ref{thm:power}. 
As an aside, we note that the $\texttt{VR}$ statistic becomes powerless under the alternative $\tau^H = 1$ in the nonlinear setup because $\texttt{VR}$ only leverages the variances within treated-control groups, and thus fails to detect heterogeneous effects that do not affect the variance.

% This is because $\texttt{VR}$ only leverages the variances within treated-control groups, and thus fails to detect heterogeneous effects that do not affect the variance.

\begin{figure}[t!]
    \centering
    \subfloat[Linear]{{\includegraphics[width=.51\linewidth]{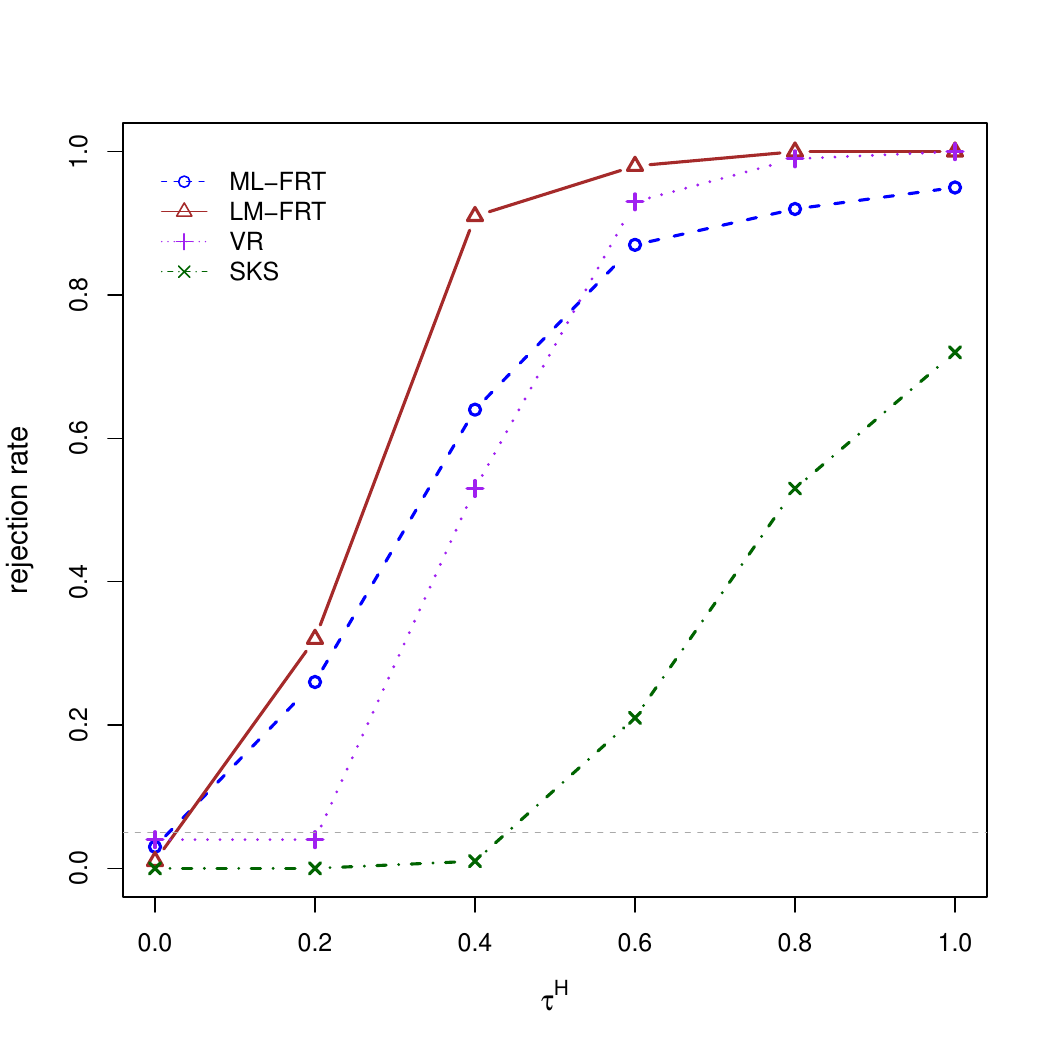}}}%
    \subfloat[Nonlinear]{{\includegraphics[width=.51\linewidth]{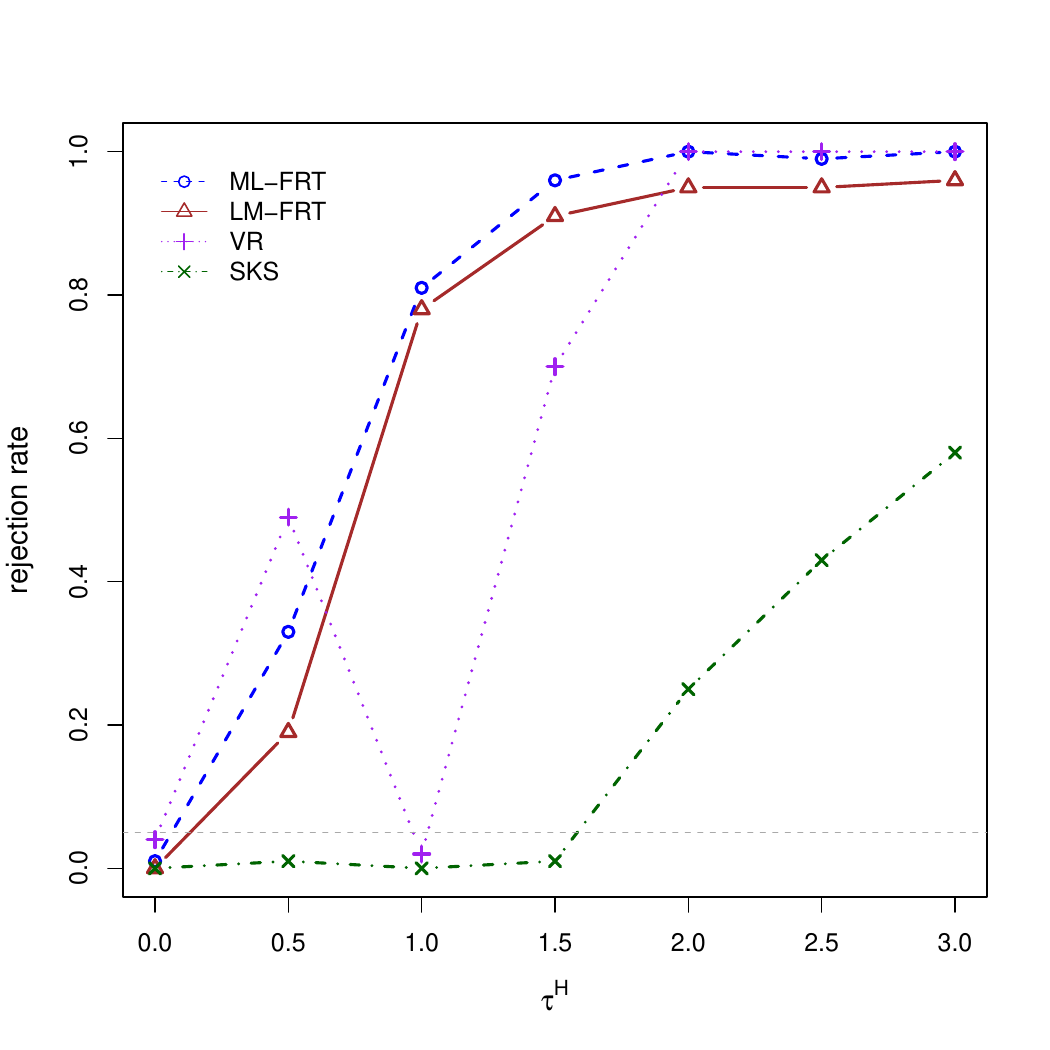}}}%
    \caption{Rejection rates for heterogeneous treatment effect across $\tau^H$.}
    \label{fig:het_eff}
\end{figure}

\subsection{Testing for Spillover Effects}\label{sec:application_interference}
Here we study the randomization test of Section~\ref{sec:spillover} for testing interference by considering a clustered interference setup following \cite[Section 6.2]{puelz2021graph} and \cite[Section 8.1]{basse2018analyzing}. We consider $n = 300$ with 20 randomly generated clusters, where only units within a cluster are connected to each other. 

This results in a block-structured adjacency matrix ${\bf A}$. The experimental design considered here is a type of two-stage randomization that is common when clusters are small (e.g., households). 
The treatment assignment according to this design proceeds in the following two steps. 
First, we select a random half of the clusters and, second, we randomly assign one unit from each selected cluster to get treated. 
Let $R_i(z) = \sum_{j\neq i} A_{ij} z_j \in \{0,1\}$ denote the ``neighborhood treatment'' of $i$. Each unit is assumed to have three possible treatment statuses: 
treated ($Z_i = 1$), spillover ($Z_i=0$ and $R_i(Z) = 1$), 
and pure control ($Z_i = 0$ and $R_i(Z) = 0$).  

We consider the following two setups, 
setting $p=2$ and $X_i\stackrel{iid}{\sim}\calN(0, I_2)$ in both of them.
\begin{enumerate}[(A)]
    \item Model with a constant spillover effect:
    \begin{equation*}
        y_i(z) = 2 + 1.5 z_i + \tau^S (1-z_i)\times R_i(z) + \epsilon_i,
    \end{equation*}
    where $\eps_i$ follows a clustered structure: $\epsilon_i = e_{c(i), 0} + Z_i e_{c(i), 1} + e_{i}$, $e_{c(i), 0}, e_{c(i), 1}\stackrel{iid}{\sim}\calN(0, 0.1^2)$ and $e_i\sim\calN(0, 0.5^2)$. Here, $c(i)\in\{1, \dots, 20\} $ denotes $i$'s cluster. 
    In this setup, the covariates $X$ are irrelevant.
    \item  Model with a nonlinear spillover effect:
    \begin{equation*}
         y_i(z) = 2 + 1.5 z_i + 0.5 \tau^S (1-z_i)\times R_i(z)\times \left(3\Ind\{X_{i2}>-0.5\} - 2\Ind\{X_{i1}<0.5\} \right) + \epsilon_i,   
    \end{equation*}
    where $\eps_i$ follows a clustered, heteroskedastic structure: $\epsilon_i = e_{c(i), 0} + z_i e_{c(i), 1} + (1-z_i)e_{i,0} + Z_i e_{i,1}$. Here, $e_{c(i), 0}, e_{c(i), 1}\stackrel{iid}{\sim}\calN(0, 0.1^2)$, $e_{i, 0}\sim\calN(0, X_{i1}^2/3^2)$, and $e_{i, 1}\sim \normal(Y_{i,2}, 0.5^2X_{i2}^2)$.
\end{enumerate}

We vary $\tau^S$ across $\{0, 0.1, \dots, 1\}$ and $\{0, 0.2, \dots, 2\}$ under Setups (A) and (B), respectively. We choose the test statistic induced by $\cM_0^{\spill}: Y_i \sim Z_i + X_i$ and $\cM_1^{\spill}: Y_i \sim Z_i + \sum_{j\neq i} {A}_{ij}Z_j + X_i$, $i \in \cI$. As noted in Section \ref{sec:spillover}, this gives one possible candidate for detecting spillovers among many. 
% The insight is that $\cM_1$ partially captures the spillover effect while $\cM_0$ does not.
We consider another candidate based on the edge-level-contrast statistic (\texttt{ELC}) \citep{athey2018exact}, which can be interpreted as the difference in the average outcomes between two groups of focal units: those whose neighbors have been exposed to the treatment and those whose neighbors have not been exposed. We again compare two variants of our approach, one with random forests (\mlfrt) and the other with linear models (\lmfrt). 
The focal units are chosen as a random half of control units within each cluster. 

The rejection rates under the two scenarios described above are shown in Figure~\ref{fig:bf}. Our simulation results 
are based on $R = 1,000$ distinct randomizations and 100 independent replications of the data generating process.
Under scenario (A), we observe that all methods exhibit similar power and maintain validity at the 5\% level when the spillover effect is absent. Under scenario (B), however, \mlfrt\ is the most powerful method by a wide margin, as it is apparently capable of capturing the underlying nonlinearity in the spillover effect. \lmfrt\ loses power in this setting as it relies on linearity, and so does the \texttt{ELC}-based test that concerns a simple difference in means between exposed and non-exposed focal units.

\begin{figure}[t!]
    \vspace{-7mm}
    \centering
    \subfloat[Constant spillover]{\includegraphics[width=.52\linewidth]{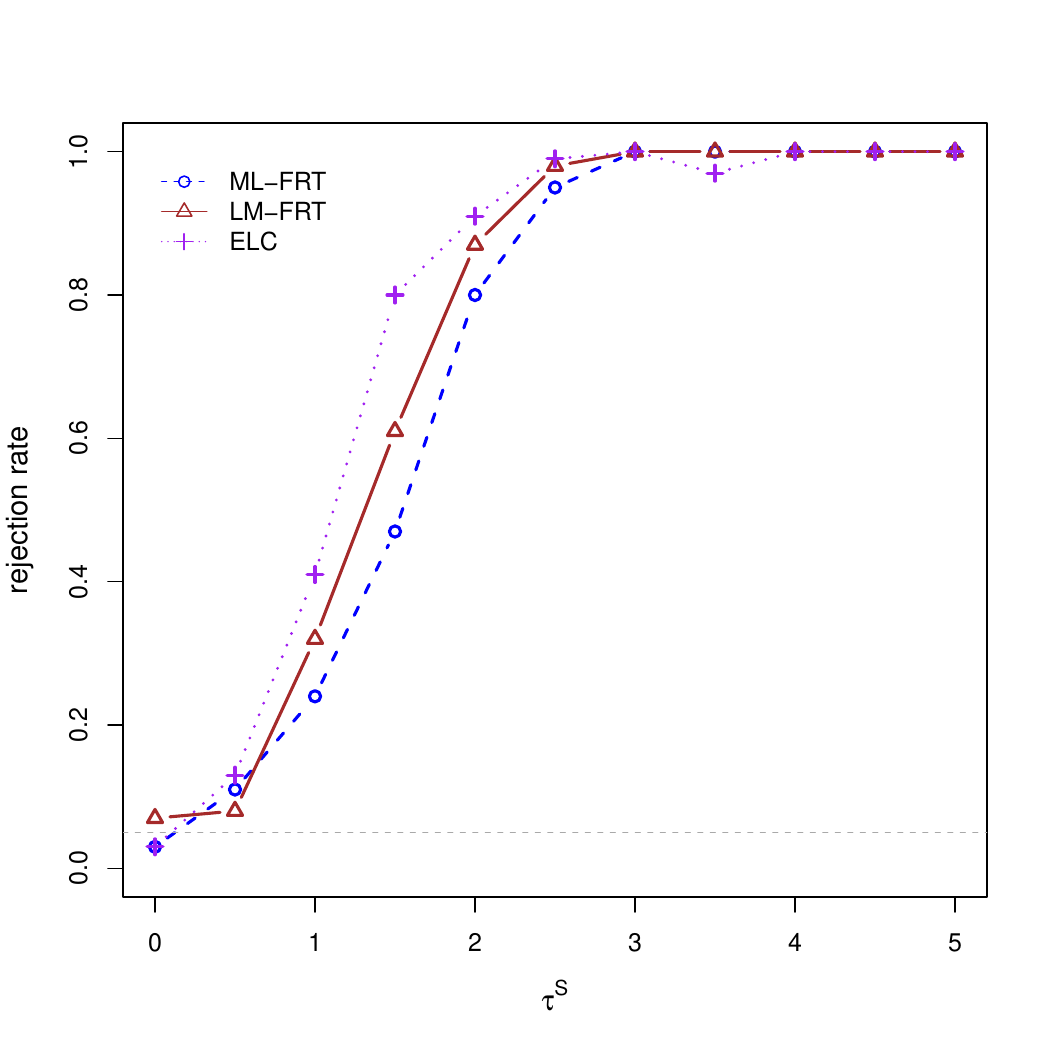}}
    \subfloat[Nonlinear spillover]{{\includegraphics[width=.51\linewidth]{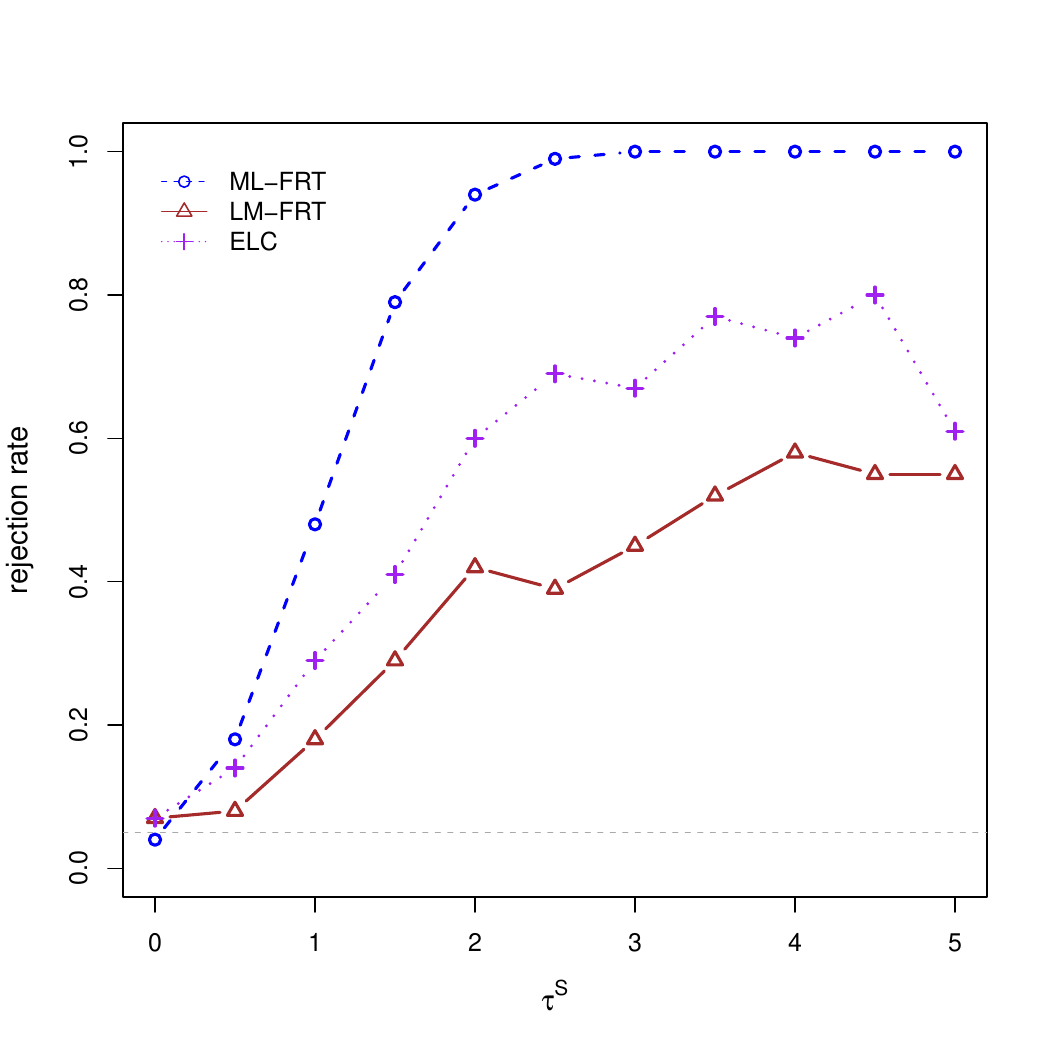}}}
    \caption{Rejection rates for spillover effects across $\tau^S$.}
    \label{fig:bf}
\end{figure}

\section{Conclusion}\label{sec:discussion}
In this paper, we introduced ML-assisted randomization tests to test for complex effects in causal inference. Our tests construct $p$-values by leveraging the true variation in the experiments, and thus are guaranteed to be finite-sample valid. This is one of the key advantage of our methods, as other existing tests based on flexible ML models are only valid in an asymptotic sense. Moreover, by leveraging the predictive power of ML models, our tests have superior power for capturing nonlinear, complex effects. This is demonstrated through our power theory and extensive numerical simulations.
However, our work leaves certain issues unexplored, opening up interesting problems for future work. 
For instance, it would be interesting to fuse randomization methods and causal ML further, by allowing randomization inference for causal ML estimators. 
Extending this synergy to more general designs ---e.g., factorial experiments--- would also have a high practical value.

\bibliographystyle{apalike}
\bibliography{references.bib}
%%%%%%%%%%%%%%%%%%%%%%%%%%%%%%%%%%%%%%%%%%%%%%%%%%%%%%%%%%%%

\appendix

% \section{Appendix / supplemental material}
\newpage

\section{Simulation Details}
\subsection{Testing Constant Treatment Effects}\label{sec:simu_const}
Here we study the randomization test proposed in Section~\ref{sec:general}, which is designed to test whether the treatments have any effects on the outcomes. Based on the outcome model~\eqref{eq:our_model}, we consider three setups consisting of a constant treatment effect and a baseline effect that is potentially complex. 
\begin{itemize}
    \item $n = 200$, $p = 5$, and $X_i \stackrel{iid}{\sim} \calN(0, \Sigma)$, where $\Sigma$ is a randomly generated correlation matrix. We set $b(X_i) = \frac{1}{2} X_i^\top \beta$, $h(X_i) = \tau$ with $\tau \in \{0, 1, \dots, 10\}$, $g = 0$, and $\epsilon_i\sim\calN(0, 2^2)$, where $\beta\sim \mathrm{U}([1,30]^d)$. 
    % This setup is adapted from \cite{kunzel2019}, and below we inspect different level of treatment effects $\tau = 0, 1, \dots, 10$. 
    \item $n = 200$, $p=1$, and $X_i\stackrel{iid}{\sim}\calN(0, 2^2)$. We set $b(X_i) = 2\mathbbm{1}\{X_{i1}<0.5\} - 3 \mathbbm{1}\{X_{i2}>-0.5\}$, $h(X_i) = \tau$ with $\tau \in \{0, 0.2, \dots, 1\}$,  $g = 0$, and $\epsilon_i \sim\calN(0, 0.1^2)$.
    \item $n = 200$, $p=1$, and $X_i\stackrel{iid}{\sim}\calN(0, 2^2)$. We set $b(X_i) = 2\cos(X_i)$, $h(X_i) = \tau$ with $\tau \in \{0, 0.2, \dots, 1\}$,  $g = 0$, and $\epsilon_i \sim\calN(0, 0.1^2)$.
\end{itemize}
Loosely speaking, we inspect different baseline effects induced by linear, piecewise-constant, and cosine functions. Since the second and third baseline effects are nonlinear, they are more difficult to detect using classical methods.

We implement our ML-assisted randomization test using both random forests ($\mlfrt$) and linear models ($\texttt{LM-FRT}$). For comparison purposes, we also evaluate standard FRTs where the test statistic is studentized Neyman's estimator ($\texttt{Neyman}$) and studentized Lin's estimator ($\texttt{Lin}$) \citep{zhao2021}, respectively. Neyman's estimator is the simple difference-in-means, whereas Lin's estimator is a linear-regression-adjusted ATE estimator.

Figure~\ref{fig:const_eff}(a)-(c) visualizes the rejection rates of different methods under the three setups above, which is based on $R = 100$ and 1,000 repeats.
First, observe that all methods have correct type I error control, since they all follow the framework of FRT and are thus finite-sample valid. However, different methods reveal different test power. In Figure~\ref{fig:const_eff}(a), $\texttt{LM-FRT}$ and $\texttt{Lin}$ achieve highest power, because the linear-model-based test statistics can better adjust for the linear baseline effects. In Figure~\ref{fig:const_eff}(b) and (c), $\mlfrt$ showcases highest power, as the flexible machine learning can better capture the nonlinear baseline effects.

% Then, LM-FRT and FRT with $\tau_L$ achieves highest power when $H_0^{\glob}$ is not true ($\tau\neq 0$), and RF-FRT performs similarly to the FRT with $\tau_N$. It is because LM-FRT and $\tau_L$ are based on linear models and can perfectly capture the linear structure in $f_\alpha$ under this setup, and thus achieves the best detection power. In comparison, RF-FRT relies on a hypothesis class with tree structure, which introduces an approximation error in the model-fitting and thus impacts the power; FRT with $\tau_N$ only computes the difference-in-means without any knowledge of covariate information, and hence it is not surprising that it has a worse performance compared to $\tau_L$.
% The rejection rates over $\tau$ are shown in Figure~\ref{fig:const_eff_nonlinear}(a)-(c), based on $R = 50$ and 100 simulations. Observe that RF-FRT achieves the best performance, and the power gap can be large (e.g., $\tau = 0.5$ in both (a) and (b)). It is because in nonlinear setups, the random forest can better capture $f_\alpha$ and thus achieves higher power. In comparison, LM-FRT does not perform so well because it uses a restrictive linear model. 

\begin{figure}[htbp!]
    \vspace{-7mm}
    \centering
    \subfloat[Linear]{{\includegraphics[width=.33\linewidth]{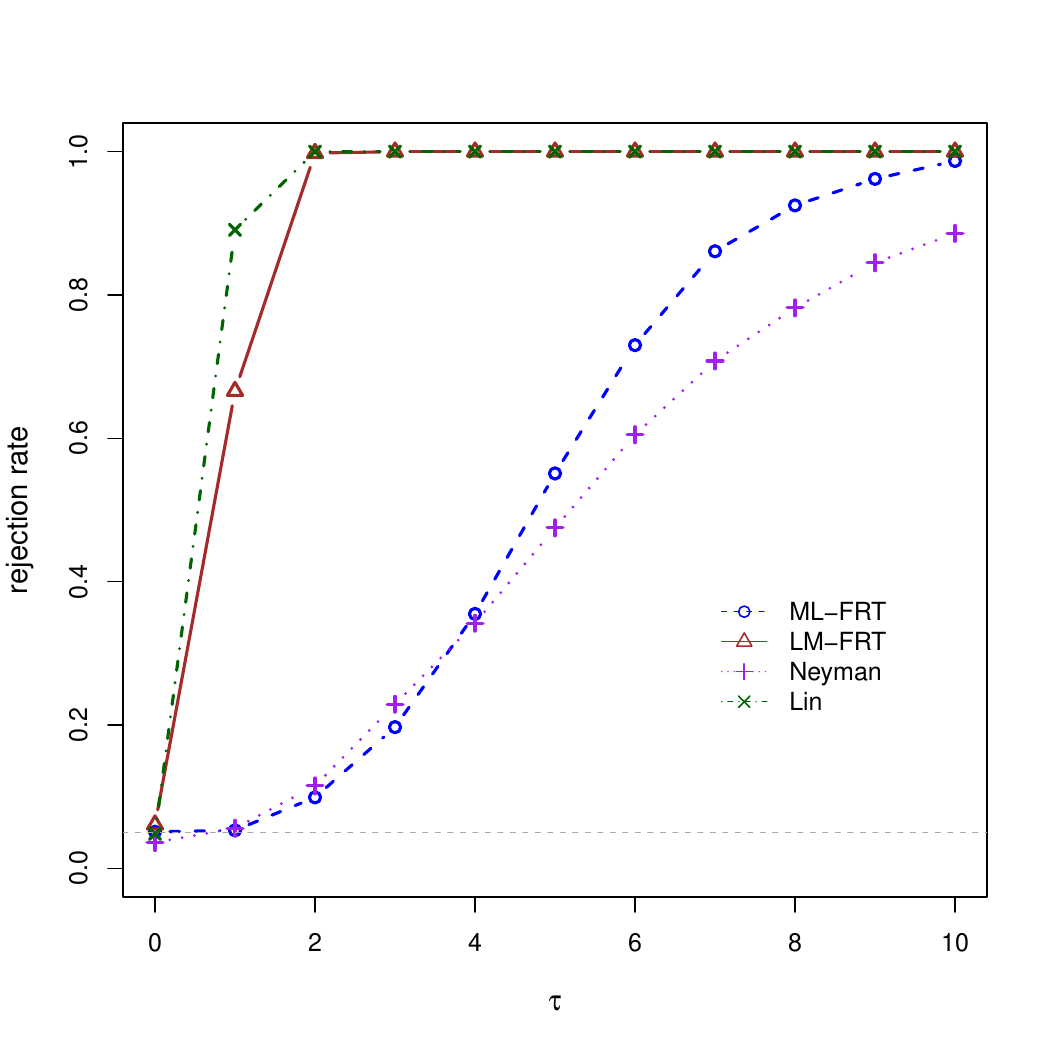}}}%
    \subfloat[Piecewise-constant]{{\includegraphics[width=.33\linewidth]{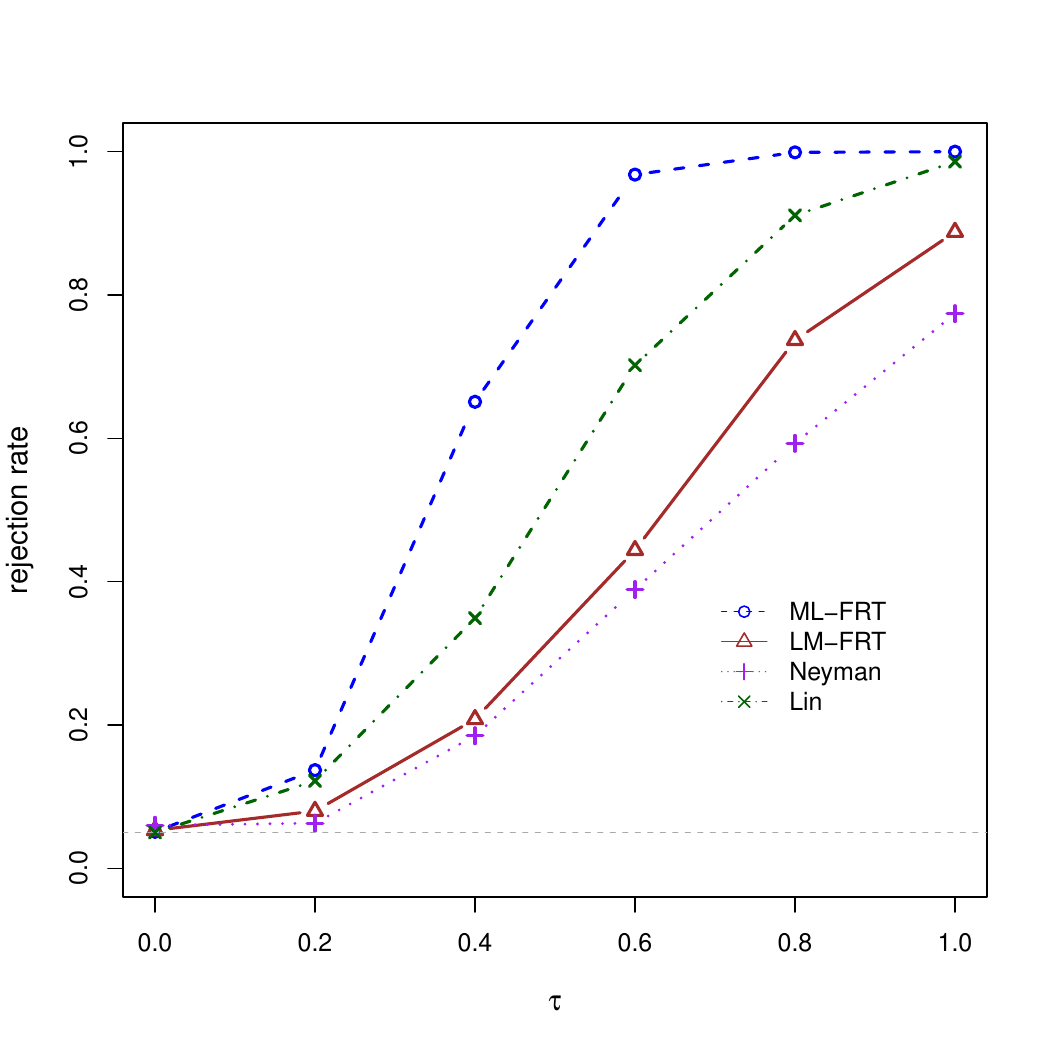}}}%
    \subfloat[Cosine]{{\includegraphics[width=.33\linewidth]{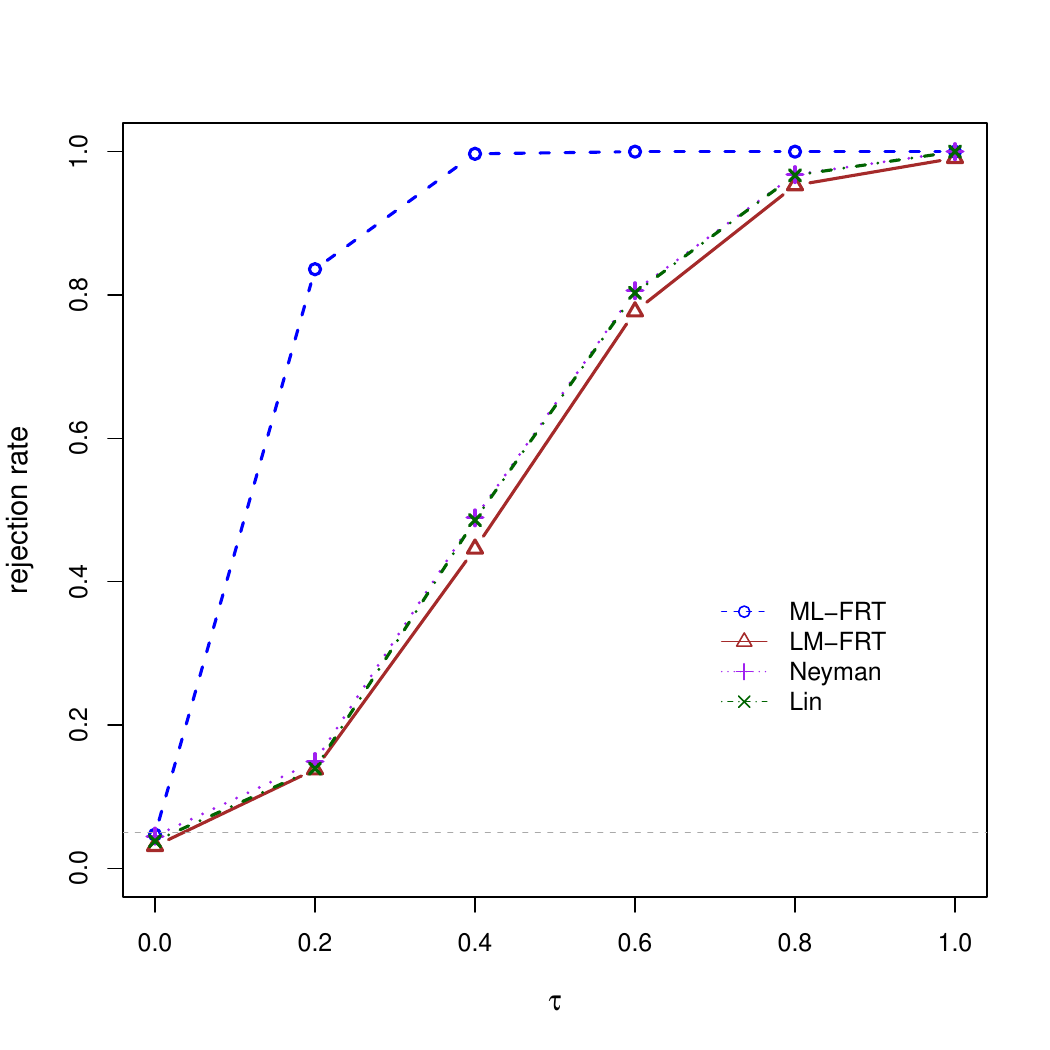}}}%
    % \subfloat[Complex treatment effects]{{\includegraphics[width=.35\linewidth]{img/const_het.pdf}}}%
    % \subfloat[Different magnitude of $f_\alpha$]{{\includegraphics[width=.35\linewidth]{}}}%
    \caption{Rejection rates for testing constant treatment effects with different baseline effects $b$.}
    \label{fig:const_eff}
\end{figure}

\subsection{Programming Details}
All numerical experiments in our paper are programmed in R. In $\mlfrt$, we employ the R library $\texttt{caret}$ \citep{caret2008} to compute cross-validation errors. In each cross-validation step, we use two R libraries $\texttt{randomForest}$ \citep{rf2002} and $\texttt{ranger}$ \citep{ranger2017} to fit random forests on the training data in different experiments. The number of folds ranges from 2 to 10 in different experiments, and empirically this choice does not have a significant effect on the performance of our tests. In addition, we inspect different tuning parameters in CV and choose the minimum CV error over all possible tuning parameters as our final CV error in the test statistic. The inspected tuning parameters include $\texttt{mtry}$, $\texttt{splitrule}$, and $\texttt{min.node.size}$, which controls number of variables to possibly split at in each node, splitting rule, and minimal node size to split at, respectively. 

Computing randomization $p$-values is computationally challenging, but can be easily parallelized. In our implementation, we run all simulations using parallel computation on computing clusters, with $\texttt{mem-per-cpu=}$8G and $\texttt{cpus-per-task=}$16, i.e., 16 parallel computing cpus with 8G memory. Under this setup, each simulation (i.e., each figure) in our simulation section takes around 15h to complete.

\section{Proofs on the Test Power}\label{sec:proof}
Recall the outcome model 
\begin{equation*}
    Y_i = \mu + b(X_i) + Z_i h(X_i) + g(\XX, Z_{-i}) + \epsilon_i.
\end{equation*}
In this section, we prove power results for testing
\begin{align*}
H_0^{\glob}: h=0, g=0 \quad \text{v.s.} \quad H_1^{\glob}: h \neq 0, g=0.
\end{align*}

% \todo{
% Major comments:
% \begin{itemize}
%     % \item Include the model and alternative hypothesis at the beginning of this section.
%     % \item Change the notation for Rademacher complexity.
%     % \item Beginning of proof of Lemma~\ref{lem:power}, show the high-level idea
%     % \begin{equation*}
%     %     \text{type II} \le \mathrm{CDF}_{T_n}(c) + 1 - \mathrm{CDF}_{T_n^{(r)}}. 
%     % \end{equation*}
%     % Mention that Step 1 shows the sampling distribution concentrates around $\Delta$, Step 2 shows that the sampling distribution concentrates around $\Delta_r$.
%     % \item Prove $\Delta_r = 0$ in the linear case using OVB (omitted variable bias) formula and (proxy variable bias) formula. 
%     % \item May assume that $f^*(X, Z) = b(X) + Zh(X)\in\cF$ and $f_0^*(X) = b(X) + \E(Z|X)h(X)\in\cF_0$.
% \end{itemize}
% }

% notation
% For the ML model in ML-FRT, we define $\cF_1$ to be the corresponding model class, which consists of mappings from $\cW$ to $\R$. Here $\cW$ is the space of inputs: for the larger model, $\cW = (\cX, \{0, 1\})$; for the simpler model, we have $\cW = \cX$. 
For any $f_0\in\cF_0$ and $f\in\cF$, we define the squared loss function 
\begin{equation*}
    l(y, x; f_0) = (y - f_0(x))^2,\quad l(y, x,z; f) = (y - f(x,z))^2.
\end{equation*}
When there is no ambiguity, we write $l(f_0) = l(y, x; f)$ and $l(f) = l(y, x, z; f)$ for simplicity, that is, functions with subscript 0 are defined on $\cX$, whereas functions without subscript 0 are defined on $\cX\times \{0, 1\}$. To simplify our notation, we define $\cF = \cF_1$. 

Denote by $\E$ and $\E_n$ the population mean and empirical mean, respectively. Then we define the population loss and empirical loss as 
\begin{align*}
    \E l(f_0) &= \E (Y - f_0(X))^2, \quad \E l(f) = \E (Y - f(X, Z))^2.\\
    \E_n l(f_0) &= \frac{1}{n} \sum_{i=1}^n (Y_i - f_0(X_i))^2, \quad \E_n l(f) = \frac{1}{n} \sum_{i=1}^n (Y_i - f(X_i,Z_i))^2.
\end{align*}
In the definition above, $(Y, X, Z)$ is an independent copy of the observed data $(Y_i, X_i, Z_i)$ under Assumption~\ref{asmp:iid}. Let $[n] = \{1, \dots, n\}$.
% \note{Change notation for (Y,X,Z)? Maybe we can use bold font for the entire data set. Then there should be no conflict.}

\paragraph{Additional regularity conditions. }
We assume that the fitted models minimize the empirical loss in each cross-validation step: for a training data $(Y_i, X_i, Z_i)$ with $i\in D$, we have
\begin{align*}
    \widehat{f}_0 \in \underset{f_0\in\cF_0}{\arg\min} \sum_{i\in D} (\YY_i - f_0(X_i))^2,
    \quad \widehat{f} \in \underset{f\in\cF}{\arg\min} \sum_{i\in D} (\YY_i - f(X_i, \ZZ_i))^2.
\end{align*}
In words, the fitted models are empirical risk minimizers, which is commonly assumed for bounding the empirical risk of machine learning models \citep{anthony1999neural}. In practice, this condition holds for convex loss functions. On the flip side, however, this condition can be hard to verify for ML models that involve non-convex optimizations, such as neural networks.

In addition, we assume that $R > 1/\alpha - 1$, where $R$ is the number of randomizations in our main procedure. This is a mild assumption as $R$ is usually set to be a large constant, e.g., $R = 1,000$, such that $R > 1/\alpha - 1$ holds.
% Similar to previous works , we maintain this assumption to simplify our analysis.

% In addition, we assume that for any $f_0\in\cF$, there exists a function $f\in\cF$ such that $f_0(x) = f(x, 0) = f(x, 1)$. That is, $\cF_0$ is a ``subset'' of $\cF_1$. \todo{double check this point} In Procedure~\ref{ml-frt}, we assume that $m_R = 0$ with probability one, which is a mild assumption for the technical proof.

\subsection{Preliminaries}
In this section, we introduce some notations and preliminary results, which will be used in the main proof.

% A concentration inequality for cross-validated error.
In cross-validation, we denote the $k$ folds by $D_1, \dots, D_k$ such that $D_i\cap D_j=\emptyset, |D_i| = |D_j|$ for $i\neq j$ and $\cup_j D_j = [n]$. Without loss of generality, suppose that the folds $D_1, \dots, D_k$ are deterministic. All of our theoretical results work for random folds as well, but one needs to apply the proof by first conditioning on the randomly selected folds. 

Given $(D_j)_{j=1}^k$, the cross-validation errors in our test statistic can be written as 
\begin{align*}
    \CV(\YY;\XX) &= \frac{1}{k} \sum_{j=1}^k \frac{1}{|D_j|} \sum_{i\in D_j} \left(Y_i - \widehat{f}_{0, -j}(X_i)\right)^2 = \frac{1}{k} \sum_{j=1}^k \E_{D_j} l(\widehat{f}_{0,-j}),\\
    \CV(\YY;\XX, \ZZ) &= \frac{1}{k} \sum_{j=1}^k \frac{1}{|D_j|} \sum_{i\in D_j} \left(Y_i - \widehat{f}_{-j}(X_i, Z_i)\right)^2 = \frac{1}{k} \sum_{j=1}^k \E_{D_j} l(\widehat{f}_{-j}),\\
    \CV(\YY;\XX, \ZZ^{(r)}) &= \frac{1}{k} \sum_{j=1}^k \frac{1}{|D_j|} \sum_{i\in D_j} \left(Y_i - \widehat{f}_{-j}(X_i, Z_i^{(r)})\right)^2 = \frac{1}{k} \sum_{j=1}^k \E_{D_j} l(\widehat{f}_{-j}^{(r)}).
\end{align*}
Here, $\widehat{f}_{0, -j}$, $\widehat{f}_{-j}$, $\widehat{f}_{-j}^{(r)}$ are fitted models on $j$-th training set $D_j^\complement\coloneqq [n]\backslash D_j$, and $E_{D_j}$ denotes the empirical mean over the samples in $D_j$. 

Assumptions~\ref{asmp:iid} and \ref{asmp:func_class} imply there exists a constant $M_0>0$ such that $l(f_0) \le M_0$, since
\begin{equation*}
    l(f_0) = (Y - f_0(X))^2 \le 2Y^2 + 2f_0^2(X) \le 4 M^2,
\end{equation*}
such that we may define $M_0 = 4M^2$. Same bound can be established for $l(f)$. Therefore, the loss functions are bounded by $M_0$ in our setup, a fact that we will use multiple times in our proof. 

We introduce concentration inequalities on $\E_{D_j} l(f_0)$ and $\E_{D_j} l(f)$, which comes directly from Hoeffding's inequality and the boundedness of the loss functions. 
\begin{lemma}\label{lem:hoeffding}
Suppose Assumptions~\ref{asmp:iid} and \ref{asmp:func_class} hold with the boundedness constant $M$. For any $j\in [k]$ and any $f_0\in\cF_0, f\in\cF$, we have
\begin{align*}
    \P\left( |\E_{D_j} l(Y, X; f_0) - \E l(Y, X;f_0)| \ge t\right) &\le 2\exp\left(-\frac{2 n t^2}{kM_0^2}\right),\\
    \P\left( |\E_{D_j} l(Y,X,Z;f) - \E l(Y,X,Z; f)| \ge t\right) &\le 2\exp\left(-\frac{2 n t^2}{kM_0^2}\right),\\
    \P\left( |\E_{D_j} l(Y,X,Z^{(r)};f) - \E l(Y,X,Z^{(r)}; f)| \ge t\right) &\le 2\exp\left(-\frac{2 nt^2}{k M_0^2}\right),
\end{align*}
for $t>0$. Here the randomness comes from $(Y_i, X_i, Z_i, Z_i^{(r)})$ for $i\in D_j$. 
\end{lemma}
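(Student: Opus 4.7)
The plan is to recognize that this lemma is a direct application of Hoeffding's inequality, since the functions $f_0, f$ in the statement are \emph{fixed} (not data-dependent) and the loss is uniformly bounded. The only real work is verifying (i) boundedness of the loss by the constant $M_0$, and (ii) that the summands in $\E_{D_j} l(\cdot)$ form $|D_j| = n/k$ i.i.d. copies.

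For boundedness, under Assumption~\ref{asmp:iid} we have $|Y_i|\le M$ almost surely, and Condition 1 of Assumption~\ref{asmp:func_class} gives $|f_0(x)|\le M$ and $|f(x,z)|\le M$. Hence $0 \le (Y_i - f_0(X_i))^2 \le 2Y_i^2 + 2f_0(X_i)^2 \le 4M^2 = M_0$, and the same bound applies to $l(Y_i,X_i,Z_i;f)$ and $l(Y_i,X_i,Z_i^{(r)};f)$.

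For the independence structure, Assumption~\ref{asmp:iid} makes $(X_i,\epsilon_i)_{i\in[n]}$ i.i.d., and under the Bernoulli design the treatments $Z_i$ are i.i.d. $\Bern{\pi}$, jointly independent of the covariates and noise. Therefore $(Y_i, X_i, Z_i)_{i\in[n]}$ are i.i.d., and so are their restrictions to the fixed fold $D_j$. The same argument applies to $(Y_i, X_i, Z_i^{(r)})_{i\in[n]}$, since the randomization draws $Z_i^{(r)}\stackrel{iid}{\sim} \Pn$ are independent of the data and also Bernoulli$(\pi)$. Applying Hoeffding's inequality to the $|D_j| = n/k$ i.i.d. summands in $[0, M_0]$ then yields
\begin{align*}
\P\bigl(|\E_{D_j} l(f_0) - \E l(f_0)| \ge t\bigr) \le 2\exp\left(-\frac{2 |D_j| t^2}{M_0^2}\right) = 2\exp\left(-\frac{2 n t^2}{k M_0^2}\right),
\end{align*}
and identically for the two variants involving $l(f)$ and $l(Z^{(r)};f)$.

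I do not anticipate a real obstacle here: every ingredient is handed to us by the assumptions. The only minor care is if $n/k$ is not an integer, in which case one replaces $n/k$ by $\lfloor n/k\rfloor$; this changes the rate only by an absorbable constant and does not affect the statement as written (since $|D_j|$ is assumed equal across folds in the preliminaries). It is also worth emphasizing that this lemma concerns \emph{fixed} $f_0$ and $f$; the fact that fitted models $\widehat{f}_{0,-j}$ and $\widehat{f}_{-j}$ are data-dependent will be handled elsewhere via the Rademacher complexity bound in Condition 3 of Assumption~\ref{asmp:func_class}, and is not needed here.
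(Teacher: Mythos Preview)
Your proposal is correct and matches the paper's own proof essentially line for line: bound the squared loss by $M_0 = 4M^2$ using $|Y_i|\le M$ and Condition~1 of Assumption~\ref{asmp:func_class}, note that $\E_{D_j} l(\cdot) - \E l(\cdot)$ is the mean of $|D_j| = n/k$ i.i.d.\ bounded terms, and invoke Hoeffding's inequality. Your write-up is in fact more explicit than the paper's (which simply says ``the bound follows from Hoeffding's inequality and $|D_j| = n/k$''), and your remark that the fitted models are handled separately via Rademacher complexity is correct but not needed for this lemma.
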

\begin{proof}
Notice that under Assumptions~\ref{asmp:iid} and \ref{asmp:func_class}, we have
\begin{equation*}
    l(f_0) = (Y - f_0(X))^2 \le M_0
\end{equation*}
with probability one. Then, the quantity $\E_{D_j} l(f_0) - \E l(f_0)$ is an average of bounded random variables with mean zero. Then, the first bound follows from Hoeffding's inequality and $|D_j| = n/k$. The second and third bounds can be proved in a similar way. 
% Note that Assumption~\ref{asmp:func_class} implies that $\cF_1$ is also $M$-uniformly bounded. Hence, we can apply the same argument to $f$ and obtain the second and third bound.
\end{proof}

Based on Lemma~\ref{lem:hoeffding}, we obtain the following result. 
\begin{lemma}\label{lem:cvloss}
Suppose Assumptions~\ref{asmp:iid} and \ref{asmp:func_class} hold. For any $j\in[k]$ and $t>0$, we have
\begin{align*}
    \P\left( |\E_{D_j} l(\widehat{f}_{0,-j}) - \E l(\widehat{f}_{0,-j})| \ge t\right) &\le 2\exp\left(-\frac{2 n t^2}{k M_0^2}\right),\\
    \P\left( |\E_{D_j} l(\widehat{f}_{-j}) - \E l(\widehat{f}_{-j})| \ge t\right) &\le 2\exp\left(-\frac{2 n t^2}{k M_0^2}\right),\\
    \P\left( |\E_{D_j} l(\widehat{f}_{-j}^{(r)}) - \E l(\widehat{f}_{-j}^{(r)})| \ge t\right) &\le 2\exp\left(-\frac{2 n t^2}{k M_0^2}\right).
\end{align*}
% where $\widehat{f}_{-j}$ denotes the cross-validated model with samples in the $j$-th fold removed.
\end{lemma}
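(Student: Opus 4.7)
The plan is to reduce each bound to the fixed-function version already established in Lemma~\ref{lem:hoeffding} by conditioning on the training data, and then integrating out. The key observation is that the fitted models $\widehat{f}_{0,-j}, \widehat{f}_{-j}, \widehat{f}_{-j}^{(r)}$ are random only through their dependence on the samples in $D_j^\complement$, while the empirical mean $\E_{D_j} l(\cdot)$ depends only on the validation samples in $D_j$, which under Assumption~\ref{asmp:iid} are independent of the training samples.

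First, I would fix $j \in [k]$ and introduce $\mathcal{T}_j \coloneqq \{(Y_i, X_i, Z_i) : i \in D_j^\complement\}$, the training data for fold $j$. By construction, $\widehat{f}_{0,-j}$ is measurable with respect to $\mathcal{T}_j$, hence becomes a deterministic element of $\cF_0$ once we condition on $\mathcal{T}_j$. Because the samples are i.i.d.\ (Assumption~\ref{asmp:iid}) and the folds are disjoint, the validation samples $\{(Y_i, X_i, Z_i) : i \in D_j\}$ are independent of $\mathcal{T}_j$. Applying Lemma~\ref{lem:hoeffding} conditionally with $f_0 = \widehat{f}_{0,-j}$ yields
\[
\P\Big( \big|\E_{D_j} l(\widehat{f}_{0,-j}) - \E l(\widehat{f}_{0,-j})\big| \ge t \,\Big|\, \mathcal{T}_j \Big) \le 2\exp\!\Big(-\tfrac{2 n t^2}{k M_0^2}\Big).
\]
Since the right-hand side does not depend on $\mathcal{T}_j$, taking expectation over $\mathcal{T}_j$ preserves the bound and gives the unconditional inequality. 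The second inequality is handled identically, using that $\widehat{f}_{-j} \in \cF$ is measurable with respect to $\mathcal{T}_j$.

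For the third inequality involving the randomized treatments, I would additionally include in the conditioning the randomization draws $(Z_i^{(r)})_{i \in D_j^\complement}$ used to fit $\widehat{f}_{-j}^{(r)}$. Under the Bernoulli design of Assumption~\ref{asmp:iid}, the $Z_i^{(r)}$ are i.i.d.\ and independent of everything else, so the validation tuples $(Y_i, X_i, Z_i^{(r)})_{i \in D_j}$ remain i.i.d.\ and independent of the fitted model after this enlarged conditioning. Applying Lemma~\ref{lem:hoeffding} to $(Y, X, Z^{(r)}; \widehat{f}_{-j}^{(r)})$ conditionally and integrating again gives the third bound. I do not anticipate a substantive obstacle; the only subtlety is ensuring the correct conditioning set so that the fitted model is measurable with respect to it while the validation fold remains i.i.d.\ and independent of it, which follows directly from the i.i.d./Bernoulli structure of Assumption~\ref{asmp:iid}.
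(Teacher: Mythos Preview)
Your proposal is correct and matches the paper's approach essentially step for step: condition on the training fold (the paper conditions directly on $\widehat{f}_{-j}$, which is equivalent since it is $\mathcal{T}_j$-measurable), invoke Lemma~\ref{lem:hoeffding} for the now-fixed function using independence of the validation fold, and integrate out. Your handling of the third inequality, with the enlarged conditioning to include $(Z_i^{(r)})_{i\in D_j^\complement}$, is a natural elaboration the paper leaves implicit.
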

\begin{proof}
% Note that under Assumption~\ref{asmp:iid}, the data is exchangeable and thus $(D_j, \widehat{f}_{-j}) \myeq{d} (D_1, \widehat{f}_{(-1)})$. Therefore the equality in the lemma holds. To prove the inequality, 
The proofs of three inequalities are identical and thus we only give the proof for $\widehat{f}_{-j}$. Since $(Y_i, X_i, Z_i)_{i\in D_j}$ and $\widehat{f}_{-j}$ are independent under Assumption~\ref{asmp:iid}, we can apply Lemma~\ref{lem:hoeffding} to obtain that with probability one, 
\begin{align*}
    \P\left( |\E_{D_j} l(\widehat{f}_{-j}) - \E l(\widehat{f}_{-j})| \ge t \mid \widehat{f}_{-j} \right) &\le 2\exp\left(-\frac{2 n t^2}{kM_0^2}\right),
\end{align*}
It implies the marginal concentration inequality, 
\begin{equation*}
    \P\left( |\E_{D_j} l(\widehat{f}_{-j}) - \E l(\widehat{f}_{-j})| \ge t  \right) \le 2\exp\left(-\frac{2 n t^2}{k M_0^2}\right).
\end{equation*}
% \begin{align*}
%     \P\left( |\CV(\YY;X) - \E \CV(\YY;X)| \ge t\right) &= \P\left( \left|\sum_j \E_{|D_j|} l(y,x,z; \widehat{f}_{-j}) - \sum_j \E l(y,x,z; \widehat{f}_{-j})\right| \ge kt\right)\\
%     &\stackrel{\text{(i)}}{\le} \sum_{j=1}^k \P\left( \left| \E_{|D_j|} l(y,x,z; \widehat{f}_{-j}) - \E l(y,x,z; \widehat{f}_{-j})\right| \ge t\right)\\
%     &\stackrel{\text{(ii)}}{=} k \P\left( \left| \E_{|D_1|} l(y,x,z; \widehat{f}_{(-1)}) - \E l(y,x,z; \widehat{f}_{(-1)})\right| \ge t\right) \\
%     &\stackrel{\text{(iii)}}{\le} 2k \exp\left(-\frac{2 |D_1| t^2}{M^2}\right)
% \end{align*}
% In the derivation above, (i) uses a union bound, (ii) uses Assumption~\ref{asmp:iid}, and (iii) follows from Lemma~\ref{lem:hoeffding}. 

% In addition, under Assumption~\ref{asmp:iid}, we have 
% \begin{equation*}
%     \E\CV(\YY;X) = 
% \end{equation*}
\end{proof}

% Step 3. Model Complexity.
As the last piece of preliminaries, we introduce concentration bounds that measures the prediction error, which is adapted from the statistical learning theory. Define
\begin{align*}
    f_{0}^* &\in \underset{f_0\in\cF_0}{\arg\min}~\E l(Y,X; f_0), \\
    f^* &\in \underset{f\in\cF}{\arg\min}~\E l(Y,X,Z; f), \\
    f_{r}^* &\in \underset{f\in\cF}{\arg\min}~\E l(Y,X,Z^{(r)}; f).
\end{align*}
In words, $f_0^*$, $f^*$, and $f_r^*$ are best predictors with respect to certain data distributions. In the definition above, $Z^{(r)}$ is an independent copy of $Z$, which will be used to analyze the randomized statistic. 
% as the best predictor inside $\cF_{\cX, \cZ}$. 
% By the empirical process theory, we have the following result. Its proof can be found in Chapter 4 of \cite{wainwright2019high}.
\begin{lemma}\label{lem:rademacher}
Suppose Assumptions~\ref{asmp:iid} and \ref{asmp:func_class} hold. Then for any $t>0$, we have 
\begin{align*}
    \P\left( \E l(\widehat{f}_{0, -j}) - \E l(f_0^*) > 4\cR_{n-n/k}(\cF_{0};\P) + 2t \right) \le \exp\left(-\frac{(k-1)nt^2}{2kM_0^2}\right),\\
    \P\left( \E l(\widehat{f}_{-j}) - \E l(f^*) > 4\cR_{n-n/k}(\cF;\P) + 2t \right) \le \exp\left(-\frac{(k-1)nt^2}{2kM_0^2}\right),\\
    \P\left( \E l(\widehat{f}_{-j}^{(r)}) - \E l(f_r^*) > 4\cR_{n-n/k}(\cF;\P^{(r)}) + 2t \right) \le \exp\left(-\frac{(k-1)nt^2}{2kM_0^2}\right).
\end{align*}
where $\cR_{n-n/k}(\cF_0;\P)$, $\cR_{n-n/k}(\cF;\P)$, and $\cR_{n-n/k}(\cF;\P^{(r)})$ are Rademacher complexities defined by \eqref{eq:rademacher_cplx}.
\end{lemma}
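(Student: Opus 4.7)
The three bounds in the lemma are all instances of the standard ERM excess-risk bound from statistical learning theory, applied on the training fold $D_j^\complement$ of size $n' = n - n/k = (k-1)n/k$. My plan is to prove them one at a time, writing out the case of $\widehat f_{-j}$ in detail and noting that the other two follow by literally the same argument with $\cF_0$ in place of $\cF = \cF_1$, or with $\P^{(r)}$ in place of $\P$. The latter substitution is permissible because, under the Bernoulli design (Assumption~\ref{asmp:iid}), the randomized triples $(Y_i, X_i, Z_i^{(r)})$ are independent across $i$, and by our additional regularity condition $\widehat f_{-j}^{(r)}$ is the ERM on the randomized training fold.

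The main steps would be the following. First, decompose the excess risk by adding and subtracting $\E_{D_j^\complement} l(\widehat f_{-j})$ and $\E_{D_j^\complement} l(f^*)$: the ERM property of $\widehat f_{-j}$ kills the middle bracket, and the two outer brackets are each bounded by the one-sided uniform deviation $\sup_{f \in \cF} |\E l(f) - \E_{D_j^\complement} l(f)|$, yielding excess risk $\le 2\sup$. Second, apply the classical symmetrization lemma (e.g., Mohri--Rostamizadeh--Talwalkar, Thm.~3.3) to obtain $\E \sup \le 2\, \cR_{n'}(\cF; \P)$, where the factor of $2$ absorbs the standard ghost-sample argument and matches the definition of $\cR_n$ in~\eqref{eq:rademacher_cplx}. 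Third, since the loss is bounded by $M_0$, replacing any single training observation changes the uniform deviation by at most $M_0/n'$, so McDiarmid's bounded-differences inequality gives
\begin{equation*}
\P\!\left(\sup_{f \in \cF} |\E l(f) - \E_{D_j^\complement} l(f)| > 2\cR_{n'}(\cF; \P) + t\right) \le \exp\!\left(-\tfrac{2 n' t^2}{M_0^2}\right) = \exp\!\left(-\tfrac{2(k-1) n t^2}{k M_0^2}\right).
\end{equation*}

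Combining the three steps, the event $\{\E l(\widehat f_{-j}) - \E l(f^*) > 4\cR_{n'}(\cF; \P) + 2t\}$ is contained in $\{\sup > 2\cR_{n'}(\cF; \P) + t\}$, so its probability is at most $\exp\bigl(-2(k-1) n t^2/(k M_0^2)\bigr)$, which is in fact stronger than (no larger than) the stated bound $\exp\bigl(-(k-1) n t^2/(2k M_0^2)\bigr)$. The remaining two inequalities are proved identically. I do not foresee a real obstacle: the argument is textbook empirical process theory and the bookkeeping is routine. The only subtlety is in the randomized case, where one must verify that McDiarmid and symmetrization still apply when the ``sample'' consists of $(Y_i, X_i, Z_i^{(r)})$; this is fine because $\P^{(r)}$ is a product distribution (the $Z_i^{(r)}$ are i.i.d.\ draws from $\Pn$ independent of the data), so the training triples remain i.i.d.\ and the loss remains bounded by $M_0$.
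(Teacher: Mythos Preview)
Your proposal is correct and follows essentially the same route as the paper: decompose the excess risk via the ERM inequality into $2\sup_{f}|\E l(f)-\E_{D_j^\complement}l(f)|$, then control the uniform deviation by Rademacher complexity plus a bounded-differences concentration term. The paper packages your symmetrization and McDiarmid steps into a single citation (Theorem~4.10 of Wainwright, 2019); your explicit treatment yields a slightly sharper exponent $\exp\bigl(-2(k-1)nt^2/(kM_0^2)\bigr)$ because you use the one-sided range $[0,M_0]$ for the loss, whereas the cited theorem is stated for a symmetric range, but both bounds suffice for the lemma as stated.
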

\begin{proof}
The proofs for different models are identical, so here we only give a proof for $\widehat{f}_{-j}$. To analyze the excess loss $\E l(\widehat{f}_{-j}) - \E l(f_*)$, notice that 
\begin{align*}
    \E l(\widehat{f}_{-j}) - \E l(f_*) &= \E l(\widehat{f}_{-j}) - \E_{D_j^\complement} l(\widehat{f}_{-j}) + \E_{D_j^\complement} l(\widehat{f}_{-j}) - \E_{D_j^\complement} l(f_*)+ \E_{D_j^\complement} l(f_*)- \E l(f_*) \\
    &\stackrel{\text{(i)}}{\le} \E l(\widehat{f}_{-j}) - \E_{D_j^\complement} l(\widehat{f}_{-j}) + \E_{D_j^\complement} l(f_*)- \E l(f_*) \\
    &\le 2 \sup_{f\in\cF} |\E_{D_j^\complement} l(f) - \E l(f)|,
\end{align*}
where (i) uses the fact that $\E_{D_j^\complement} l(\widehat{f}_{-j}) - \E_{D_j^\complement} l(f_*)\le 0$, since $\widehat{f}_{-j}$ minimizes the empirical loss $\E_{D_j^\complement} l(f)$. Since $l(f)$ is uniformly bounded by $M_0$ for any $(Y,X,Z)$ and $f\in\cF$, and hence we we apply Theorem 4.10 of \cite{wainwright2019high} to obtain
\begin{equation*}
    \sup_{f} |\E_{D_j^\complement} l(f) - \E l(f)|\le 2\cR_{|D_j^\complement|}(\cF;\P) + t
\end{equation*}
with probability at least $1 - \exp(-|D_j^\complement|t^2/2M_0^2)$. Hence,
\begin{equation*}
    \E l(\widehat{f}_{-j}) - \E l(f_*) \le 2\sup_{f} |\E_{D_j^\complement} l(f) - \E l(f)|\le 4\cR_{|D_j^\complement|}(\cF;\P) + 2t
\end{equation*}
with probability at least $1 - \exp(-|D_j^\complement|t^2/2M_0^2)$. That is, 
\begin{equation*}
\P\left( \E l(\widehat{f}_{-j}) - \E l(f^*) > 4\cR_{n-n/k}(\cF;\P) + 2t \right) \le \exp(-|D_j^\complement|t^2/2M_0^2) =  \exp\left(-\frac{(k-1)nt^2}{2kM_0^2}\right),
\end{equation*}
where the last equality follows from $|D_j^\complement| = n - n/k$.
\end{proof}
In words, with high probability, the prediction errors between fitted predictor and the best in-class predictor can be upper bounded by a certain Rademacher complexity. 

\subsection{Proof of Theorem~\ref{thm:power}}
First, we establish a key lemma on the type II error, which will be the main tool for proving Theorem~\ref{thm:power}.
\begin{lemma}\label{lem:power}
Suppose Assumption~\ref{asmp:iid} and Condition 1 in Assumption~\ref{asmp:func_class} hold with a boundedness constant $M>1$. Define 
\begin{multline*}
    L = \inf_{f\in \cF} \E (Y - f(X, Z^{(r)}))^2 - \inf_{f\in \cF} \E (Y - f(X, Z))^2 \\
    - 8\cR_{n-n/k}(\cF_{0};\P) - 4\cR_{n-n/k}(\cF;\P) - 4\cR_{n-n/k}(\cF;\P^{(r)}).
\end{multline*}
Then, under the alternative $H_1^{\glob}$, if $L > 0$, the type II error of our randomization test satisfies 
\begin{align*}
    \P(\mathrm{pval}>\alpha) &\le 4R \left(2k \exp\left(-\frac{n L^2}{32 k M_0^2}\right) + \exp\left(-\frac{(k-1)nL^2}{128 kM_0^2}\right)\right)\\
    &= O\left( k \exp\left(-\frac{C n L^2}{k M^4}\right) \right)
\end{align*}
for some small constant $C$.
\end{lemma}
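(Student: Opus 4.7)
My plan is to reduce $\P(\mathrm{pval} > \alpha)$ to a tail-event estimate on $T_n - t^{(r)}$, which I then control using the two preliminary concentration lemmas already established in the excerpt. Since the regularity assumption $R > 1/\alpha - 1$ implies that whenever $T_n$ strictly exceeds every randomized statistic we have $m_R = 0$ and hence $\mathrm{pval} = U/(1+R) \le 1/(1+R) < \alpha$, a union bound gives
\[
\P(\mathrm{pval} > \alpha) \;\le\; \P\bigl(\exists\, r\in[R]\colon t^{(r)} \ge T_n\bigr) \;\le\; \sum_{r=1}^R \P\bigl(T_n - t^{(r)} \le 0\bigr),
\]
reducing the problem to a per-randomization tail bound that I will show is at most $2k\exp(-cnL^2/(kM_0^2)) + \exp(-c(k-1)nL^2/(kM_0^2))$ for appropriate constants.

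Next, rather than merely exploiting the cancellation of the $\cM_0^{\glob}$ term in the difference $T_n - t^{(r)}$, I would bound $T_n$ from below and $t^{(r)}$ from above separately, so that the three Rademacher complexities $\cR_{n-n/k}(\cF_0;\P)$, $\cR_{n-n/k}(\cF;\P)$, and $\cR_{n-n/k}(\cF;\P^{(r)})$ enter exactly as in $L$. For each CV term of the form $\frac{1}{k}\sum_j \E_{D_j}[l(\widehat{f})]$, Lemma~\ref{lem:cvloss} bridges $\E_{D_j}[l(\widehat{f})]$ and $\E[l(\widehat{f})]$ up to a Hoeffding-type slack $\epsilon$ on an event of failure probability $O(k\exp(-2n\epsilon^2/(kM_0^2)))$, while Lemma~\ref{lem:rademacher} bridges $\E[l(\widehat{f})]$ and $\E[l(f^*)]$ up to the corresponding Rademacher complexity plus a slack $\epsilon$, on an event of failure probability $O(k\exp(-(k-1)n\epsilon^2/(2kM_0^2)))$. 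Combining these in the appropriate one-sided direction for each of $\CV(\cM_0^{\glob})$, $\CV(\cM_1^{\glob})$, and its randomized counterpart (using $\E l(\widehat{f}) \ge \E l(f^*)$ trivially on the ``cheap'' side and the full excess-risk bound on the ``expensive'' side, for both the observed and randomized statistics) and then subtracting yields, on the intersection of the good events,
\[
T_n - t^{(r)} \;\ge\; \Delta - 8\cR_{n-n/k}(\cF_{0};\P) - 4\cR_{n-n/k}(\cF;\P) - 4\cR_{n-n/k}(\cF;\P^{(r)}) - C\epsilon \;=\; L - C\epsilon,
\]
where $\Delta = \E l(f_r^*) - \E l(f^*) \ge 0$ under $H_1^{\glob}$ by Condition~2 of Assumption~\ref{asmp:func_class}.

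Finally, I would choose $\epsilon$ to be a small multiple of $L$ so that $C\epsilon \le L/2$; then $T_n - t^{(r)} \ge L/2 > 0$ on the good event, whose complement has probability $O(k\exp(-cnL^2/(kM_0^2)) + k\exp(-c(k-1)nL^2/(kM_0^2)))$. Combining with the union bound over $r\in[R]$ from the first paragraph, and using $M_0 = 4M^2$, yields the stated bound $4R\bigl(2k\exp(-nL^2/(32kM_0^2)) + \exp(-(k-1)nL^2/(128kM_0^2))\bigr) = O\bigl(k\exp(-CnL^2/(kM^4))\bigr)$. The main obstacle, in my view, is the careful bookkeeping: combining Lemmas~\ref{lem:cvloss} and~\ref{lem:rademacher} with the correct union-bound budget across the $k$ folds, across the $R$ randomizations, and in both directions of each inequality, so that the exponent of $\exp(-CnL^2/(kM^4))$ is not inadvertently degraded and the particular coefficients of the Rademacher complexities defining $L$ are exactly reproduced.
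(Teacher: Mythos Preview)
Your proposal is correct and mirrors the paper's proof: reduce $\P(\mathrm{pval}>\alpha)$ to $R\cdot\P(t^{(r)}\ge T_n)$ via $R>1/\alpha-1$, concentrate $T_n$ and $t^{(r)}$ separately around $\Delta_0=\E l(f_0^*)-\E l(f^*)$ and $\Delta_r=\E l(f_0^*)-\E l(f_r^*)$ using Lemmas~\ref{lem:cvloss} and~\ref{lem:rademacher}, and then set the Hoeffding slack proportional to $L$ (the paper takes $t=L_0/8$ with $L_0\ge L$).

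Two minor remarks. First, the lemma assumes only Condition~1 of Assumption~\ref{asmp:func_class}, so your appeal to Condition~2 for $\Delta\ge 0$ is not available here; it is also unnecessary, since the hypothesis $L>0$ together with nonnegativity of the Rademacher terms already forces $\Delta>0$. Second, your ``cheap side / expensive side'' one-sided bookkeeping, carried out literally, would yield \emph{smaller} Rademacher coefficients than the $8\cR(\cF_0)+4\cR(\cF;\P)+4\cR(\cF;\P^{(r)})$ appearing in $L$ (e.g.\ $\cR(\cF;\P^{(r)})$ never enters if you lower-bound $\CV(\cM_1^{(r)})$ by the trivial $\E l(\widehat f^{(r)})\ge\E l(f_r^*)$). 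The paper instead uses two-sided bounds involving $4\max\{\cR(\cF_0),\cR(\cF)\}$ and then relaxes the max to a sum to land exactly on the stated $L$; either route proves the lemma.
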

\begin{proof}[Proof of Lemma~\ref{lem:power}]
First, notice that
\begin{equation}\label{eq:type2_bound1}
    \begin{aligned}
    \P(\mathrm{pval} > \alpha) &= \P\left(\frac{1}{R+1}\left(\sum_{r = 1}^{R} \mathbbm{1}\{t_n(\YY, \ZZ^{(r)}, \XX)>t_n(\YY, \ZZ, \XX)\} + U(1+m_R)\right) > \alpha\right)\\
    &\stackrel{\text{(i)}}{\le} \P\left(\sum_{r = 1}^{R} \mathbbm{1}\{t_n(\YY, \ZZ^{(r)}, \XX)>t_n(\YY, \ZZ, \XX)\} + 1> (R+1)\alpha\right)\\
    &\stackrel{\text{(ii)}}{\le} \P\left(\exists r, t_n(\YY, \ZZ^{(r)}, \XX)>t_n(\YY, \ZZ, \XX)\right),
\end{aligned}
\end{equation}
where (i) follows by the assumption that $m_R=0$ with probability one, and (ii) follows from $(R+1)\alpha>1$. Then we apply a union bound to obtain
\begin{equation}\label{eq:type2_bound2}
    \begin{aligned}
    \P\left(\exists r, t_n(\YY, \ZZ^{(r)}, \XX)>t_n(\YY, \ZZ, \XX)\right) &\le \sum_{r = 1}^{R} \P\left(t_n(\YY, \ZZ^{(r)}, \XX)>t_n(\YY, \ZZ, \XX)\right) \\
    &\stackrel{\text{(i)}}{=} R\cdot\P\left(t_n(\YY, \ZZ^{(r)}, \XX)>t_n(\YY, \ZZ, \XX)\right)\\
    &\stackrel{\text{(ii)}}{\le} R\Bigl(\P\bigl(t_n(\YY, \ZZ^{(r)}, \XX)\ge c\bigr) + \P\bigl(t_n(\YY, \ZZ, \XX)\le c\bigr)\Bigr).
\end{aligned}
\end{equation}
In the derivation above, (i) uses the fact that $\ZZ^{(r)}$ are i.i.d.; (ii) follows from the identity $\mathbbm{1}\{X > Y\} \le \mathbbm{1}\{X\ge c\} + \mathbbm{1}\{Y\le c\}$. Based on \eqref{eq:type2_bound1} and \eqref{eq:type2_bound2}, we have shown that 
\begin{equation*}
    \text{type II error} = \P(\mathrm{pval} > \alpha) \le R (\underbrace{F_{t_n}(c)}_\text{CDF of $t_n(Y,Z,\bX)$} + 1 - \underbrace{F_{t_n^{(r)}}(c)}_\text{CDF of $t_n(\YY, \ZZ^{(r)}, \XX)$}).
\end{equation*}
That is, it suffices to analyze the sampling distribution $t_n(\YY, \ZZ, \XX)$, and the randomization distribution $t_n(\YY, \ZZ^{(r)}, \XX)$.

\noindent \underline{Step 1. Analyze $t_n(\YY, \ZZ, \XX)$.}
Here, we show that the sampling distribution $t_n(\YY, \ZZ, \XX)$ concentrates around a constant $\Delta_0$. For $t_n(\YY, \ZZ, \XX) = \CV(\YY; \XX) - \CV(\YY;\XX, \ZZ)$, notice that 
\begin{align*}
    \CV(\YY; \XX) &= \frac{1}{k}\sum_{j=1}^k \E_{D_j} l(\widehat{f}_{0, -j}), \quad \CV(\YY;\XX, \ZZ) = \frac{1}{k}\sum_{j=1}^k \E_{D_j} l(\widehat{f}_{-j}),\\
    &\Rightarrow t_n(\YY, \ZZ, \XX) = \frac{1}{k}\sum_{j=1}^k \E_{D_j} \left(l(\widehat{f}_{0, -j}) - l(\widehat{f}_{-j})\right).
\end{align*}

As the first step, we reduce the empirical loss to the population loss through Lemma~\ref{lem:cvloss}. Specifically, notice that 
\begin{equation}\label{eq:tobs_step1}
    \begin{aligned}
    t_n(\YY, \ZZ, \XX) &= \frac{1}{k}\sum_{j=1}^k \E \left(l(\widehat{f}_{0, -j}) - l(\widehat{f}_{-j})\right) \\
    &+ \frac{1}{k}\sum_{j=1}^k \left( \E_{D_j} l(\widehat{f}_{0, -j}) - \E l(\widehat{f}_{0,-j})\right) + \frac{1}{k}\sum_{j=1}^k \left( \E l(\widehat{f}_{-j}) - \E_{D_j} l(\widehat{f}_{-j})\right).
\end{aligned}
\end{equation}
By applying Lemma~\ref{lem:cvloss} to $\E_{D_j}l(\widehat{f}_{0, -j})$ and $\E_{D_j} l(\widehat{f}_{-j})$ separately, we obtain
\begin{equation}\label{eq:cvloss}
    |\E_{D_j} l(\widehat{f}_{0, -j}) - \E l (\widehat{f}_{0, -j})|  < t, \quad |\E_{D_j} l(\widehat{f}_{-j}) - \E l (\widehat{f}_{-j})|  < t, 
\end{equation}
each of which holds with probability at least $1 - 2\exp\left(-\frac{2 n t^2}{k M_0^2}\right)$. Then, we apply \eqref{eq:cvloss} to \eqref{eq:tobs_step1} and obtain
\begin{align*}
    &\P\left(\left|t_n(\YY, \ZZ, \XX) -  \frac{1}{k}\sum_{j=1}^k \E \left(l(\widehat{f}_{0, -j}) - l(\widehat{f}_{-j})\right)\right| > 2t \right) \\
    &= \P\left(\left|\sum_{j=1}^k \left( \E_{D_j} l(\widehat{f}_{0, -j}) - \E l(\widehat{f}_{0,-j})\right) + \sum_{j=1}^k \left( \E l(\widehat{f}_{-j}) - \E_{D_j} l(\widehat{f}_{-j})\right)\right| > 2kt \right) \\
    &\stackrel{\text{(i)}}{\le} \sum_j \P\left(\left| \E_{D_j} l(\widehat{f}_{0, -j}) - \E l(\widehat{f}_{0,-j})\right| > t \right) + \sum_j \P\left(\left|\E l(\widehat{f}_{-j}) - \E_{D_j} l(\widehat{f}_{-j})\right| > t \right) \\
    &\le 4k \exp\left(-\frac{2 n t^2}{k M_0^2}\right)
\end{align*}
where (i) follows from a union bound. 

Since the data $(\YY_i, \ZZ_i, X_i)$ are exchangeable, we have $\E l(\widehat{f}_{0, -j}) = \E l(\widehat{f}_{0, -1})$ and $\E l(\widehat{f}_{-j}) = \E l(\widehat{f}_{-1})$. Therefore,
\begin{equation}\label{eq:tobs_result1}
    \left|t_n(\YY, \ZZ, \XX) -  \E \left(l(\widehat{f}_{0, -1}) - l(\widehat{f}_{-1})\right)\right| < 2t 
\end{equation}
with probability at least $1 - 4k \exp\left(-\frac{2n t^2}{k M_0^2}\right)$. 

As the second step, we connect the empirical optimizers with the population ones through Lemma~\ref{lem:rademacher}. Note that 
\begin{equation}\label{eq:tobs_step2}
    \begin{aligned}
    \E l(\widehat{f}_{0, -1}) - \E l(\widehat{f}_{-1}) &= \E l(f_{0}^*) - \E l(f^*)\\
    &+ \left(\E l(\widehat{f}_{0, -1}) - \E l({f}_{0}^*)\right) + \left(\E l(f^*) - \E l(\widehat{f}_{-1})\right).
\end{aligned}
\end{equation}
By applying Lemma~\ref{lem:rademacher} to $\widehat{f}_{0,-1}$ and $\widehat{f}_{-1}$ separately, we obtain
\begin{equation}\label{eq:rademacher}
\begin{aligned}
        0 \le \E l(\widehat{f}_{0, -1}) - \E l({f}_{0}^*) &< 4\cR_{n-n/k}(\cF_{0};\P) + 2t, \\
        0 \le \E l(\widehat{f}_{-1}) - \E l({f}^*) &< 4\cR_{n-n/k}(\cF;\P) + 2t,
\end{aligned}
\end{equation}
each of which holds with probability at least $1 - \exp\left(-\frac{(k-1)nt^2}{2kM_0^2}\right)$. Then, we apply \eqref{eq:rademacher} to \eqref{eq:tobs_step2} using a union bound, and obtain
\begin{equation}\label{eq:tobs_result2}
    \left|\E l(\widehat{f}_{0, -1}) - \E l(\widehat{f}_{-1}) - \left(\E l(f_{0}^*) - \E l(f^*)\right)\right| < 4\max\{\cR_{n-n/k}(\cF_{0};\P), \cR_{n-n/k}(\cF);\P\} + 2t
\end{equation}
with probability at least $1 - 2\exp\left(-\frac{(k-1)nt^2}{2kM_0^2}\right)$.

Combining two concentration bounds in \eqref{eq:tobs_result1} and \eqref{eq:tobs_result2}, we have 
\begin{align*}
    \left|t_n(\YY, \ZZ, \XX) - \Delta_0\right| &\le 4\max\{\cR_{n-n/k}(\cF_{0};\P), \cR_{n-n/k}(\cF;\P)\} + 4t, \\
    \Delta_0 &\coloneqq \E l(f_{0}^*) - \E l(f^*),
\end{align*}
with probability at least $1 - 4k \exp\left(-\frac{2 n t^2}{k M_0^2}\right) - 2\exp\left(-\frac{(k-1)nt^2}{2kM_0^2}\right)$. In the expression above, $\Delta_0$ is introduced to quantify the magnitude of $t_n(\YY, \ZZ, \XX)$. 

\noindent\underline{Step 2. Analyze $t_n(\YY, \ZZ^{(r)}, \XX)$.}
We show that the randomization distribution $t_n(\YY, \ZZ^{(r)}, \XX)$ concentrates around $\Delta_r$, which is potentially smaller than $\Delta_0$. Following a similar analysis as above, we first apply Lemma~\ref{lem:cvloss} to obtain that 
\begin{equation}\label{eq:trand_result1}
    \left|t_n(\YY, \ZZ^{(r)}, \XX) -  \E \left(l(\hat{f}_{0, -1}) - l(\hat{f}_{-1}^{(r)})\right)\right| < 2t 
\end{equation}
with probability at least $1 - 4k \exp\left(-\frac{2 n t^2}{k M_0^2}\right)$. Note that $\widehat{f}_{0, -j}$ does not depend on the randomized treatment assignments $Z^{(r)}$. 

Next we apply Lemma~\ref{lem:rademacher} to obtain that 
\begin{equation}\label{eq:trand_result2}
    \left|\E \left(l(\hat{f}_{0, -1}) - l(\hat{f}_{-1}^{(r)})\right) - \left(\E l(f_{0}^{*}) - \E l(f_r^{*})\right)\right| < 4\max\{\cR_{n-n/k}(\cF_{0};\P), \cR_{n-n/k}(\cF;\P^{(r)})\} + 2t
\end{equation}
with probability at least $1 - 2\exp\left(-\frac{(k-1)nt^2}{2kM_0^2}\right)$.

Combining two concentration bounds in \eqref{eq:trand_result1} and \eqref{eq:trand_result2}, we have 
\begin{align*}
    \left|t_n(\YY, \ZZ^{(r)}, \XX) - \Delta_r\right| &< 4\max\{\cR_{n-n/k}(\cF_{0};\P), \cR_{n-n/k}(\cF;\P^{(r)})\} + 4t,\\
    \Delta_r &\coloneqq \E l(f_{0}^*) - \E l(f_r^*)
\end{align*}
with probability at least $1 - 4k \exp\left(-\frac{2 n t^2}{k M_0^2}\right) - 2\exp\left(-\frac{(k-1)nt^2}{2kM_0^2}\right)$. In the expression above, $\Delta_r$ is introduced to quantify the magnitude of $t_n(\YY, \ZZ^{(r)}, \XX)$. Next we will quantify the difference between $\Delta_0$ and $\Delta_r$.

\noindent\underline{Step 3. Derive final results.}
Based on Step 1 above, $t_n(\YY,\ZZ,\XX)$ satisfies a high probability bound
\begin{align*}
    \P\left(t_n(\YY, \ZZ, \XX) < M_1- 4t\right) &\le 4k \exp\left(-\frac{2 n t^2}{k M_0^2}\right) + 2\exp\left(-\frac{(k-1)nt^2}{2kM_0^2}\right), \\
    M_1 &\coloneqq \Delta_0 - 4\max\{\cR_{n-n/k}(\cF_{0};\P), \cR_{n-n/k}(\cF;\P)\}
\end{align*}
for any $t>0$. Based on Step 2, $t_n(\YY, \ZZ^{(r)}, \XX)$ satisfies
\begin{align*}
    \P\left(t_n(\YY, \ZZ^{(r)}, \XX) > M_2 + 4t\right) &\le 4k \exp\left(-\frac{2 n t^2}{k M_0^2}\right) + 2\exp\left(-\frac{(k-1)nt^2}{2kM_0^2}\right), \\
    M_2 &\coloneqq \Delta_r + 4\max\{\cR_{n-n/k}(\cF_{0};\P), \cR_{n-n/k}(\cF;\P^{(r)})\}
\end{align*}
for any $t>0$. Recall that the type II error can be upper bounded by certain CDFs at a specified value $c$ \eqref{eq:type2_bound2}. Here, we set $c = (M_1 + M_2)/2$ and $L_0 = M_1 - M_2$. Note that 
\begin{align*}
 L_0 &= \Delta_0 - \Delta_r - 4\max\{\cR_{n-n/k}(\cF_{0};\P), \cR_{n-n/k}(\cF;\P)\} - 4\max\{\cR_{n-n/k}(\cF_{0};\P), \cR_{n-n/k}(\cF;\P^{(r)})\} \\
 &\ge \Delta_0 - \Delta_r - 8\cR_{n-n/k}(\cF_{0};\P) - 4\cR_{n-n/k}(\cF;\P) - 4\cR_{n-n/k}(\cF;\P^{(r)}) \eqqcolon L.
\end{align*}
Since $L>0$, we have $L_0>0$. We obtain 
\begin{align*}
    \P\bigl(t_n(\YY, \ZZ, \XX)\le c\bigr) & = \P\left(t_n(\YY, \ZZ, \XX)\le \frac{M_1 + M_2}{2}\right) \\
    &= \P\left(t_n(\YY, \ZZ, \XX)\le M_1 - 4 \frac{L_0}{8}\right) \\
    &\le 4k \exp\left(-\frac{2n L_0^2}{64 k M_0^2}\right) + 2\exp\left(-\frac{(k-1)nL_0^2}{2\times 64 kM_0^2}\right)\\
    &= 4k \exp\left(-\frac{n L_0^2}{32 k M_0^2}\right) + 2\exp\left(-\frac{(k-1)nL_0^2}{128 kM_0^2}\right).
\end{align*}
Similarly, 
\begin{align*}
    \P\bigl(t_n(\YY, \ZZ^{(r)}, \XX)\ge c\bigr) & = \P\left(t_n(\YY, \ZZ^{(r)}, \XX)\ge \frac{M_1 + M_2}{2}\right) \\
    &= \P\left(t_n(\YY, \ZZ^{(r)}, \XX)\ge M_2 + 4 \frac{L_0}{8}\right) \\
    &\le 4k \exp\left(-\frac{n L_0^2}{32 k M_0^2}\right) + 2\exp\left(-\frac{(k-1)nL_0^2}{128 kM_0^2}\right).
\end{align*}
Based on Equations~\eqref{eq:type2_bound1} and \eqref{eq:type2_bound2}, we can upper bound the type II error by
\begin{align*}
    \P (\mathrm{pval}>\alpha) \le 4R \left(2k \exp\left(-\frac{n L_0^2}{32 k M_0^2}\right) + \exp\left(-\frac{(k-1)nL_0^2}{128 kM_0^2}\right)\right)\\
    \le 4R \left(2k \exp\left(-\frac{n L^2}{32 k M_0^2}\right) + \exp\left(-\frac{(k-1)nL^2}{128 kM_0^2}\right)\right).
\end{align*}
In the inequality, we replace $L_0$ by $L$ since $L_0\ge L$.

To further simplify the result, note that for $k\ge 2$, 
\begin{equation*}
    \frac{k-1}{k} \ge \frac{1}{k}.
\end{equation*}
Therefore, 
\begin{align*}
    \P (\mathrm{pval}>\alpha)
    &\le 4R \left(2k \exp\left(-\frac{n L^2}{32 k M_0^2}\right) + \exp\left(-\frac{(k-1)nL^2}{128 kM_0^2}\right)\right)\\
    &\le 4R \left(2k \exp\left(-\frac{n L^2}{32 k M_0^2}\right) + \exp\left(-\frac{nL^2}{128 kM_0^2}\right)\right)\\
    &\le 4R (2k + 1) \exp\left(-\frac{nL^2}{128 kM_0^2}\right)\\
    &= O\left( k \exp\left(-\frac{nL^2}{128 k \times 16 M^4}\right) \right) = O\left( k \exp\left(-\frac{0.0004 n L^2}{k M^4}\right) \right).
\end{align*}
The last line uses the fact $M_0 = 4M^2$. This completes the proof.
\end{proof}

Next we prove Theorem~\ref{thm:power} in the main text using Lemma~\ref{lem:power}.
\begin{proof}[Proof of Theorem~\ref{thm:power}]
Note that $\Delta = \Delta_0 - \Delta_r$ by definiton. To apply Lemma~\ref{lem:power}, we need to verify
\begin{align*}
    L = \Delta_0 - \Delta_r - 8\cR_{n-n/k}(\cF_{0};\P) - 4\cR_{n-n/k}(\cF;\P) - 4\cR_{n-n/k}(\cF;\P^{(r)}) > 0.
\end{align*}

First we show that $\Delta_0 - \Delta_r \ge 0$ under Condition 2 of Assumption~\ref{asmp:func_class}. By definition
\begin{equation*}
    \Delta_0 - \Delta_r = \E l(f_r^*) - \E l(f^*) = \inf_f \E (Y 
- f(X, Z^{(r)}))^2 - \inf_f \E (Y - f(X, Z))^2.
\end{equation*}
To simplify the notation, for any $f\in\cF$, we define $\delta_0(x) = f(x,0)$ and $\delta(x) = f(x, 1) - f(x, 0)$. Then, we have
\begin{align*}
\E (Y - f(X, Z))^2 &= \E (\mu + b(X) + Z h(X) - \delta_0(X) - Z \delta(X))^2\\
&= \E (\mu + b(X) - \delta_0(X) + \pi (h(X)-\delta(X)) + (Z - \pi) (h(X) - \delta(X))^2\\
&= \E (\mu + b(X) - \delta_0(X) + \pi (h(X)-\delta(X)))^2 + \pi(1 - \pi) \E (h(X) - \delta(X))^2,
\end{align*}
where the last equality uses the fact $\E (Z-\pi)|X = 0$ from Assumption~\ref{asmp:iid}. Similarly, we have
\begin{align*}
   \E (Y - f(X, Z^r))^2 &= \E (\mu + b(X) + Z h(X) - \delta_0(X) - Z^r \delta(X))^2\\
&= \E (\mu + b(X) - \delta_0(X) + \pi (h(X)-\delta(X)) + (Z - \pi) h(X) + (Z^r - \pi) \delta(X))^2\\
&= \E (\mu + b(X) - \delta_0(X) + \pi (h(X)-\delta(X)))^2 + \pi(1 - \pi) \E (h^2(X) + \delta^2(X)), 
\end{align*}
where the last equality uses the fact $\E (Z-\pi)|X = \E (Z^r-\pi)|X= 0$ and $Z\indep Z^r|X$. Therefore, for any $f\in\cF$, 
\begin{equation*}
\E (Y - f(X, Z^r))^2 - \E (Y - f(X, Z))^2 = 2\pi(1-\pi) \E h(X)\delta(X).
\end{equation*}
Under Condition 2 of Assumption~\ref{asmp:func_class}, we have
\begin{equation*}
    \E h(X) (f_r^*(X,1) - f_r^*(X,0)) \ge 0.
\end{equation*}
Therefore, 
\begin{align*}
    \inf_f \E(Y - f(X,Z^{(r)}))^2 &= \E(Y - f_r^*(X,Z^{(r)}))^2\\
    &= \E(Y - f_r^*(X,Z))^2 + 2\pi(1-\pi)  \E h(X) (f_r^*(X,1) - f_r^*(X,0)) \\
    &\ge \E(Y - f_r^*(X,Z))^2 \ge \inf_f \E(Y - f(X,Z))^2.
\end{align*}
The second equality uses the fact that $\delta(x) = f(x, 1) - f(x, 0)$ for any $f$. Then, we obtain $\Delta = \Delta_0 - \Delta_r \ge 0$.

Then, under Condition 3 of Assumption~\ref{asmp:func_class}, the Rademacher complexities $\cR_n(\cF_0;\P)$, $\cR_n(\cF;\P)$, $\cR_n(\cF;\P^{(r)})$ are of order $o(1)$. Since $k$ is fixed, the condition above implies that
\begin{equation*}
    \cR_{n-n/k}(\cF_0;\P) = o(1), \quad \cR_{n - n/k}(\cF;\P) = o(1),\quad \cR_{n - n/k}(\cF;\P^{(r)}) = o(1).
\end{equation*}
% based on the definition of $\cF_0$ and the Rademacher complexity \eqref{eq:rademacher_cplx}, we have
% \begin{align*}
%     \cR_n(\cF_0) = \frac{1}{n} \underset{X, \sigma}{\E} \left(\sup_{f_0\in\cF_0} \left|\sum_{i=1}^n \sigma_i f_0(X)\right|\right) = \frac{1}{n} \underset{X, \sigma}{\E} \left(\sup_{f\in\cF, z = 0, 1} \left|\sum_{i=1}^n \sigma_i f(X, z)\right|\right). 
% \end{align*}
% \todo{how to bound rademacher complexity?}
That is, we have $L - \Delta = o(1)$. Therefore, under Assumptions~\ref{asmp:iid} and \ref{asmp:func_class}, if $\Delta > 0$, we can apply Lemma~\ref{lem:power} and replace $L$ by $\Delta$. This gives the type II error bound in Theorem~\ref{thm:power}. If $\Delta = 0$, the type II error bound in Theorem~\ref{thm:power} is greater than one, and thus automatically holds.
\end{proof}

\subsection{Proofs of Propositions~\ref{prop:res} and \ref{prop:relative_efficiency}}
In this section, we prove the power comparison between ML-FRT and the residualized approach (RES) in Section~\ref{sec:res}. Recall that the function classes $\cF_1$ and $\cF_0$ for RES are defined in Section~\ref{sec:res}, and we use $\cF = \cF_1$ for simplicity.
First we prove Proposition~\ref{prop:res}. 
\begin{proof}[Proof of Proposition~\ref{prop:res}]
In the proof, we focus on proving Theorem~\ref{thm:power} and deriving $\Delta$ for the residualized approach (RES). At a high level, we proceed in three steps.
\begin{enumerate}
    \item Apply Theorem~\ref{thm:power} to an idealized RES approach with fitted model $m^*$. 
    \item Establish asymptotic equivalence between the tests induced by $m^*$ and $\widehat{m}$, such that the result of the idealized RES in Step 1 can be translated to an error bound for the actual RES with $\widehat{m}$.
    \item Simplify $\Delta^{\res}$. 
\end{enumerate}

The result for ML-FRT can be obtained in a similar manner, which we discuss in the end of the proof.

\noindent \underline{Step 1. Prove Theorem~\ref{thm:power} for an ideal RES approach with $m^*$.}
We consider an ideal RES approach, which uses the best predictor $m^*$ to compute the residuals
\begin{align*}
    m^*(x) &= b + \mu(x) + \pi h(x),\\
    \widehat{e}_i &= Y_i - m^*(X_i) = (Z_i - \pi) h(X_i) + \eps_i.
\end{align*}
Then, it is easy to notice that the ideal RES is a special case of ML-FRT, by replacing $Y_i$ with $\widehat{e}_i$ and replacing $X_i$ with constant one. To apply Theorem~\ref{thm:power} to the ideal RES, it suffices to verify Assumptions~\ref{asmp:iid} and \ref{asmp:func_class} under the ideal RES setup. 

We first verify that the residuals $\widehat{e}_i$ are bounded. Under Assumption \ref{asmp:iid_res}, for any $i$, we have
$$
|\widehat{e}_i|\le |Z_i-\pi||h(X_i)| +  |\eps_i|\le |h(X_i)|+|\eps_i| \le M.
$$
Therefore, ``Assumption~\ref{asmp:iid}'' for ideal RES holds. 

Next, we verify Assumption~\ref{asmp:func_class} on function classes $\cF$ and $\cF_0$, that is,
\begin{enumerate}
    \item $\cF_0, \cF$ are uniformly bounded. 
    \item For a given best predictor $f_r^*\in \inf_{f\in\cF} \E(\widehat{e}_i - f(Z_i^{(r)}))^2$, we have $\E h(X_i) (f_r^*(1) - f_r^*(0)) \ge 0$.
    \item $\cR_n(\cF;\P), \cR_n(\cF_0;\P), \cR_n(\cF;\P^{(r)}) = o(1)$.  
\end{enumerate}
By construction, they are uniformly bounded by $M$ so that Condition 1 is satisfied. Moreover,
\begin{align*}
    \inf_{f\in\cF} \E(\widehat{e}_i - f(Z_i^{(r)}))^2 &= \inf_{\beta_1, \beta_2\in[-M/2, M/2]} \E ((Z_i-\pi)h(X_i) + \eps_i - \beta_1 - \beta_2 Z_i^{(r)})^2\\
    &= \inf_{\beta_1, \beta_2\in[-M/2, M/2]} \E ((Z_i-\pi)h(X_i) + \eps_i - \beta_1 - \beta_2 \pi - \beta_2 (Z_i^{(r)}-\pi))^2\\
    &= \inf_{\beta_1, \beta_2\in[-M/2, M/2]} \E ((Z_i-\pi)h(X_i) + \eps_i - \beta_1 - \beta_2 \pi)^2 + \E (\beta_2 (Z_i^{(r)}-\pi))^2\\
    &\ge \inf_{\gamma\in\R} \E ((Z_i-\pi)h(X_i) + \eps_i - \gamma)^2 = \var((Z_i - \pi)h(X_i) + \eps_i).
\end{align*}
Therefore, the optimal predictor $f_r^*$ on the randomized data satisfies $\beta_1 = \beta_2 = 0$. This implies
\begin{equation*}
    \E h(X) (f_r^*(1) - f_r^*(0)) = \beta_2 \E h(X)=0~.
\end{equation*}
Therefore, Condition 2 is satisfied. Lastly, given the linear function classes $\cF$ and $\cF_0$, it is easy to verify that their Rademacher complexity converges to zero \citep{wainwright2019high}. Thus, ``Assumption~\ref{asmp:func_class}'' for ideal RES holds. 

Since all assumptions required for Theorem~\ref{thm:power} are satisfied with outcomes $\widehat{e}_i$ and function classes $\cF$, $\cF_0$, Theorem~\ref{thm:power} holds for the ideal RES test with
\begin{equation*}
    \Delta = \inf_{f\in \cF} \E (\widehat{e} - f(Z^{(r)}))^2 - \inf_{f\in \cF} \E (\widehat{e} - f(Z))^2.
\end{equation*}

\noindent \underline{Step 2. Asymptotic equivalence between $m^*$ and $\widehat{m}$.}
Here, we show that the actual RES has the same type II error bound established above for the ideal RES. 
Based the proof of Theorem~\ref{thm:power} (Lemma~\ref{lem:power}), to analyze the actual RES approach, we need to bound
\begin{equation*}
    \P(\mathrm{pval}>\alpha) \le R\Bigl(\P\bigl(t_n^{\res}(\YY, \ZZ^{(r)}, \XX; \widehat{m})\ge c\bigr) + \P\bigl(t_n^{\res}(\YY, \ZZ, \XX; \widehat{m})\le c\bigr)\Bigr).
\end{equation*}
For brevity, let $\widehat{T}^{(r)}$ and $\widehat{T}$ denote the test statistic under fitted model $\widehat{m}$, and let $T^{(r)}$ and ${T}$ denote the test statistic under the best model ${m}^*$. We relate $\widehat{m}$ with $m^*$ to obtain 
\begin{align*}
    \P\bigl(\widehat{T}^{(r)}\ge c\bigr) + \P\bigl(\widehat{T}\le c\bigr) &= \P\bigl(\widehat{T}^{(r)} - T^{(r)} + T^{(r)}\ge c\bigr) + \P\bigl(\widehat{T} - T + T\le c\bigr) \\
    &\le \P\bigl(\widehat{T}^{(r)} - T^{(r)} \ge \delta_n\bigl) + \P\bigl(T^{(r)}\ge c-\delta_n\bigr)\\
    &+ \P\bigl(\widehat{T} - T \le -\delta_n\bigr) +  \P\bigl( T \le c+\delta_n\bigr).
\end{align*}
Then, to show the actual RES has the same power result, it suffices to prove 
\begin{align}
    \P(\widehat{T}^{(r)} - T^{(r)} \ge \delta_n) + \P(\widehat{T} - T \le -\delta_n) &= o(1), \label{eq:asymp_diff}\\
    \P(T^{(r)}\ge c-\delta_n) + \P( T \le c+\delta_n) &\asymp \P(T^{(r)}\ge c) + \P( T \le c)\label{eq:asymp_T}.
\end{align}
The equations above imply that the probability under $\widehat{m}$ is asymptotically equal to that under $m^*$. Together with Step 1, we know that the bound in Theorem~\ref{thm:power} holds for the RES with $\widehat{m}$.

To prove Equations~\eqref{eq:asymp_diff} and \eqref{eq:asymp_T}, we define
\begin{equation*}
    C_n = \sup_{x\in\cX} |\widehat{m}(x) - m^*(x)|,
\end{equation*}
which captures the difference between $\widehat{m}$ and $m^*$. 
Moreover, Assumption~\ref{asmp:consistent} indicates that $C_n = o(1)$. The following lemma provides a way to specify $\delta_n$ such that  Equations~\eqref{eq:asymp_diff} and \eqref{eq:asymp_T} hold.
\begin{lemma}\label{lem:consistent}
Under Assumption~\ref{asmp:consistent}, there exists a fixed constant $B$ such that with probability one, we have
\begin{equation*}
|\widehat{T}^{(r)} - T^{(r)}| \le B C_n, \quad |\widehat{T} - T| \le B C_n.
\end{equation*}
\end{lemma}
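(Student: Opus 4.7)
My plan is to view $\widehat{T} - T$ and $\widehat{T}^{(r)} - T^{(r)}$ as perturbations of a cross-validation functional under a shift of the response vector from $e^*_i \coloneqq Y_i - m^*(X_i)$ to $\widehat{e}_i \coloneqq Y_i - \widehat{m}(X_i)$. By the definition of $C_n$ we have $\max_i |\widehat{e}_i - e^*_i| \le C_n$, and under Assumptions~\ref{asmp:iid_res} and \ref{asmp:consistent} all residuals lie in $[-2M, 2M]$ on a probability-one event for large $n$. Denote by $K_j(y) \coloneqq \mathrm{CV}(\cM_j^{\res}; y)$ the $k$-fold CV error of $\cM_j^{\res}$ evaluated on response vector $y$. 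Then $\widehat{T} - T = \bigl(K_0(\widehat{e}) - K_0(e^*)\bigr) - \bigl(K_1(\widehat{e}) - K_1(e^*)\bigr)$, and it suffices to bound each $|K_j(\widehat{e}) - K_j(e^*)| \le B' C_n$ almost surely, the argument for $\widehat{T}^{(r)} - T^{(r)}$ being identical after replacing $Z$ by $Z^{(r)}$ in the regressor of $\cM_1^{\res}$.

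Fix a fold $D_j$ and write $\ell_j(f;y) \coloneqq |D_j|^{-1}\sum_{i \in D_j}(y_i - f(Z_i))^2$. Let $\widehat{g}_j$ and $g_j^*$ denote the ERMs on $D_j^\complement$ using response vectors $\widehat{e}$ and $e^*$, respectively. The per-fold difference decomposes as
\begin{equation*}
\ell_j(\widehat{g}_j; \widehat{e}) - \ell_j(g_j^*; e^*)
\;=\; \bigl[\ell_j(\widehat{g}_j; \widehat{e}) - \ell_j(\widehat{g}_j; e^*)\bigr]
\;+\; \bigl[\ell_j(\widehat{g}_j; e^*) - \ell_j(g_j^*; e^*)\bigr].
\end{equation*}
Using the elementary identity $(a-c)^2 - (b-c)^2 = (a-b)(a+b-2c)$, the first bracket (``response shift'') is bounded by $C_n$ times a factor involving $|\widehat{e}_i|$, $|e^*_i|$, and $|\widehat{g}_j(Z_i)|$, all of which are $O(M)$, so the first bracket is at most a constant multiple of $M C_n$. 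The same identity shows the second bracket (``model shift'') is controlled by $\sup_{z \in \{0,1\}}|\widehat{g}_j(z) - g_j^*(z)|$ times an $O(M)$ factor, so it remains to show that the two ERMs produce uniformly close fitted values.

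For the RES classes $\cF_0$ (bounded constants) and $\cF_1$ (affine functions of $Z$ with coefficients in $[-M/2, M/2]$), the ERM is a constrained least-squares problem on a compact convex set in $\mathbb{R}$ or $\mathbb{R}^2$. Writing the training objective as $\beta \mapsto |D_j^\complement|^{-1}\|y_{D_j^\complement} - A_{D_j^\complement}\beta\|^2$ for the appropriate design matrix $A_{D_j^\complement}$, the two objectives corresponding to $y = \widehat{e}$ and $y = e^*$ share the same Hessian, while their linear parts differ by $A_{D_j^\complement}^\top(\widehat{\mathbf{e}} - \mathbf{e}^*)/|D_j^\complement|$, whose Euclidean norm is $O(C_n)$. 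Under the Bernoulli design the scaled Hessian $A_{D_j^\complement}^\top A_{D_j^\complement}/|D_j^\complement|$ has a strictly positive smallest eigenvalue on a probability-one event for large $n$, so standard Lipschitz sensitivity of strictly convex quadratic programs on compact convex sets yields $\|\widehat{\beta} - \beta^*\|_2 = O(C_n)$, which in turn implies $\sup_{z \in \{0,1\}}|\widehat{g}_j(z) - g_j^*(z)| = O(C_n)$.

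Summing the response-shift and model-shift contributions across the $k$ folds and across $\cM_0^{\res}, \cM_1^{\res}$ then gives $|\widehat{T} - T| \le B C_n$ on a probability-one event with $B$ depending only on $M$ and a lower bound for the smallest eigenvalue of the design Hessian; the identical calculation delivers $|\widehat{T}^{(r)} - T^{(r)}| \le B C_n$. The hardest step, I expect, is the Lipschitz stability of the constrained ERM in the model-shift analysis: a naive argument based only on uniform smallness of the training objective gives a small gap in training risk but not in parameters, and one must invoke strict convexity of the quadratic objective, which in turn rests on the non-degeneracy of the training design that holds almost surely under the Bernoulli experimental design.
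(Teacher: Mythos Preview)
Your approach is correct but more elaborate than the paper's. You decompose each fold's contribution into a ``response shift'' (same fitted model, different response) and a ``model shift'' (same response, different fitted model), and control the latter via Lipschitz stability of the constrained quadratic ERM, which in turn needs a positive-definite Hessian of the training design. The paper instead exploits the fact that for the RES classes the ERM has an explicit closed form that is \emph{linear in the response}: for $\cM_0^{\res}$ the fitted value on fold $j$ is the training mean $\widehat\mu_j=|D_j^\complement|^{-1}\sum_{l\in D_j^\complement}\widehat\eps_l$, and for $\cM_1^{\res}$ it is the within-group mean $\bar\eps_{Z_i}$. Thus the fold residual $r_i(y)=y_i-\widehat f_j(y)$ is linear in $y$, so $|r_i(\widehat\eps)-r_i(\widehat e)|\le 2C_n$ directly, and the difference-of-squares identity gives $|\CV(\widehat\eps)-\CV(\widehat e)|\le 4BC_n$ in one stroke. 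In effect the paper merges your two shifts into a single Lipschitz bound on the residual map, bypassing any separate control of the parameter perturbation and any Hessian eigenvalue condition. Your route is more general (it would survive if the ERM lacked a closed form), while the paper's is shorter and avoids the design non-degeneracy issue you flag at the end; for the specific constant and linear-in-$Z$ classes used in RES, the paper's direct computation is the more economical argument.
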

Based on Lemma~\ref{lem:consistent}, we may choose $\delta_n = (B + 1) C_n$, which guarantees that Equation~\eqref{eq:asymp_diff} holds. On the other hand, we have $\delta_n = o(1)$ since $C_n = o(1)$. From the proof of Lemma~\ref{lem:power}, when $\Delta >0$, the specified constant $c$ is of constant order. Therefore, by the right continuity of the distribution function, Equation~\eqref{eq:asymp_T} holds. When $\Delta = 0$, we get a trivial error bound, which clearly holds for the actual RES. To sum up, we show that the actual RES enjoys the same type II error bound in Step 1. 
% Combining this lemma with the proof of Theorem~\ref{thm:power}, we have that the type II error bound for $\widehat{m}$ is asymptotically same as the bound for $m^*$. Therefore, Theorem~\ref{thm:power} holds for the RES approach with $\widehat{m}$, where $\Delta$ boils down to 
% \begin{equation*}
%     \Delta^{\res} = \inf_{\beta_1} \E(\widehat{\eps}_i - \beta_1)^2 - \inf_{\beta_1, \beta_2} \E (\widehat{\eps}_i - \beta_1 - \beta_2 Z_i)^2.
% \end{equation*}

\begin{proof}[Proof of Lemma~\ref{lem:consistent}]
By definition, we have
\begin{equation*}
    \CV(\widehat{\eps}_i \sim 1) = \frac{1}{k} \sum_{j=1}^k \frac{1}{|D_j|} \sum_{i\in D_j} \left(\widehat{\eps}_i - \widehat{\mu}_j\right)^2,
\end{equation*}
where $\widehat{\mu}_j$ solves the empirical risk minimization on $j$-th training set, i.e., 
\begin{equation*}
    \widehat{\mu}_j = \underset{\mu}{\arg\min} \sum_{i\in D_j^\complement} (\widehat{\eps}_i - \mu)^2 = \frac{1}{|D_j^\complement|} \sum_{i\in D_j^\complement} \widehat{\eps}_i.
\end{equation*}
Therefore, we obtain
\begin{equation*}
    \CV(\widehat{\eps}_i \sim 1) = \frac{1}{k} \sum_{j=1}^k \frac{1}{|D_j|} \sum_{i\in D_j} \left(\widehat{\eps}_i - \frac{1}{|D_j^\complement|} \sum_{l\in D_j^\complement} \widehat{\eps}_l\right)^2.
\end{equation*}
Therefore, we have
\begin{multline*}
    \CV(\widehat{\eps}_i \sim 1) - \CV(\widehat{e}_i \sim 1) = \frac{1}{k} \sum_{j=1}^k \frac{1}{|D_j|} \sum_{i\in D_j} \left[\Bigl(\widehat{\eps}_i - \frac{1}{|D_j^\complement|} \sum_{l\in D_j^\complement} \widehat{\eps}_l\Bigr)^2 - \Bigl(\widehat{e}_i - \frac{1}{|D_j^\complement|} \sum_{l\in D_j^\complement} \widehat{e}_l\Bigr)^2\right]\\
    = \frac{1}{k} \sum_{j=1}^k \frac{1}{|D_j|} \sum_{i\in D_j} \Bigl(\widehat{\eps}_i + \widehat{e}_i - \frac{1}{|D_j^\complement|} \sum_{l\in D_j^\complement} (\widehat{\eps}_l+ \widehat{e}_l) \Bigr) \Bigl(\widehat{\eps}_i - \widehat{e}_i - \frac{1}{|D_j^\complement|} \sum_{l\in D_j^\complement} (\widehat{\eps}_l - \widehat{e}_l) \Bigr).
\end{multline*}
Assumption~\ref{asmp:consistent} implies that with probability one, we have $|\widehat{\eps}_i - \widehat{e}_i| \le C_n$. In addition, Assumption~\ref{asmp:iid_res} suggests that $|\widehat{e}_i|\le M$. Therefore, one can find a positive constant $B$, such that $|\widehat{\eps}_i + \widehat{e}_i| \le B$ for all $i$. As a result, we have
\begin{align*}
    |\CV(\widehat{\eps}_i \sim 1) - \CV(\widehat{e}_i \sim 1)| &\le \frac{1}{k} \sum_{j=1}^k \frac{1}{|D_j|} \sum_{i\in D_j} (2B) (2 C_n) \\
    &\le 4 B C_n.
\end{align*}
We can apply a similar idea to show that 
\begin{align*}
    |\CV(\widehat{\eps}_i \sim 1 + Z_i) - \CV(\widehat{e}_i \sim 1 + Z_i)| &\le 4 B C_n, \\
    |\CV(\widehat{\eps}_i \sim 1 + Z_i^{(r)}) - \CV(\widehat{e}_i \sim 1 + Z_i^{(r)})| &\le  4 B C_n.
\end{align*}
The proof of the lemma is then complete.
\end{proof}

\noindent\underline{Step 3. Simplify $\Delta$.}
For $\Delta^{\res}$, as we show in Step 1, we have
\begin{equation*}
    \inf_{f\in\cF} \E (\widehat{e} - f(Z^{(r)}))^2 = \E ((Z-\pi)h(X) + \eps)^2 = \var(Z) \E h^2(X) + \var(\eps),
\end{equation*}
where the minimum is achieved with $\beta_1 = \beta_2 = 0$.
Similarly, we can show that
\begin{align*}
    \inf_{f\in\cF} \E (\widehat{e} - f(Z))^2 &= \inf_{\beta_1, \beta_2\in[-M/2, M/2]} \E (Z-\pi) h(X) - \beta_1 - \beta_2 Z)^2 + \var(\eps)\\
    &\ge \inf_{\beta_1\in\Real, \beta_2\in \Real} \E (Z-\pi) h(X) - \beta_1 - \beta_2 Z)^2 + \var(\eps)\\
    &\ge \inf_{\beta_2\in \Real} \E ((Z-\pi) (h(X) - \beta_2))^2 + \var(\eps)\\
    &= \inf_{\beta_2\in \Real} \var(Z) \E(h(X) - \beta_2))^2 + \var(\eps)\\
    &\ge \var(Z) \var(h(X)) + \var(\eps).
\end{align*}
Note that the lower bound is obtained with $\beta_1 = 0$ and $\beta_2 = \E h(X)\in[-M/2, M/2]$. Therefore, we have
\begin{equation*}
\Delta^{\res} = \var(Z) E h^2(X) - \var(Z) \var(h(X)) = \var(Z)[\E h(X)]^2 = \pi(1-\pi) [\E h(X)]^2.
\end{equation*}

Following the same line as in the analysis for $\Delta^{\res}$, we can show that under Assumption~\ref{asmp:func_class_inf}
\begin{equation*}
    \Delta^{\ml} = \pi(1-\pi) \E h^2(X),
\end{equation*}
which completes the proof.
\end{proof}

Next, we prove Proposition~\ref{prop:relative_efficiency}. 

\begin{proof}[Proof of Proposition~\ref{prop:relative_efficiency}]
First we analyze the deterministic test $\phi^{\ml}$ with respect to the ML-FRT. Recall that 
\begin{align*}
\phi^{\ml} = \mathbb{I}\{t_n(Y, Z, \bX) > q^{\ml}_{n,\alpha}\}.
% &= \P(t_n(Y, Z, \bX) - \Delta^{\ml}< q^{\ml}_{n,\alpha} - \Delta^{\ml}).
\end{align*}
Here, $q_{n, \alpha}^{\ml}$ is the $1-\alpha$ quantile of $t_n(Y,Z,\bX)$ under $H_0^{\glob}: h = 0$. 
The proof consists of two steps. 
\begin{enumerate}
    \item Show that the critical value $q_{n,\alpha}^{\ml} = o(1)$. 
    \item Use the lower bound in Assumption~\ref{asmp:large_deviation} to control the type II error $\P(\phi^{ml} = 0)$. 
\end{enumerate}

\noindent\underline{Step 1. Analyze $q_{n,\alpha}^{\ml}$. }
Given $H_0^{\glob}:h = g = 0$, we have $Y_i = \mu + b(X_i)+\eps_i$. Then, under Assumptions~\ref{asmp:iid} and \ref{asmp:func_class}, we apply Step 1 in the proof of Lemma~\ref{lem:power} to obtain
\begin{align*}
    \left|t_n(\YY, \ZZ, \XX) - \Delta_0\right| &\le 4\max\{\cR_{n-n/k}(\cF_{0};\P), \cR_{n-n/k}(\cF;\P)\} + 4t, \\
    \Delta_0 &\coloneqq \E l(f_{0}^*) - \E l(f^*),
\end{align*}
with probability at least $1 - 4k \exp\left(-\frac{2 n t^2}{k M_0^2}\right) - 2\exp\left(-\frac{(k-1)nt^2}{2kM_0^2}\right)$. Since $\cR_{n-n/k}(\cF_{0};\P), \cR_{n-n/k}(\cF;\P)$ are $o(1)$, the above result implies that $t_n(Y,Z,\bX)$ converges in probability to $\Delta_0$. Given $Y_i = \mu + b(X_i)+\eps_i$, we write
\begin{align*}
    \E l(f_{0}^*) &= \inf_{f_0\in\cF_0} \E(\mu + b(X)+\eps - f_0(X))^2 = \inf_{f_0\in\cF_0} \E(\mu + b(X) - f_0(X))^2 + \E \eps^2,\\
    \E l(f^*) &= \inf_{f\in\cF} \E (\mu + b(X)+\eps - f(X,Z))^2 = \inf_{f\in\cF} \E (\mu + b(X) - f(X,Z))^2 + \E \eps^2.
\end{align*}
Then, one can verify that the infimum is obtained at $\mu+b(x)$ for $\E l(f_{0}^*)$ and $\E l(f^*)$, as defined in Proposition~\ref{prop:relative_efficiency}. This implies $\Delta^0 = 0$, and hence $t_n(Y,Z,\bX)$ converges in probability to zero. Therefore, the $1-\alpha$ quantile of $t_n(Y,Z,\bX)$, i.e., $q_{n,\alpha}^{\ml}$, also converges to zero. 

\noindent\underline{Step 2. Analyze type II error.}
Under the alternative, we analyze the limiting behavior of 
\begin{equation*}
\P(\phi^{\ml} = 0) = \P(t_n(Y, Z, \bX) < q^{\ml}_{n,\alpha})= \P(t_n(Y, Z, \bX) - \Delta^{\ml}< q^{\ml}_{n,\alpha} - \Delta^{\ml}).
\end{equation*}
By the assumption that $|\E h(X)| > 0$, we have $\E h^2(X) \ge |\E h(X)|^2 > 0$, and hence $\Delta^{\ml} > 0$.
By Step 1, for any $\eps>0$, we have $|q_{n, \alpha}^{\ml}|<\eps$ for $n$ large enough. Since $-\eps < q_{n, \alpha}^{\ml}$, we have
\begin{align*}
    \underbrace{\P(t_n(Y, Z, \bX) - \Delta^{\ml}< - \eps - \Delta^{\ml})}_\text{(I)} 
    &\le \P(t_n(Y, Z, \bX) - \Delta^{\ml}< q^{\ml}_{n,\alpha} - \Delta^{\ml}).
    % &\le \underbrace{\P(t_n(Y, Z, \bX) - \Delta^{\ml}< \eps - \Delta^{\ml})}_\text{(II)}.
\end{align*}
By Assumption~\ref{asmp:large_deviation}, we have
\begin{align*}
    &\lim \frac{1}{n}\log \text{(I)} \ge - I(\eps + \Delta^{\ml}).
    % &\lim \frac{1}{n}\log \text{(II)} = - I(\Delta^{\ml} - \eps).
\end{align*}
Therefore, we have
\begin{align*}
    \lim\inf \frac{1}{n}\log \P(t_n(Y, Z, \bX) - \Delta^{\ml}< q^{\ml}_{n,\alpha} - \Delta^{\ml}) \ge - I(\eps + \Delta^{\ml}).
    % &\le \lim\sup \frac{1}{n}\log\P(t_n(Y, Z, \bX) - \Delta^{\ml}< q^{\ml}_{n,\alpha} - \Delta^{\ml}) \le -I(\Delta^{\ml} - \eps). 
\end{align*}
Since the inequality above holds for any $\eps > 0$ and $I(x)$ is continuous at point $\Delta^{\ml}$, we have
\begin{equation}\label{eq:large_dev1}
    \lim\inf \frac{1}{n}\log\P(t_n(Y, Z, \bX) - \Delta^{\ml}< q^{\ml}_{n,\alpha} - \Delta^{\ml}) \ge - I(\Delta^{\ml}).
\end{equation}
% $t_n(Y, Z, \bX)$ concentrates around $\Delta^{\ml}$, and therefore $q^{\ml}_{n, \alpha} = o(1)$. Noticing that $\Delta^{\ml} >0 $ under the alternative $h\neq 0$, we may define $x = - q^{\ml}_{n,\alpha} + \Delta^{\ml} \asymp \Delta^{\ml}$. By the continuity of the CDF, 
% \begin{equation*}
%     \P(t_n(Y, Z, \bX) - \Delta^{\ml}< -x) \asymp \P(t_n(Y, Z, \bX) - \Delta^{\ml}< - \Delta^{\ml}).
% \end{equation*}
% We apply Assumption~\ref{asmp:large_deviation} to right hand side above to obtain
% \begin{equation}\label{eq:large_dev1}
% \lim_{n\to\infty} \frac{1}{n} \log \P(\phi^{\ml} = 0) = - I(\Delta^{\ml}).
% \end{equation}

For the $\phi^{\res}$, we can apply similar arguments as Steps 1 and 2 above. First, under the null hypothesis and Assumption~\ref{asmp:consistent}, $\widehat{m}$ is asymptotically equivalent to $m^*(x) = \mu + b(x)$. Additionally, for $t^{\res}(Y,Z,\bX;m^*)$, we can apply the same analysis as in the proof of Lemma~\ref{lem:power} to show that $t^{\res}(Y,Z,\bX;m^*)$ concentrates around zero. Therefore, $q_{n,\alpha}^{\res}$ converges to zero. 

Second, we apply Assumption~\ref{asmp:large_deviation} as in Step 2 above to obtain 
\begin{equation}\label{eq:large_dev2}
    \underset{n\to\infty}{\lim\sup} \frac{1}{n} \log \P(\phi^{\res} = 0) \le - I(\Delta^{\res})~.
\end{equation}
Lastly, taking difference of Equations~\eqref{eq:large_dev1} and \eqref{eq:large_dev2} gives the final result.
\end{proof}

\subsection{Upper Bounds on Rademacher Complexity}\label{sec:rademacher}
In Theorem~\ref{thm:power}, we assume that $\cR_{n-n/k}(\cF_0;\P)$, $\cR_{n - n/k}(\cF;\P)$, and $\cR_{n - n/k}(\cF;\P^{(r)})$ are $o(1)$. Here we justify this assumption by deriving concrete upper bounds for Rademacher complexities.

Our upper bound relies on a notion of pseudo-dimension defined below. 
\begin{definition}
Let $\mathcal{G}$ be a collection of real-valued functions defined on a set $\mathcal{Z}$. Given a subset $S:=\left\{z_1, \ldots, z_m\right\} \subset \mathcal{Z}$, we say $S$ is pseudo-shattered by $\mathcal{G}$ if there are $r_1, \ldots, r_m \in \mathbb{R}$ such that for each $b \in\{0,1\}^m$ we can find $g_b \in \mathcal{G}$ satisfying $\operatorname{sign}\left(g_b(z_i)-r_i\right)=b_i$ for all $i \in[m]$. We define the pseudo-dimension of $\mathcal{G}$, denoted as $\mathrm{Pdim}(\mathcal{G})$, as the maximum cardinality of a subset $S \subset \mathcal{Z}$ that is pseudo-shattered by $\mathcal{G}$.
\end{definition}
Similar to the Rademacher complexity, pseudo-dimension serves as a complexity measure for general function classes. However, we highlight that the pseudo-dimension does not depend on sample size $n$, and therefore can be viewed as a ``constant'' in our analysis. 

\begin{proposition}\label{prop:rademacher}
Under Assumptions~\ref{asmp:iid} and \ref{asmp:func_class}, we have
\begin{align*}
    \cR_{n}(\cF;\P), \cR_{n}(\cF;\P^{(r)}) = O\left(\sqrt{\frac{\log n \times \mathrm{Pdim}(\cF)}{n}}\right),\quad \cR_{n}(\cF_0;\P) = O\left(\sqrt{\frac{\log n \times \mathrm{Pdim}(\cF_0)}{n}}\right)~.
\end{align*}
\end{proposition}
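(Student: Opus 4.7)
The plan is to (i) reduce the Rademacher complexity of the squared-loss class to the classical Rademacher complexity of $\cF$ via a contraction argument, and (ii) convert the latter to a pseudo-dimension bound through covering numbers and a discretization estimate.

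\textbf{Step (i): contraction.} Under Assumptions~\ref{asmp:iid} and Condition 1 of Assumption~\ref{asmp:func_class}, the residual $u = Y - f(X,Z)$ lies in $[-2M, 2M]$, on which $\phi(u) = u^2$ is $(4M)$-Lipschitz. Applying the Ledoux--Talagrand contraction principle (symmetrized to accommodate the outer absolute value in \eqref{eq:rademacher_cplx}), followed by the triangle inequality on $\sum_i \sigma_i (Y_i - f(X_i,Z_i))$, yields
$$
\cR_n(\cF;\P) \;\lesssim\; \frac{M}{\sqrt{n}} + M \cdot \widetilde{\cR}_n(\cF),
$$
where $\widetilde{\cR}_n(\cF) := n^{-1}\, \E\,\E_\sigma \sup_{f\in\cF} \big|\sum_{i=1}^n \sigma_i f(X_i, Z_i)\big|$ is the ordinary Rademacher complexity of $\cF$. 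The $M/\sqrt{n}$ term comes from the $Y$-contribution $\E_\sigma|\sum \sigma_i Y_i| \le M\sqrt{n}$ and is absorbed into the final rate.

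\textbf{Step (ii): pseudo-dimension to Rademacher complexity.} Let $d = \mathrm{Pdim}(\cF)$. By Haussler's theorem, for every probability measure $Q$ on $\cX \times \{0,1\}$ and every $\varepsilon \in (0, M]$,
$$
N(\varepsilon, \cF, L_2(Q)) \le e(d+1)\,(2eM/\varepsilon)^{d}.
$$
I would then bound $\widetilde{\cR}_n(\cF)$ by a one-step discretization at resolution $\varepsilon = M/\sqrt{n}$: Massart's finite-class lemma applied to an $\varepsilon$-cover of cardinality $(CM\sqrt{n})^d$ contributes at most $C'M\sqrt{d\log n/n}$, while the residual discretization error is $O(M/\sqrt{n})$. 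Combining with Step (i) gives $\cR_n(\cF;\P) = O(\sqrt{\log n \cdot \mathrm{Pdim}(\cF)/n})$. The argument for $\cR_n(\cF;\P^{(r)})$ is identical, because Haussler's covering bound is uniform in the underlying measure and $Z^{(r)}\in\{0,1\}$ preserves the relevant function-class structure; and the argument for $\cR_n(\cF_0;\P)$ is the same, with $\cF_0$ and $\mathrm{Pdim}(\cF_0)$ in place of $\cF$ and $\mathrm{Pdim}(\cF)$.

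The main obstacle I anticipate is bookkeeping of constants around the outer absolute value in \eqref{eq:rademacher_cplx}: the textbook Ledoux--Talagrand statement handles $\E_\sigma \sup \sum \sigma_i \phi(u_i)$ rather than $\E_\sigma \sup |\sum \sigma_i \phi(u_i)|$, so a routine symmetrization ($\E_\sigma \sup |\cdot| \le 2\,\E_\sigma \sup(\cdot)$ plus a centering step) is required; this affects constants but not the rate. A secondary choice is between a sharper chaining integral (which, under Haussler's bound, would remove the $\log n$ and give $O(\sqrt{d/n})$) and the simpler Massart discretization above; I would use the latter since it matches the stated rate and already suffices for the $o(1)$ condition invoked in Condition 3 of Assumption~\ref{asmp:func_class}.
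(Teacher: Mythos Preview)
Your proposal is correct and arrives at the stated rate, but it follows a different route from the paper. You proceed modularly: first reduce the squared-loss Rademacher complexity to the ordinary Rademacher complexity $\widetilde{\cR}_n(\cF)$ via Ledoux--Talagrand contraction (exploiting that $u\mapsto u^2$ is $4M$-Lipschitz on $[-2M,2M]$), then bound $\widetilde{\cR}_n(\cF)$ using Haussler's $L_2$ covering-number bound and Massart's finite-class lemma at resolution $\varepsilon\asymp n^{-1/2}$. The paper instead works directly with the loss-class process $X_f = n^{-1}|\sum_i \sigma_i(Y_i-f(X_i,Z_i))^2|$: it bounds $|X_f-X_{f'}|$ by the Lipschitz constant of the squared loss times a sup-norm distance $\max_i|f(X_i,Z_i)-f'(X_i,Z_i)|$, discretizes $\cF$ with an $L_\infty$ $\varepsilon$-net, applies a sub-Gaussian maximal inequality over the finite net, and then invokes Theorem~12.2 of Anthony--Bartlett to convert the $L_\infty$ uniform covering number to a pseudo-dimension bound, choosing $\varepsilon=n^{-1/2}$. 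Your approach has the advantage of using more widely known tools (contraction plus Haussler's $L_2$ bound), cleanly isolates the loss-to-function reduction, and, as you note, could be sharpened to $O(\sqrt{\mathrm{Pdim}(\cF)/n})$ via Dudley chaining; the paper's approach avoids the contraction step and the absolute-value bookkeeping you flag, at the cost of relying on the less standard $L_\infty$ covering bound. Both land on the same rate, and your handling of the outer absolute value (a factor-two symmetrization) is the correct fix.
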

This result suggests that the decay of Rademacher complexities depend on the pseudo-dimension of the function classes. In particular, when $\cF_1$ is parametrized by neural network classes, tight pseudo-dimension bounds in \cite{anthony1999neural,bartlett2019} can be plugged in Proposition~\ref{prop:rademacher} for concrete bounds that diminishes to zero. 

\begin{proof}
Here we give the proof for $\cR_{n}(\cF;\P)$, as the proof for other complexities is almost identical. 

Recall that
\begin{equation*}
    \cR_n(\cF;\P) \coloneqq \underset{(Y, \mathbf{X}, Z)}{\E} \underset{\sigma}{\E} \Biggl(\sup_{f\in\cF} \underbrace{\left| \frac{1}{n} \sum_{i=1}^n \sigma_i (Y_i - f(X_i, Z_i))^2\right|}_\text{$\eqqcolon X_f$}\Biggr).
\end{equation*}
First, observe that Assumptions~\ref{asmp:iid} and \ref{asmp:func_class} implies $l(f) \le 4M^2 \eqqcolon M_0$, and hence
\begin{align*}
    |l(Y, X, Z; f) - l(Y, X, Z;f')| &\le |(Y - f(X, Z))^2 - (Y - f'(X,Z))^2 | \\
    &\le (|Y - f(X, Z)| + |Y - f'(X,Z)| )|f(X, Z) - f'(X,Z)|\\
    &\le 2\sqrt{M_0} |f(X, Z) - f'(X,Z)|.
\end{align*}
Therefore, for $X_f$ defined above, we have
\begin{align*}
    |X_f - X_{f'}| &\le \frac{1}{n} \sum_{i=1}^n |l(Y_i, X_i, Z_i; f) - l(Y_i, X_i, Z_i;f')| \\
    &\le \frac{1}{n} \sum_{i=1}^n 2\sqrt{M_0} |f(X_i, Z_i) - f'(X_i,Z_i)|\\
    &\le  2\sqrt{M_0} \max_{i} |f(X_i, Z_i) - f'(X_i,Z_i)| \eqqcolon \rho(f, f').\\
\end{align*}

For $\epsilon>0$, let $\calN_{\infty}\left(\epsilon, \cF,\{X_i, Z_i\}_{i=1}^n\right)$ be the minimal $\epsilon$-covering net of $\cF_1$ under the pseudometric $d$ induced by $\{X_i, Z_i\}_{i=1}^n$:
$$
d\left(f, f^{\prime}\right):=\max _{i\in[n]}\left|f\left(X_i, Z_i\right)-f'\left(X_i, Z_i\right)\right|.
$$
In other words, for any $f \in \cF$, we can find $f^{\prime} \in \calN_{\infty}\left(\epsilon, \cF,\{X_i, Z_i\}_{i=1}^n\right)$ such that $d\left(f, f^{\prime}\right) \leq \epsilon$. To simplify the notation, we use $\calN_{\infty} = \calN_{\infty}\left(\epsilon, \cF,\{X_i, Z_i\}_{i=1}^n\right)$. Additionally, we define the uniform covering number as 
\begin{equation*}
    N_{\infty, n}:= \sup \left\{|N_{\infty}\left(\epsilon, \mathcal{F},\left\{X_i, Z_i\right\}_{i=1}^n\right)|: (X_1,Z_1), \ldots, (X_n, Z_n) \in \cX\times \cZ \right\}.
\end{equation*}

Given $\epsilon>0$, for any $f \in$ $\cF_1$, we can find $f'$ such that
$$
\rho(f, f') \le \eta \epsilon,
$$
where $\eta=2 \sqrt{M_0}$. As a result, one can easily check
\begin{align*}
    \sup _{f\in\cF} X_{f} & \le \sup _{\rho\left(f, f'\right) \le \eta \epsilon}\left|X_{f}-X_{f'}\right|+\sup _{f \in \calN_{\infty}} X_{f} \\
    & \le \eta \epsilon+\sup _{f \in \calN_{\infty}} X_{f}.
\end{align*}
Note that $X_f$ is the absolute value of a sub-Gaussian random variable (with respect to $\sigma$) with a sub-Gaussian parameter $v = M_0^2/4n$ \citep{boucheron2013concentration}. Hence, the maximal inequality \citep[Section 2.5]{boucheron2013concentration} yields
\begin{equation*}
    \E_{\sigma}\sup _{f\in\cF} X_{f} \le \eta \epsilon + \sqrt{\frac{M_0^2\log(2|\calN_{\infty}|)}{2n}} \le \eta \epsilon + C\sqrt{\frac{\log|\calN_{\infty}|}{n}}
\end{equation*}
for some large enough constant $C>0$. Then we have
\begin{equation*}
    \underset{(Y, \mathbf{X}, Z)}{\E} \underset{\sigma}{\E} \sup _{f\in\cF} X_{f} \le \eta \epsilon + C\sqrt{\frac{\log\calN_{\infty, n}}{n}},
\end{equation*}
by the definition of the uniform covering number.

Lastly, we reduce the uniform covering number to the pseudo-dimension based on Theorem 12.2 of \cite{anthony1999neural}
\begin{equation*}
    \calN_{\infty, n}\leq\left(\frac{2 e n M}{\epsilon \cdot \operatorname{Pdim}\left(\cF\right)}\right)^{\mathrm{Pdim}\left(\cF\right)}.
\end{equation*}
Hence, 
\begin{align*}
\underset{(Y, \mathbf{X}, Z)}{\E} \underset{\sigma}{\E} \sup _{f\in\cF} X_{f} &= O\left(\eta \epsilon + \sqrt{\frac{\mathrm{Pdim}(\cF)\log(\frac{n}{\epsilon})}{n}}\right)\\
&= O\left(\sqrt{\frac{\mathrm{Pdim}(\cF)\log(n)}{n}}\right),
\end{align*}
where the last line comes from choosing $\epsilon = 1/\sqrt{n}$.
\end{proof}

\section{Validity of $\mathrm{pval}^{\mathrm{het},\gamma}_n$}\label{het_validity}
\begin{theorem}\label{thm:het}
Suppose that $H_0^\het$ holds true. Then,
\begin{equation*}
    \P\big(\mathrm{pval}^{\mathrm{het},\gamma}_n \leq \alpha) \le \alpha,~\text{for any $\alpha \in [0,1]$ and any $n>0$},
\end{equation*}
where the randomness in $\P$ is with respect to the 
experimental design $\Pn$.
\end{theorem}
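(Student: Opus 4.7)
The plan is to reduce the claim to Theorem~\ref{thm:valid} at the (unknown) true value of $\tau$ and then pay the coverage price of $\mathrm{CI}_\gamma$ via the Berger--Boos device.

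First, I would fix notation. Under $H_0^{\het}$ there exists a fixed (population) constant $\tau^\star \in \R$ such that $Y_i - \tau^\star Z_i = \mu + b(X_i) + \eps_i$; in particular, the transformed outcomes $\{Y_i^{\tau^\star}\}_{i=1}^n$, with $Y_i^{\tau_0}:=Y_i-\tau_0 Z_i$, satisfy $Y_i^{\tau^\star} \indep Z_i \mid \XX$. The function $\mathrm{pval}(\tau_0)$ is, by construction, the output of Procedure~\ref{ml-frt} applied to the triple $(Y^{\tau_0},Z,\XX)$. Consequently, at $\tau_0=\tau^\star$ the data fed into Procedure~\ref{ml-frt} satisfy $H_0^{\glob}$, and Theorem~\ref{thm:valid} yields the pointwise validity
\begin{equation*}
\P\bigl(\mathrm{pval}(\tau^\star)\le u\bigr)\le u,\qquad u\in[0,1].
\end{equation*}
By the assumed coverage of $\mathrm{CI}_\gamma$, we also have $\P(\tau^\star\notin\mathrm{CI}_\gamma)\le \gamma$.

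Next, I would argue the Berger--Boos inequality. On the event $\{\tau^\star\in\mathrm{CI}_\gamma\}$, the maximum in the definition of $\mathrm{pval}^{\het,\gamma}_n$ is taken over a set that contains $\tau^\star$, so
\begin{equation*}
\mathrm{pval}^{\het,\gamma}_n \;=\; \sup_{\tau_0\in\mathrm{CI}_\gamma}\mathrm{pval}(\tau_0)+\gamma \;\ge\; \mathrm{pval}(\tau^\star)+\gamma.
\end{equation*}
Hence, for any $\alpha\in[0,1]$,
\begin{align*}
\P\bigl(\mathrm{pval}^{\het,\gamma}_n \le \alpha\bigr)
&\le \P\bigl(\{\mathrm{pval}^{\het,\gamma}_n \le \alpha\}\cap\{\tau^\star\in\mathrm{CI}_\gamma\}\bigr)+\P\bigl(\tau^\star\notin\mathrm{CI}_\gamma\bigr)\\
&\le \P\bigl(\mathrm{pval}(\tau^\star)+\gamma\le \alpha\bigr)+\gamma \\
&\le \max\{\alpha-\gamma,0\}+\gamma \;\le\;\alpha,
\end{align*}
where the penultimate inequality uses the pointwise validity of $\mathrm{pval}(\tau^\star)$ (trivially $0$ when $\alpha<\gamma$).

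The only subtlety, and the main thing to check carefully, is that Theorem~\ref{thm:valid} genuinely applies to the transformed data at $\tau_0=\tau^\star$. This requires that (i) the experimental design $\Pn$ and the randomization distribution used inside Procedure~\ref{ml-frt} are unchanged by the deterministic shift $Y\mapsto Y-\tau^\star Z$ (they are, since only $Y$ is shifted), and (ii) the conditional independence $Y^{\tau^\star}\indep Z\mid\XX$ survives the superpopulation setup of model~\eqref{eq:our_model} under $H_0^{\het}$ (it does, by independence of the $\eps_i$'s from $Z$ given $\XX$). The remaining ingredients, coverage of $\mathrm{CI}_\gamma$ and the union bound, are standard, so no further delicate step is anticipated.
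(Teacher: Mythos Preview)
Your proposal is correct and follows essentially the same Berger--Boos argument as the paper: decompose according to whether $\tau^\star\in\mathrm{CI}_\gamma$, bound the supremum from below by $\mathrm{pval}(\tau^\star)$ on that event, and invoke the validity of $\mathrm{pval}(\tau^\star)$ from Theorem~\ref{thm:valid}. The only cosmetic difference is that the paper splits into the cases $\gamma>\alpha$ and $\gamma\le\alpha$, whereas you handle both at once via $\max\{\alpha-\gamma,0\}$; your explicit verification that Theorem~\ref{thm:valid} applies to the shifted data is a helpful addition but not a departure in method.
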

\begin{proof}
The proof closely follows the proof of Lemma in \citet[Section 2]{berger1994pvalues}. Let $\mathrm{CI}_{\gamma}$ satisfy $\P(\tau_0 \in \mathrm{CI}_\gamma) \geq 1-\gamma$ and $\tau^*$ denote the true constant treatment effect. If $\gamma > \alpha$, the result holds trivially. If $\gamma \leq \alpha$, we have
\begin{align*}
    \P\big(\mathrm{pval}^{\mathrm{het},\gamma}_n \leq \alpha\big) &= \P\big(\mathrm{pval}^{\mathrm{het},\gamma}_n \leq \alpha, \tau^* \in \mathrm{CI}_\gamma\big) + \P\big(\mathrm{pval}^{\mathrm{het},\gamma}_n \leq \alpha, \tau^* \notin \mathrm{CI}_\gamma\big) \\
    &\leq \P\big(\mathrm{pval}^{\mathrm{het},\gamma}_n \leq \alpha, \tau^* \in \mathrm{CI}_\gamma\big) + \P\big(\tau^* \notin \mathrm{CI}_\gamma\big) \\
    &\leq \P\big(\mathrm{pval}(\tau^*) + \gamma \leq \alpha\big) + \gamma \\
    &\leq \alpha - \gamma + \gamma = \alpha,
\end{align*}
where the second inequality comes from $\sup_{\tau_0 \in \mathrm{CI}_\gamma}\mathrm{pval}(\tau_0) \geq \mathrm{pval}(\tau^*)$ when $\tau^* \in \mathrm{CI}_\gamma$.
\end{proof}

\section{Sample Size Determination Example}\label{sec:sample_size_example}
We illustrate the sample size calculation in Section~\ref{sec:sample_size} using a simulation study.
Consider the data generating process in Section \ref{sec:num_validation}, which is stated again below for convenience. We set $n = 100$, $p = 2$, and $X_i \stackrel{iid}{\sim} \calN(0, \Sigma)$, where $\Sigma$ is a randomly generated correlation matrix based on the R package \texttt{randcorr}. We specify model \eqref{eq:our_model} by setting $b(x) = 0.1 x^\top\beta$, random coefficients $\beta\sim\mathrm{U}([1, 5]^p)$, $g = 0$, 
and
\begin{equation*}
h(x) = \begin{cases}
    \tau + \tau \min \{\frac{2}{x_1}, 10\} &~\text{if}~x_1>0 \\
    \tau + \tau \max \{\frac{2}{x_1}, -10\} &~\text{if}~x_1< 0 \\
    \tau  &~\text{otherwise},
\end{cases},
\end{equation*}
where $\epsilon_i \sim\calN(0, 0.1^2)$. The experimental design is an i.i.d. Bernoulli design with treatment probability 0.5.

Suppose we want to determine the necessary sample size under the alternative $\tau = 1$ to achieve 80\% power. Using the simulated data and the procedure in Section~\ref{sec:sample_size} based on random forests, we obtain that $\widehat{L}=4.98$, $k = 10$, and $\widehat{M}_0 = 9.98$. Then, omitting the constant $R$ as suggested, we may solve for $n$ in the Equation~\eqref{eq:sample_size} to obtain $n \approx 7,770$. 

\end{document}